\newcommand*{\sublabel}{%
\let\subcaption@ORI@label=\@tufte@orig@label%
\subcaption@label}
\theoremstyle{nonumberplain}
	\theoremstyle{nonumberplain}
	\newtheorem{proof}{PROOF}
	\newtheorem{justification}{Justification}
\begin{document}

\marginnote[48\baselineskip]{%
\noindent%
\emph{All authors}\\ \noindent 
Purdue University Computer Science
\texttt{\{dshur,huan1754,dgleich\}@purdue.edu}
\bigskip 

\noindent Both Shur and Huang contributed equally. Research supported in part by
NSF CCF-1909528, IIS-2007481, NSF Center for Science of Information STC, CCF-0939370.
}

\title{A flexible PageRank-based graph embedding framework closely related to spectral eigenvector embeddings}

\author{Disha Shur $\cdot$ Yufan Huang $\cdot$ David F. Gleich}

\maketitle

\begin{abstract}
We study a simple embedding technique based on a matrix of personalized PageRank vectors seeded on a random set of nodes. We show that the embedding produced by the element-wise logarithm of this matrix (1) are related to the spectral embedding for a class of graphs where spectral embeddings are significant, and hence useful representation of the data, (2) can be done for the entire network or a smaller part of it, which enables precise local representation, and (3) uses a relatively small number of PageRank vectors compared to the size of the networks. Most importantly, the general nature of this embedding strategy opens up many emerging applications, where eigenvector and spectral techniques may not be well established, to the PageRank-based relatives. For instance, similar techniques can be used on PageRank vectors from hypergraphs to get ``spectral-like'' embeddings.
\end{abstract}

\section{Introduction}\label{sec1}

The eigenvectors of the graph Laplacian are among the most widely used algorithmic measures of a graph.
They are used to find cuts and clusters in a variety of settings~\cite{MalikShi,Chung-1992-book,Pothen-1990-partitioning}. 
They give a signal basis for a graph~\cite{hammond,gwav}. 
And one of their original uses was to draw informative pictures of graphs in a low dimensional space~\cite{hall,kohen}.  These are all related to the idea of embedding the 
graph into a low dimensional space and recent uses have closely studied this embedding 
framework. 

Likewise, PageRank is itself a widely used algorithmic measure on a graph~\cite{origpr}. 
The uses are extremely diverse~\cite{ppr_basics}. Relationships between 
PageRank and spectral clustering are also known~\cite{andersen2006,mahoney2012,Gleich2014}.
These exist because both techniques can be related to random walks, and seeded PageRank is a 
localized type of random walk, or random walk with restart~\cite{tong}. 

In this manuscript, we study a particular type of relationship 
between a matrix of seeded PageRank vectors and the eigenvectors of
the Laplacian matrix. Our log-PageRank embedding uses the singular
vectors of the elementwise log of a random collection of seeded PageRank vectors. 
An example is in Figure~\ref{fig:intro-example}. This shows that log-PageRank 
embeddings resemble spectral clustering.


\begin{fullwidthfigure}[t]
    \centering
    \begin{subfigure}{0.49\linewidth}
        \centering
        \includegraphics[width=\linewidth]{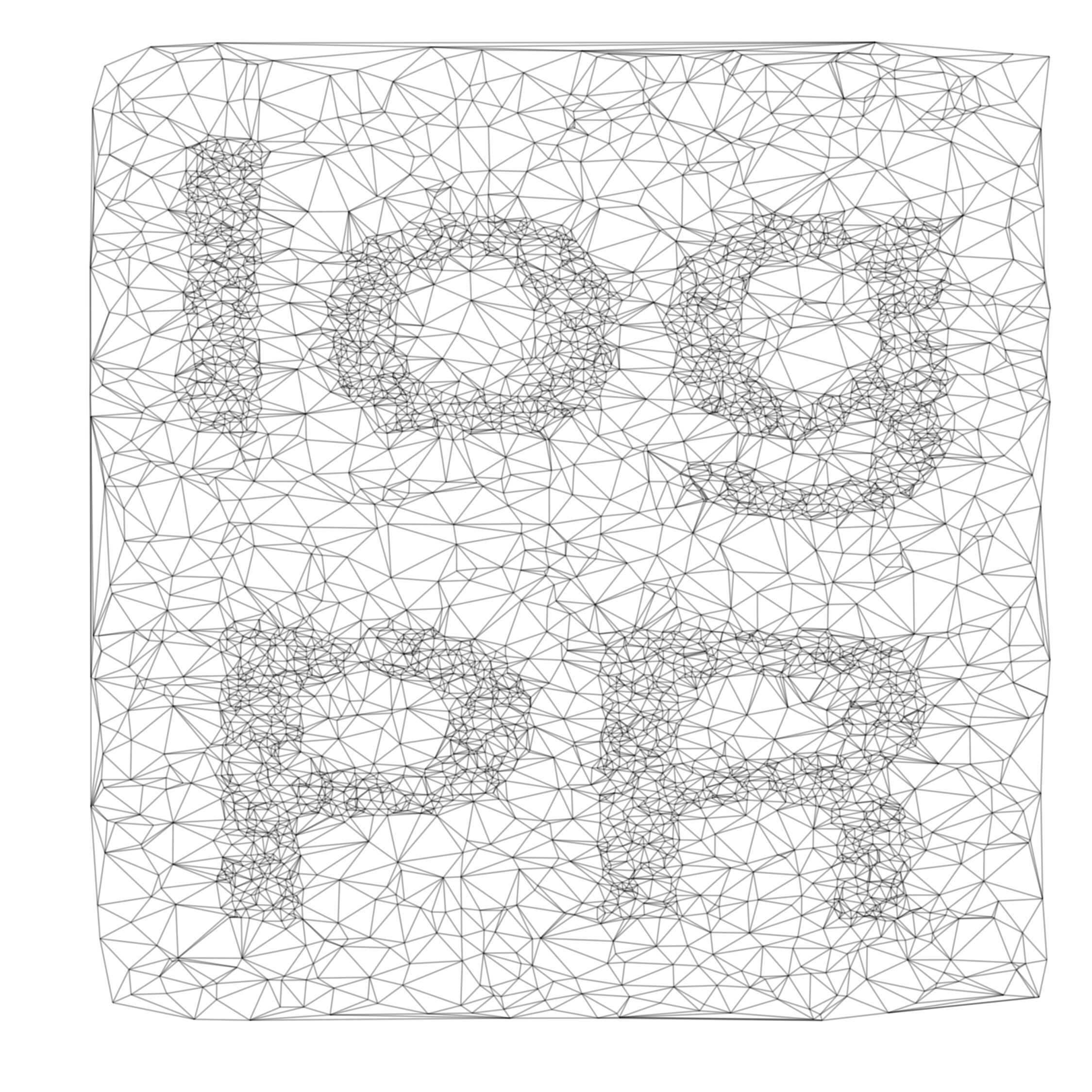}
        \caption{planar graph with 5000 nodes and 14962 edges \emph{\&} the words ``log PR''}
        \label{fig:1a}
    \end{subfigure}
    \hfill
    \begin{subfigure}{0.49\linewidth}
        \centering
        \includegraphics[width=\linewidth]{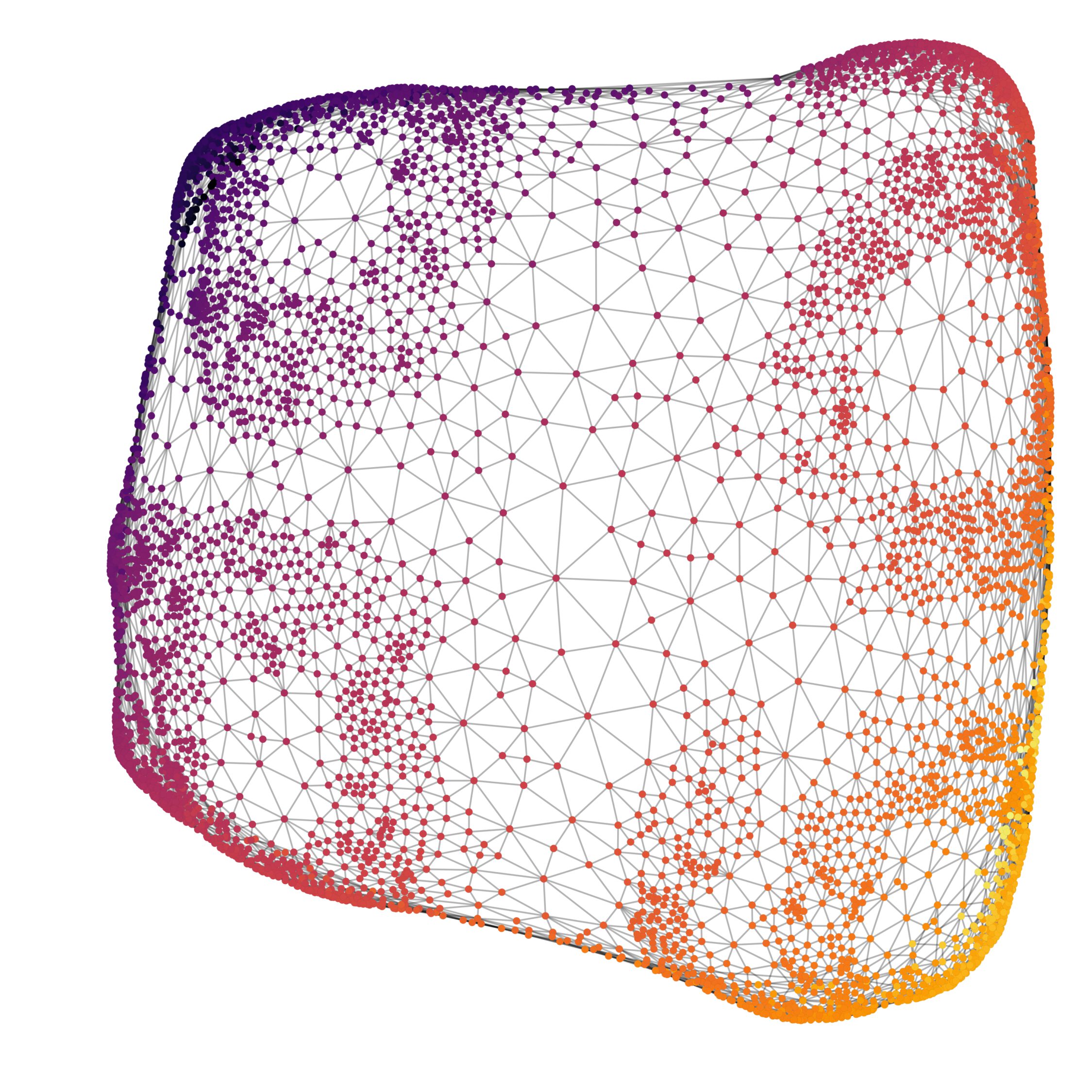}
        \caption{spectral embedding of the graph from the Laplacian eigenvectors }
        \label{fig:1b}
    \end{subfigure}

\begin{subfigure}{0.33\linewidth}
        \centering
        \includegraphics[width=\linewidth]{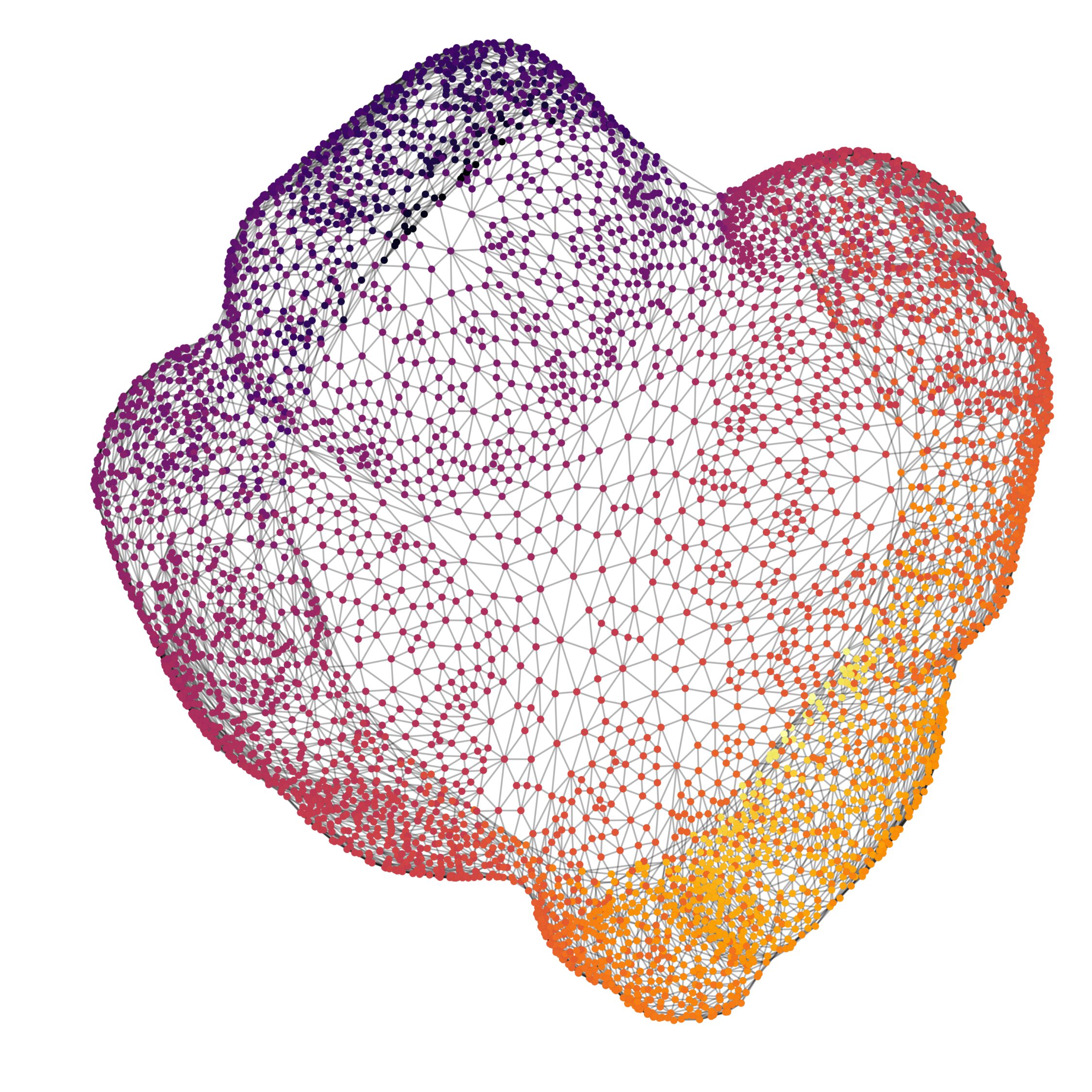}
        \parbox{0.95\linewidth}{\caption{log-PageRank embedding for $\alpha=0.85$}}
        \label{fig:1c}
    \end{subfigure}%
    \hfill%
    \begin{subfigure}{0.33\linewidth}
        \centering
        \includegraphics[width=\linewidth]{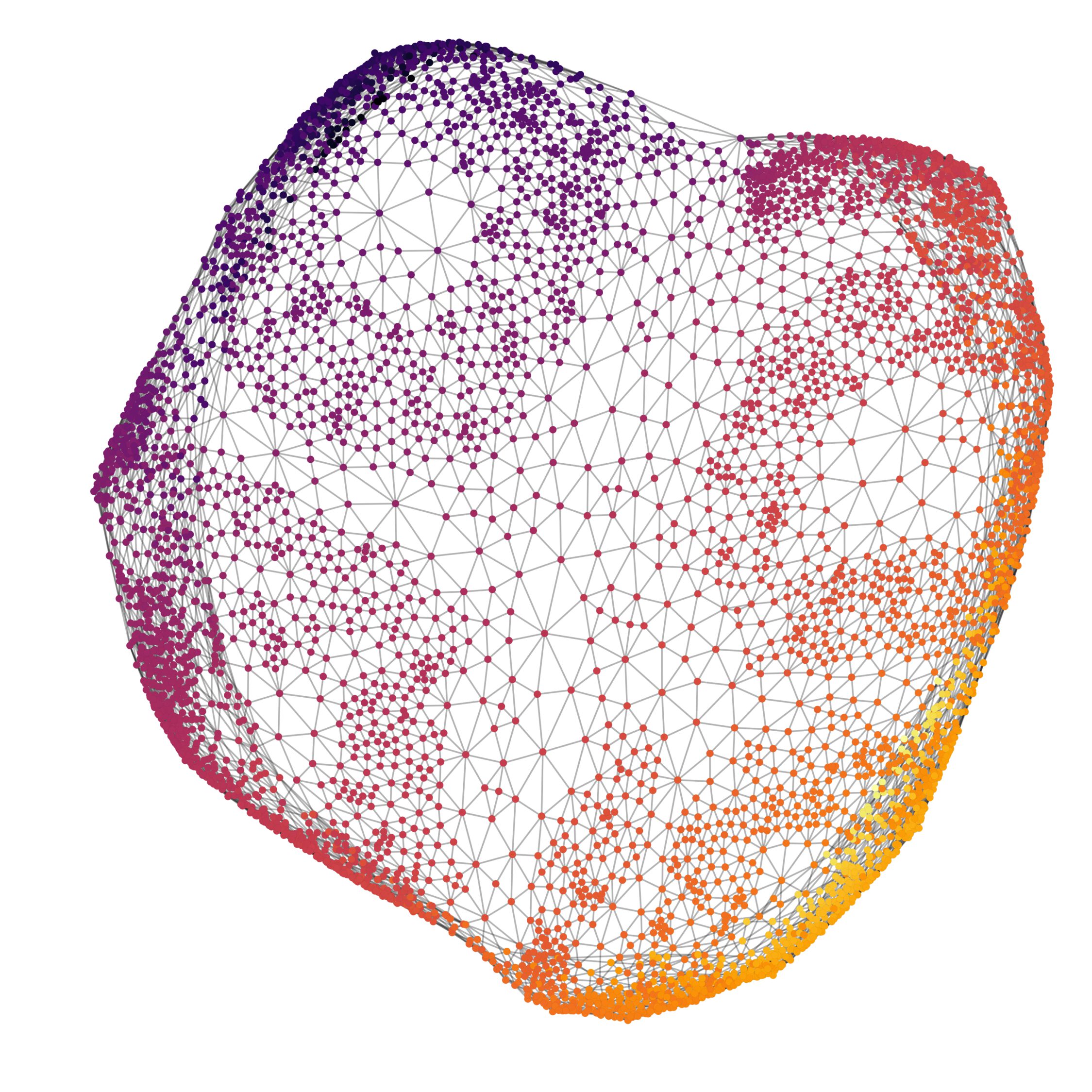}
        \parbox{0.95\linewidth}{\caption{log-PageRank embedding for $\alpha=0.99$}}
        \label{fig:1d}
    \end{subfigure}%
    \hfill%
    \begin{subfigure}{0.33\linewidth}
        \centering
        \includegraphics[width=\linewidth]{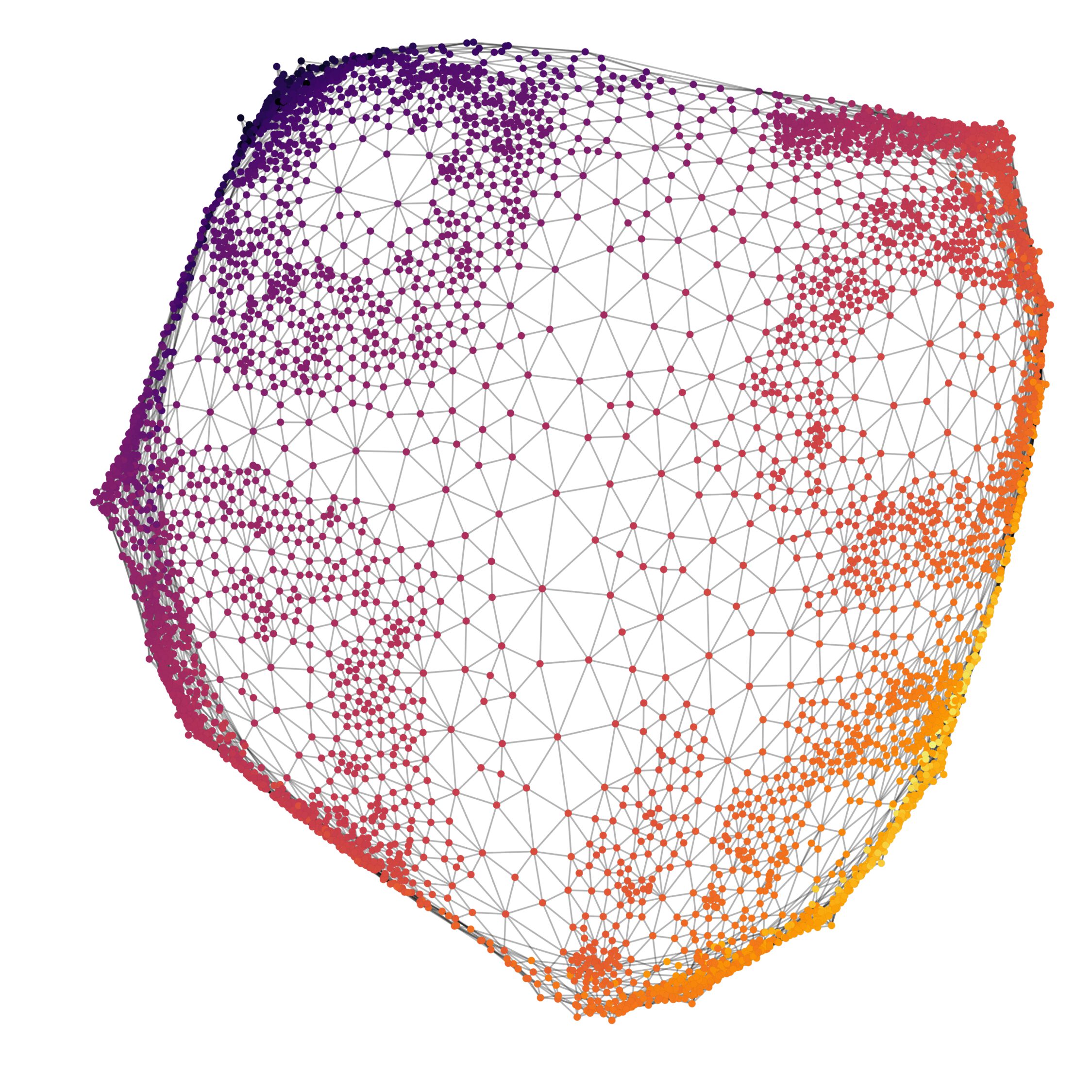}
        \parbox{0.95\linewidth}{\caption{log-PageRank embedding for $\alpha=0.999$.}\phantomcaption\label{fig:1e}}
    \end{subfigure}
\caption{The embedding pictures have all the nodes colored with the same values to show relative position. The log-PageRank embedding uses singular vectors of the element-wise logarithm of seeded PageRank vectors. 
Our paper argues that the similarity between (b), (d) and (e) is expected through an
approximation analysis. The advantage the log-PageRank embeddings is that they can be 
deployed in many emerging data scenarios where spectral embeddings and 
eigenvectors are not as well established or may be computationally expensive 
but where analogues of random walks 
or PageRank may be possible, as in hypergraphs (see Figure~\ref{fig:yelp}). }
\label{fig:intro-example}
\end{fullwidthfigure}

Our manuscript shows that this relationship is expected for degree-regular graphs through
an approximation argument (Section~\ref{sec:lpr+se}). This builds from a study on chain graphs that closely characterizes the log-PageRank values (Section~\ref{sec:chain}). 

PageRank vectors have long been viewed in terms of log-scaling. When Google
published PageRank scores for websites, they were understood to represent an approximation
of the log of Google's internal metrics~\cite{Bar-Yossef2008-reverse-PageRank}. 
When PageRank was used in spam analysis,
log scaling was used by
\Citet[Section 6.4]{becchetti2008-spam}. So a 
log-scaled analysis is not surprising. 

That said, analyzing the singular vectors of personalized PageRank vectors under log-scaling presents interesting challenges from a technical perspective. As such, we are only able to get an approximate, although compelling, understanding of the relationship illustrated in the figure. 

These log-PageRank embeddings offer a different set of computational tradeoffs compared with
eigenvectors. First, they only require random samples of a diffusion process on the graph.
Indeed, a closely related methodology to these log-PageRank embeddings was previously
used in~\citet{beginng} to compare spectral clustering with alternatives.
This shows how the ideas behind the log-PageRank embeddings give more flexible
structures to help users study their datasets. For instance, it is easy to study a variety
of localized log-PageRank embeddings that only work in a subset of a graph. Yet, these
can also be designed to pull in other nearby regions as suggested by the PageRank vectors
instead of more brittle Dirichlet eigenvector approximations~\cite{dev}. 
In this paper, we briefly explore using log-PageRank embeddings on hypergraphs to visualize their structure as well.

Moreover, the idea of customizing embeddings is highly relevant to the ongoing use
of graph embeddings for ML algorithms. 
Embedding development is the primary task in problems pertaining to network analysis, language processing, image processing or any problem that seeks to understand and use data. Literature on embedding is filled with learning models based on functions of spectral entities \cite{hypergcn, deepwalk, node2vec, nonlin_dif_hg, unifying, combining}. Like our log-scaling, these embeddings often involve nonlinearities such as sigmoid. 

PageRank or diffusion based techniques have previously been used for learning graph embedding (or clustering) \cite{gwav, gcnpr, hne, hgpr, meng, rw1} where the personalized PageRank vector based on a set of nodes, called the \textbf{seed set} is used for further computations. Interestingly, although each methodology comes with its own set of merits, all of these methods boil down to a function of random walk on the graph developed from the available data. For example, \citet{hkrnl} and \citet{hearing2011} and \citet{gwav} and \citet{hearing2018} are based on different functions of the random walk matrix. PageRank too can be expressed as an infinite geometric sum of the random walk matrix \cite{hkrnl}.

In summary, the contributions and remainder of this paper discuss:
\begin{itemize}
\item the log-PageRank embedding framework (Section~\ref{sec:logpr})
\item a study of log-PageRank values on a chain graph that shows how log-PageRank values are related to graph distance (Section~\ref{sec:chain})
\item an approximation analysis between log-PageRank embeddings and spectral clustering on d-regular graphs (Section~\ref{sec:lpr+se})
\item a computational study of similarities and differences between spectral and log-PageRank embeddings (Section~\ref{sec:examples})
\item examples of log-PageRank embeddings in hypergraphs using hypergraph PageRank~\cite{meng} (Section~\ref{sec:hypergraph}).
\end{itemize}

\section{Preliminaries}
In this manuscript we consider a connected weighted or unweighted undirected graph $G = (V, E)$ where $V$ 
is the vertex set with $n$ vertices and $E$ is the edge set with $m$ edges. 
Let $\mA$ and $\mD$ denote the adjacency matrix and degree matrix of a graph $G$
correspondingly. The Laplacian matrix $\mL$ of a graph $G$ is  
$\mD - \mA$ and the normalized Laplacian matrix $\boldsymbol{\mathcal{L}}$
is $\mI - \mD^{-1/2} \mA \mD^{-1/2}$. Let $\mW$ denote the lazy random walk 
$\frac{\mI + \mA \mD^{-1}}{2}$.
For a column-stochastic matrix $\mP$, a stationary distribution $\vpi$ is any solution to the eigensystem $\mP \vpi = \vpi$ where $\vpi$ is non-negative and sums to 1. This is an eigenvector of
$\mP$ corresponding to the eigenvalue 1. The stationary distribution of $\mP$ is unique if 
the underlying graph is connected. 

We use subscript to index entries of a matrix
or a vector: let $\mA_i$ denote the $i$th column of matrix $\mA$, $\mA_{ij}$
denote the $(i, j)$th entry of matrix $\mA$ and $\mA_{i:j}$ denote
the matrix of columns $\mA_i, \ldots , \mA_j$; let $\vx_i$
denote the $i$th entry of vector $\vx$ and $\vx_{i:j}$ denote
the vector of entries $\vx_i, \ldots, \vx_j$.
Let $\ve_1, \ldots, \ve_n$ denote the
columns of the identity matrix and the $n$ standard basis vectors of $\mathbb{R}^n$ and $\ve$ be the all-ones vector.
We use $\log.$ to denote the element-wise log operator applied to a vector. 

\subparagraph{\textbf{PageRank}}
The classical PageRank problem is defined as follows
\begin{definition}[see for example \citet{ppr_basics}]
Let $\mP$ be a column-stochastic matrix and $\vv$ be a column-stochastic vector,
then PageRank problem is to find the solution
$\vx$ to the linear system 
\begin{align}
    \label{eq:pr_lin}
    (\mI - \alpha \mP) \vx = (1 - \alpha) \vv
\end{align}
where the solution $\vx$ is called the PageRank vector, $\alpha \in (0, 1)$ is
the teleportation parameter and $\vv$ is the teleportation distribution over 
all vertices.
\end{definition}
By the definition above and the fact that all eigenvalues of a column-stochastic matrix 
have magnitude at most 1, we have $\mI - \alpha \mP$ is non-singular and the PageRank vector
can be written as $\vx = (1 - \alpha)(\mI - \alpha \mP)^{-1} \vv$. When the teleportation
distribution $\vv$ has support size 1, the PageRank problem is also called personalized 
PageRank problem and the corresponding solution $\vx$ is personalized PageRank vector or a seeded PageRank vector. 
For convenience, let $\mX(\alpha)$ denote $(1 - \alpha)(\mI - \alpha \mP)^{-1}$, $\vx(u, \alpha)$ denote
the personalized PageRank vector seeded on vertex $u$, i.e. 
$\vx(u, \alpha) = \mX(\alpha) \ve_u = (1 - \alpha)(\mI - \alpha \mP)^{-1} \ve_u$.

\section{Log-PageRank Embedding} \label{sec:logpr}
The authors of \citet{beginng} used the linearity of PageRank and the relationship with an expectation to study spectral-like embeddings of nonlinear operators. 

This inspiration led to our study of the log-PageRank embedding detailed in Algorithm~\ref{alg:logPR}. 
It takes as input the graph $G=(V,E)$ and outputs the $k$-dimensional node embeddings.
Our technique offers freedom in the algorithm being used for calculation of PageRank vector.

We randomly sample nodes of the graph, compute personalized PageRank vectors, and then compute an elementwise log of the resulting vectors. Then we compute an SVD of the overall set of vectors.  The non-dominant vectors give us our log-PageRank embedding. Note that a personalized PageRank vector has mathematically non-negative entries for a connected graph, so computing the log is always mathematically well defined. However, numerically, some of the elements may be sufficiently close to zero to cause an algorithm to return a floating point zero. For this reason, we often replace any zero entries with a value smaller than the smallest non-zero element returned before taking the log. This only occurs for small values of $\alpha$ and tends not to happen once $\alpha$ is close enough to one. 

\subparagraph{\textbf{Parameters}}
The user chosen parameters in this technique are the dimension of embedding, $k$, the teleportation parameter, $\alpha$, and the number of samples $s$.  The dimension is entirely at a user's discretion. For the number of samples, we suggest a result using a simple coupon collector-like bound that would be common in randomized matrix computations.
For the teleportation parameter, we suggest use $\alpha > 0.9$, such as $\alpha=0.99$ or $\alpha=0.999$.
Because we use many PageRank computations with large values of $\alpha$, we find it pragmatic to compute a single sparse LU decomposition of the matrix $\mI -\alpha \mP$ to repeatedly solve systems. 
Apart from the PageRank computation, the runtime depends on the SVD of the PageRank matrix, for which any type of randomized SVD computation could be used to make it more efficient.

\subparagraph{\textbf{Intuition and Analysis}}
The idea behind the algorithm is that the matrix of samples should have substantial information from other eigenspaces beyond the dominant one and the SVD will return this information. 
Our study of this algorithm revealed that the log is essential to getting qualitatively \emph{similar} pictures such as those in Figure~\ref{fig:intro-example}.  
We show in Section \ref{sec:lpr+se} that as $\alpha$ approaches $1$, the log-PageRank embedding approximates the eigenvectors of the lazy random walk matrix $\mW$.  We illustrate a simple example that motivates a relationship between log-PageRank values and a notion of distance. 

\begin{algorithm}[t]
\caption{Log-PageRank Embedding}
\label{alg:logPR}
\hspace*{\algorithmicindent} \textbf{Input:}
Graph adjacency matrix $\mA$, 
Dimension of embedding $k$, Number of samples $s \ge k+1$ (we suggest $s = (10+k)  \log n$), Teleportation parameter $\alpha$ \\
 \hspace*{\algorithmicindent} \textbf{Output:}
 Graph embedding $\mZ \in \mathbb{R}^{n\times k}$
\begin{algorithmic}[1]
\For{$i=1 \to s $}
\State $u \gets$ \texttt{random sample of 1 to $n$}
\State $\mX_i \gets$ \texttt{ pagerank on $\mA$ with seed $u$, teleportation param $
\alpha$}
\State \Comment{We use a single sparse LU on $\mI - \alpha \mP$ to compute PageRank }
\EndFor
\State $\mY \gets \log.(\mathbf{X})$ \Comment{Apply element-wise $\log$ on $\mX$}
\State $\mU, \mSigma, \mV \gets$ SVD of $\mY$
\State $\mZ \gets \mU_{2:k+1}$
\State \Return $\mZ$ \Comment{Return left singular vectors of $\mY$}
\end{algorithmic}
\end{algorithm}

\section{Log-PageRank on the chain graph}
\label{sec:chain}
We developed a closed form expression for the personalized PageRank on
chain graph and observed a linear dependence between the element-wise log of  PageRank and 
the graph distance. For a chain graph of size $n>2$, we solve the linear system defined in equation \eqref{eq:pr_lin} with $k$ as the seed node to obtain the following closed form expression in terms of $k,n,\alpha$.
\begin{align}
\label{woapp}
   \vx_i = 
   \begin{cases}
     c f(i), i \in \{2, \ldots, k-1\} \\
     \frac{c\alpha}{2} (f(k-1) + g(k+1)), i = k \\
     c g(i), i \in \{k+1, \ldots, n-1\} \\
   \end{cases}
\end{align}
where 
\begin{align*}
     f(i) = \frac{(+)^{i-1}+(-)^{i-1}}{(+)^{k-1}+(-)^{k-1}}, 
     g(i) = \frac{(+)^{n-i}+(-)^{n-i}}{(+)^{n-k}+(-)^{n-k}}, \\
     c = \sqrt{\frac{1 - \alpha}{1 + \alpha}}, 
     (+) = \frac{1+\sqrt{1-\alpha^2}}{\alpha}, 
     (-) = \frac{1-\sqrt{1-\alpha^2}}{\alpha}.
\end{align*}
Notice that when $\alpha$ is far from 1, $(-) \approx 0$ for high powers, and 
when $\alpha$ is close to 1, $(+) \approx (-)$. In both ways we get the same 
approximation that 
\begin{align*}
   \vx_i \approx
   \begin{cases}
     c (+)^{-\lvert i - k \rvert}, i \in [n] \setminus \{k\} \\
     c \frac{\alpha}{(+)}, i = k 
   \end{cases}
   .
\end{align*}
Then the logarithm of PageRank  expression for $\vx_i$ can be written as,
\begin{align*}
    \log \vx_i \approx -\lvert k-i \rvert \log((+))+\log (\sqrt{\frac{1-\alpha}{1+\alpha}})
\end{align*}

The above formulation indicates a linear relation between the log-PageRank and the distance from the seed node. This hints at log-PageRank being a good measure of the structure of the network around the seed node. 

We quickly verify that log-PageRank resembles the notion of ``distance'' in a geometric graph. The graph is created by randomly sampling points and connecting every point to its 6 nearest neighbors. The difference between PageRank and log-PageRank in this context is illustrated in Figure \ref{fig:dist}.

\begin{fullwidthfigure}[t]
     \centering
     \begin{subfigure}[t]{0.33\linewidth}
         \centering
         \includegraphics[width=\linewidth]{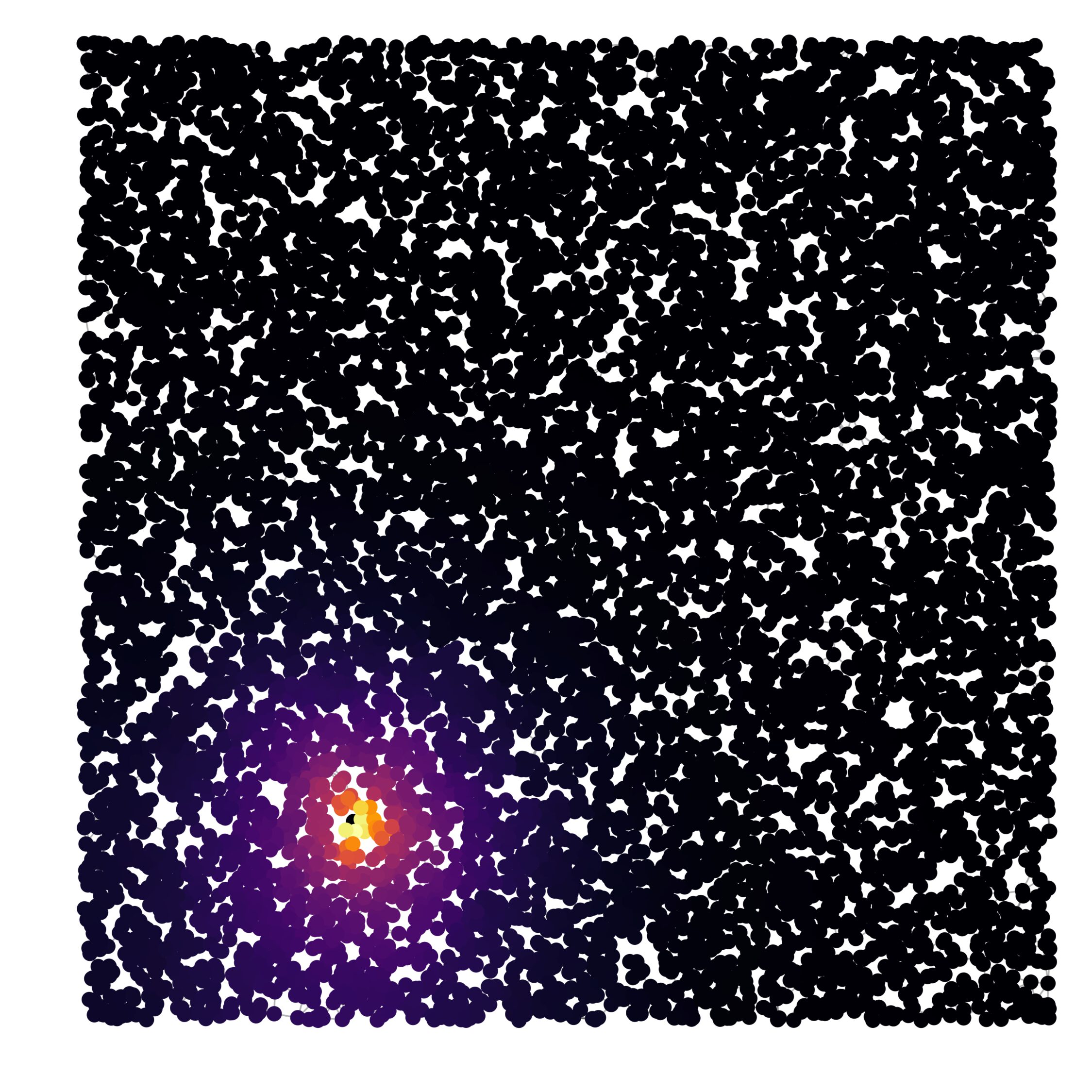}
         \parbox{0.9\linewidth}{%
         \caption{PageRank values for a 10000 node graph with 6 nearest neighbour with $\alpha = 0.999$}%
         \label{PR_distance}%
         }
     \end{subfigure}%
     \begin{subfigure}[t]{0.33\linewidth}
         \centering
         \includegraphics[width=\linewidth]{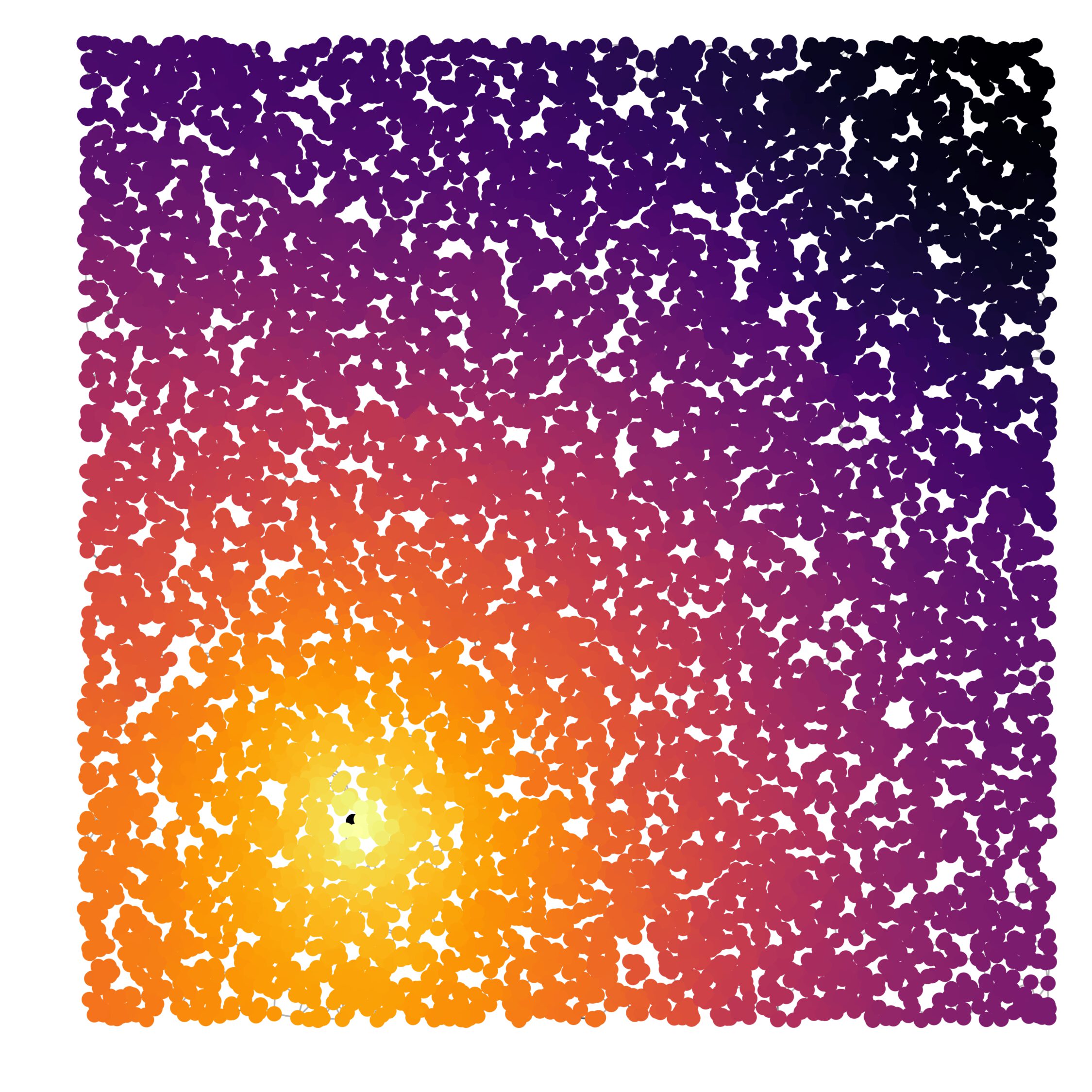}
         \parbox{0.95\linewidth}{%
         \caption{Log of PageRank values for a 10000 node graph with 6 nearest neighbour with $\alpha = 0.999$}%
         \label{logPR_distance}%
         }
     \end{subfigure}%
     \begin{subfigure}[t]{0.33\linewidth}\centering 
     \includegraphics[width=\linewidth]{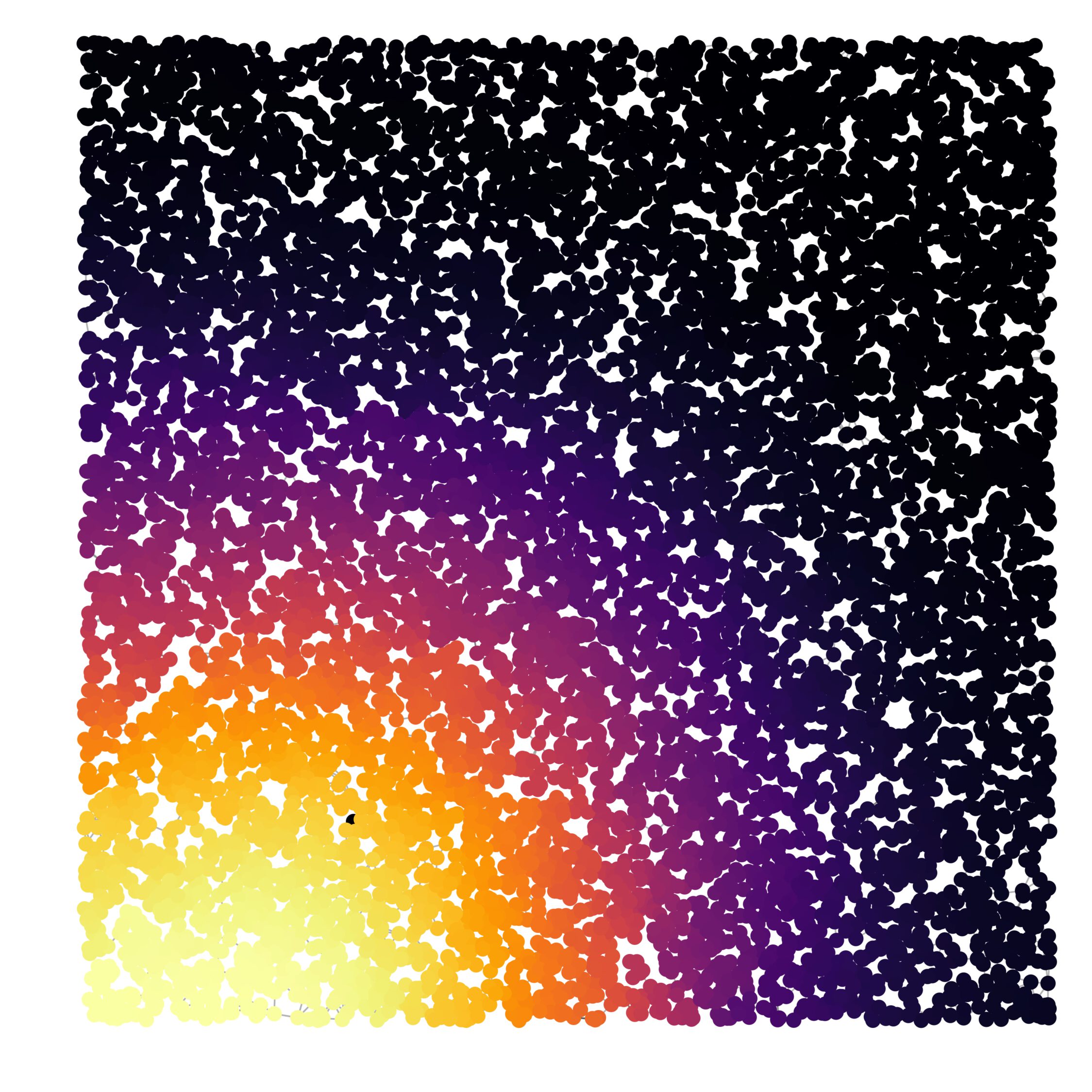}
         \parbox{0.85\linewidth}{%
         \caption{Normalized Adjacency powers for the seed used in PageRank above and $p=2000$}%
         \label{2000_matrix_power}%
         }
     \end{subfigure}
\caption{Distance effect created by log of PageRank in a geometric graph. The normalized adjacency matrix power is
 $(\mD^{-1/2}\mA \mD^{-1/2})^p\vv$ where $\vv$ is the same as the PageRank seed. Note the stronger similarity of (b) and (c). Also note the difference is at the boundary. The boundary is where we tend to see the biggest differences between log-PageRank and spectral embeddings.}
\label{fig:dist}
\end{fullwidthfigure}

\section{Relation between Log-PageRank Embedding and Spectral Embedding}
\label{sec:lpr+se}
In this section, we theoretically illustrate the relation between 
log-PageRank Embedding and Spectral Embedding on a special class of graphs,
$d$-regular graphs. 

Recall that the lazy random walk matrix is $\mW = \frac{\mI + \mA \mD^{-1}}{2}$. 
Our use of the lazy walk matrix is due to the simplicity in analyzing powers of the matrix 
because it is fundamentally aperiodic. A more intricate analysis would likely be able to remove the aperiodicity. 

Let the transition probability matrix $\mP$ of PageRank
be the lazy random walk matrix 
$\mW$. By a variety of existing analyses \cite{Serra-Capizzano-JCF,gleich-thesis}, we know that 
$\lim_{\alpha \to 1^-} \vx(u, \alpha) = 
\lim_{\alpha \to 1^-} (1-\alpha) (\mI - \alpha \mW) \ve_u  = \vpi$. 
This extends to $\log$ by continuity. Thus,  $\lim_{\alpha \to 1^-}\log.(\vx(u, \alpha)) = \log.(\vpi). $

We continue our study on $d$-regular graphs. 
Because for $d$-regular graphs, $\mA$ shares 
the same eigenvectors with $\mW$, instead of analyzing the
eigenvectors of $\mA$ as Spectral Embedding does, we
analyze the eigenvectors of $\mW$ and connect them with log-PageRank 
Embedding.

We first prove a result which characterizes the relation between 
PageRank vectors after element-wise $\log$ and columns of high power of $\mW$.
\begin{lemma}
\label{lem:log_approx}
For a connected $d$-regular graph $G$, let $\vx(u, \alpha)$ be the personalized PageRank vector
seeded on vertex $u$ with parameter $\alpha$, we have   
\begin{align*}
    \lim_{\alpha \to 1^{-}} \frac{\log.(\vx(u, \alpha))}{\|\log.(\vx(u, \alpha))\|}
    = \lim_{k \to \infty} \frac{\mW^k \ve_u}{\|\mW^k \ve_u\|}
\end{align*}
\end{lemma}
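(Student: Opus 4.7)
The plan is to reduce both sides of the claimed equality to the (normalized) stationary distribution $\vpi = \tfrac{1}{n}\ve$ of the lazy walk on a connected $d$-regular graph, so that the lemma becomes essentially a convergence-plus-continuity statement rather than a quantitative perturbation result.

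First I would handle the right-hand side via the spectral decomposition of $\mW$. Since $G$ is $d$-regular, $\mD = d\mI$ and $\mW = \tfrac{1}{2}(\mI + \mA/d)$ is real symmetric, with an orthonormal eigenbasis $\vq_1,\ldots,\vq_n$ and eigenvalues $\lambda_i \in [0,1]$; connectivity of $G$ forces $\lambda_1 = 1$ with $\vq_1 = \ve/\sqrt{n}$, and $\lambda_i < 1$ for all $i \ge 2$. Expanding
\begin{align*}
\mW^k \ve_u \;=\; \tfrac{1}{n}\ve \;+\; \sum_{i \ge 2} \lambda_i^{\,k}(\vq_i^T \ve_u)\,\vq_i,
\end{align*}
the tail vanishes as $k \to \infty$, so $\mW^k \ve_u \to \vpi$ and normalization yields the unit vector $\ve/\sqrt{n}$.

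For the left-hand side I would invoke the cited PageRank convergence $\lim_{\alpha \to 1^-}\vx(u,\alpha) = \vpi$ from \cite{Serra-Capizzano-JCF,gleich-thesis}. Since every coordinate of $\vpi = \tfrac{1}{n}\ve$ is strictly positive, the elementwise $\log$ is continuous at $\vpi$, so $\log.(\vx(u,\alpha)) \to \log(1/n)\,\ve$ coordinatewise. This limit is a nonzero scalar multiple of $\ve$, so dividing by its Euclidean norm gives $\pm\ve/\sqrt{n}$, matching the right-hand side as a direction.

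The main obstacle I anticipate is bookkeeping of the sign: because $1/n < 1$ we have $\log(1/n) < 0$, so the normalized LHS limit is $-\ve/\sqrt{n}$ while the RHS limit is $+\ve/\sqrt{n}$. For the log-PageRank embedding in Algorithm~\ref{alg:logPR} this is immaterial, since left singular vectors are only defined up to a global sign, but the lemma statement should be read projectively (or with a sign convention on the normalization). Beyond this point no perturbation estimate is required for the lemma itself; the argument is a direct corollary of (i) the spectral convergence $\mW^k\ve_u \to \vpi$ on a connected aperiodic (lazy) chain and (ii) the limit $\vx(u,\alpha) \to \vpi$ as $\alpha \to 1^-$, combined with continuity of $\log$ on $(0,\infty)^n$. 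The finer second-order relation that is presumably what drives the rest of Section~\ref{sec:lpr+se} would instead require the expansion $\log(\tfrac{1}{n}+\delta) = \log(1/n) + n\delta + O(\delta^2)$ together with the resolvent expansion $\vx(u,\alpha) - \vpi = (1-\alpha)\sum_{i\ge 2}\tfrac{\vq_i^T\ve_u}{1-\alpha\lambda_i}\vq_i$, but that lies strictly beyond the present statement.
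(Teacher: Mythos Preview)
Your proposal is correct and follows essentially the same approach as the paper's proof: both sides are reduced to (the direction of) $\ve/\sqrt{n}$ via the PageRank limit $\vx(u,\alpha)\to\vpi$ plus continuity of $\log$ on the left, and Perron--Frobenius/spectral decomposition of $\mW$ on the right. Your sign observation is in fact sharper than the paper, which writes $\frac{\log.(\vpi)}{\|\log.(\vpi)\|} = \frac{\ve}{\sqrt{n}}$ without noting that $\log(1/n)<0$; as you say, this is immaterial for the SVD-based embedding.
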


\begin{proof}
    For $d$-regular graphs, the stationary distribution $\vpi$ of $\mW$ is $\frac{\ve}{n}$. Thus
    \begin{align*}
    \lim_{\alpha \to 1^{-}} \frac{\log.(\vx(u, \alpha))}{\|\log.(\vx(u, \alpha))\|} =
    \frac{\log.(\vpi)}{\|\log.(\vpi)\|} = \frac{\ve}{\sqrt{n}}.
    \end{align*}
    Let $\mQ \mLambda \mQ^T$ be the eigenvalue decomposition of $\mW$. 
    From Perron–Frobenius theory \cite{per07, fro12}, since $G$ is connected and $\mW$ models a walk with self-loop, we have 
    that $\lambda_1 > \lvert \lambda_i \rvert $ for all $i \neq 1$, and
    \begin{align*}
        \lim_{k \to \infty} \frac{\mW^k \ve_u}{\|\mW^k \ve_u\|} = \mQ_1 = \frac{\ve}{\sqrt{n}}
    \end{align*} 
\end{proof}

Further, the approximation result below connects the eigenvectors of $\mW$ with the left singular vectors of the matrix composed
of randomly sampled columns of $\mW^k$.

\begin{approx result}
\label{lem:sample}
  For a graph $G$ and $m \leq n$, let 
  $k_1, \ldots, k_m$ be $m$ large integers and $i_1, \ldots, i_m$ be $m$
  indices uniformly sampled from $[n]$, in expectation left singular vectors of
  $\mB$ 
   equal eigenvectors of $\mW$
   where $\mB = \bmat{\mW^{k_1}\ve_{i_1} &
    \mW^{k_2} \ve_{i_2} &
    \ldots & \mW^{k_m} \ve_{i_m}}$.  
\end{approx result}
\begin{justification}
  As we know, the left singular vectors of $\mB$ are the eigenvectors of
  $\mB \mB^T$.
  Let $\mQ \mLambda \mQ^T$ be the eigenvalue decomposition of $\mW$, then we have $\mW^{k_i} = \mQ \mLambda^{k_i} \mQ^T$,
  and we have 
  \begin{align*}
    \mathbb{E}_{i_1, i_2, \ldots, i_m} \mB \mB^T 
    &=\mathbb{E}_{i_1, i_2, \ldots, i_m} \left( \sum_{j=1}^m \mQ \mLambda^{k_j} \mQ^T \ve_{i_j} \ve_{i_j}^T
    \mQ \mLambda^{k_j} \mQ^T \right)\\ 
    &=\left( \sum_{j=1}^m \mQ \mLambda^{k_j} \mQ^T \frac{\mI}{n} \frac{\mI^T}{n}\mQ \mLambda^{k_j} \mQ^T \right)\\ 
                            &= \frac{1}{n^2} \mQ \left(\sum_{j=1}^m \mLambda^{k_j} \right) \mQ^T. 
  \end{align*}
  Therefore in expectation the first $m$ eigenvectors of $\mB \mB^T$ are $\mQ_1, \ldots, \mQ_m$. Notice that when one eigenvalue of $\mW$ has multiplicity larger
  than 1, the corresponding eigenvectors of $\mB \mB^T$ may undergo one orthogonal transformation but
  the spaces they span are invariant. 
\end{justification}

The approximation result below states that the low-rank log-PageRank Embedding is expected to approximate the 
Spectral Embedding for degree-regular graphs. 
\begin{approx result}
  For a $d$-regular graph $G$ and $m \leq n$, let $i_1, \ldots, i_m$ be $m$
  indices randomly sampled from $[n]$, for $\alpha$ close to 1, left singular vectors
  of $\mC =
    \frac{1}{\sqrt{n}} \bmat{\frac{\log.(\vx(i_1, \alpha))}{\|\log.(\vx(i_1, \alpha))\|}
    & \frac{\log.(\vx(i_2, \alpha))}{\|\log.(\vx(i_2, \alpha))\|}
    & \ldots
    & \frac{\log.(\vx(i_n, \alpha))}{\|\log.(\vx(i_n, \alpha))\|}}
  $
  approximates the eigenvectors of $\mW$.
\end{approx result}
\begin{justification}
  By Lemma \ref{lem:log_approx}, we know that $\lim_{\alpha \to 1^{-}}\frac{\log.(\vx(i_j, \alpha))}
  {\|\log.(\vx(i_j, \alpha))\|}$
  $ = \lim_{k_j \to \infty} \frac{\mW^{k_j} \ve_{i_j}}{\|\mW^{k_j} \ve_{i_j}\|}$.
  Let $\mQ \mLambda \mQ^T$ be the eigenvalue decomposition for $\mW$, we have for $d$-regular graphs,
  $\mQ_1 = \frac{\ve}{\sqrt{n}}$ and $\lambda_1 = 1$. Thus, for large enough $k_j$, $\mW^{k_j} \ve_{i_j} 
  \approx \lambda_1^{k_j} \mQ_1 \mQ_1^T \ve_{i_j} = \mQ_1 \mQ_1^T \ve_{i_j}$ and 
  $\|\mW^{k_j} \ve_{i_j}\| \approx \frac{1}{\sqrt{n}}$. 
  Therefore let $k_1, \ldots, k_m$ be $m$ randomly sampled large integers,
  we have
  \begin{align*}
    \mC =
    \frac{1}{\sqrt{n}} \bmat{\frac{\log.(\vx(i_1, \alpha))}{\|\log.(\vx(i_1, \alpha))\|}
    & \frac{\log.(\vx(i_2, \alpha))}{\|\log.(\vx(i_2, \alpha))\|}
    & \ldots
    & \frac{\log.(\vx(i_n, \alpha))}{\|\log.(\vx(i_n, \alpha))\|}}
  \end{align*}    
  approximates
  \begin{align*}
  \mB = \bmat{\mW^{k_1}\ve_{i_1} &
    \mW^{k_2} \ve_{i_2} &
    \ldots & \mW^{k_m} \ve_{i_m}}  
  \end{align*}    
  Further, by Approximation Result \ref{lem:sample}, we have in expectation left singular vectors of $\mB$
  approximates the eigenvectors of $\mW$, thus left singular vectors
  of $\mC$ approximates the eigenvectors of $\mW$ as well.
\end{justification}

\section{Empirical Comparison Results}\label{sec:examples}
We study the log-PageRank embedding on synthetic and real world graphs. We focus on those where the spectral embedding gives a good picture of the graph, as spectral embeddings may fail to give useful pictures for many real-world networks~\cite{Lang-2005-weaknesses}. 
We will analyse its performance on nearest neighbour graphs and the following  graphs with a strong geometry. 

\begin{fullwidthfigure}[t]
     \centering
     \begin{subfigure}[t]{0.33\linewidth}
         \centering
         \includegraphics[width=\linewidth]{1a}
         \caption{Graph for the word ``log PR'' with \\ 5000 nodes, 14962 edges}
         \sublabel{fig:1a-rep}
     \end{subfigure}%
     \begin{subfigure}[t]{0.33\linewidth}
         \centering
         \includegraphics[width=\linewidth]{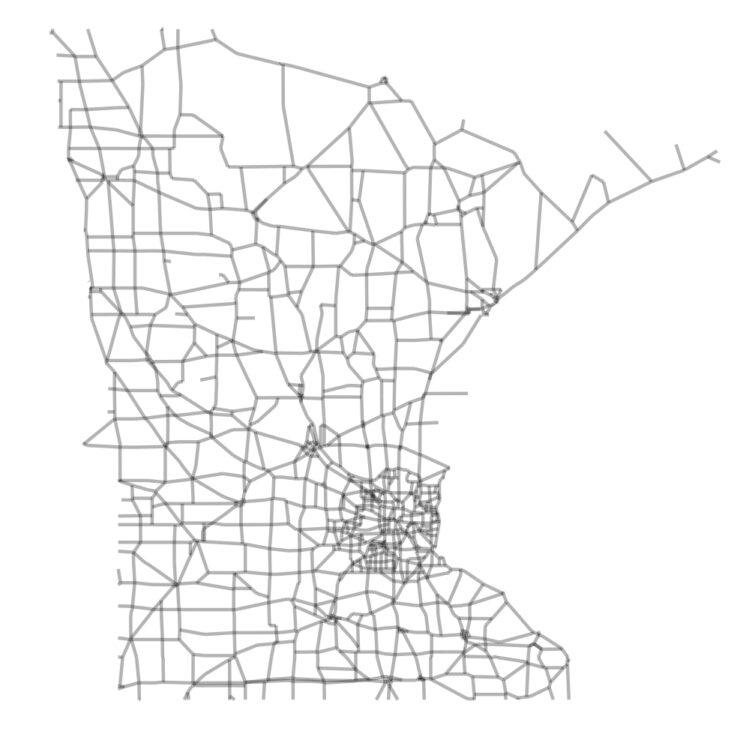}
         \caption{The Minnesota road network with \\ 2640 nodes, 3302 edges \sublabel{fig:minnesota-graph}}
     \end{subfigure}%
     \begin{subfigure}[t]{0.33\linewidth}
         \centering
         \includegraphics[width=\linewidth]{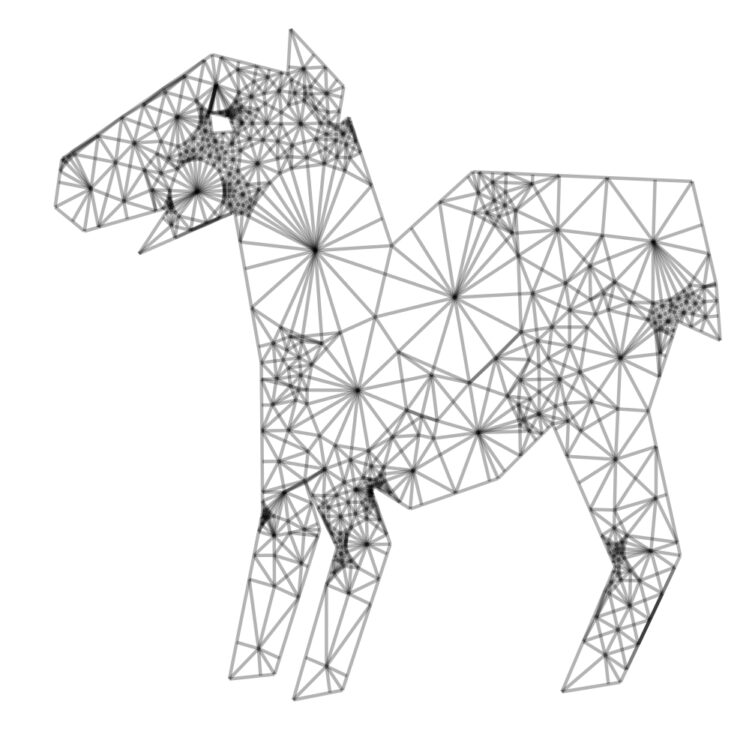}
         \caption{The Tapir \cite{tapir} graph with \\ 1024 nodes, 2846 edges
         \sublabel{fig:tapir-graph}}
     \end{subfigure}
\caption{Our geometric graphs}
\label{fig:geometric}
\end{fullwidthfigure}

\subsection{Implementation}
We implemented the log-PageRank embedding in Julia. We use the built in sparse LU solver to factorize the PageRank matrix $\mI - \alpha \mP$ to solve linear systems for large values of $\alpha$. And we use the built in dense SVD solver. There are many alternatives we could use here, but our focus was on understanding the log-PageRank embeddings rather than optimizing the speed at which we can compute them.

\subsection{Qualitative and Quantitative  Error and Approximation}
Given a spectral embedding and a log-PageRank embedding, we first simply  look at the differences in the pictures. Our results say these should look similar, not that they should be exactly the same as the graphs we study are not in the class where we expect sharp approximations. Let the second singular vectors of the log-PageRank embedding be  $\vu_2$, and the second eigenvector of the Laplacian be $\vz_2$.  Likewise for the 3rd vectors. So the spectral embedding is $\vz_2, \vz_3$ and the log-PageRank embedding is $\vu_2, \vu_3$. 

We quantitatively measure the error by evaluating the relative difference between the Rayleigh quotient with respect to the vectors used for embedding. This gives us the following measure: 
\begin{align}
\label{eq:appx_err}
    \text{approximation error} = \frac{s - p}{s}
\end{align}
where 
\begin{align*}
    s = \frac{\vz_2^T \mcL \vz_2}{\vz_2^T \vz_2}, \qquad 
    p = \frac{\vu_2^T \mcL \vu_2}{\vu_2^T \vu_2}.
\end{align*}

The second way we evaluate the embeddings is by looking at the joint plot of $\vu_2$ vs.~$\vz_2$ and $\vz_3$ vs.~$\vu_3$. If the embeddings are close, these should look like a straight line, or at least a very highly correlated relationships.

\subsection{Evaluation across graphs}
We record the error according to equation \eqref{eq:appx_err} in Table \ref{tab:emb_errors} for the following types of graphs.

\paragraph{Nearest Neighbor Graphs}
For a graph named $n-k$ nearest neighbor, there are $n$ points randomly distributed in the unit square and connected to $k$ nearest neighbors. 

\paragraph{Chain Graphs}
These are simply the chain graphs we had from the analysis in Section~\ref{sec:chain}.

\paragraph{Graphs with Strong Geometry}
These are the graphs from Figure~\ref{fig:geometric}.

\paragraph{Stochastic Block Models}
A graph named $\text{sbm}(n,k,p,q)$ has $k$ groups of $n$ vertices with inter-group probability $p$ and between group probability $q$. These show the worst approximation results and largest differences.

\begin{tuftetable}[t]
\centering
\begin{tabularx}{\linewidth}{lXXXX}
\toprule 
Graph & $\alpha=0.99$ & & \smash{\rlap{$\alpha=0.9999$}}\\
 \cmidrule(l){2-3}
 \cmidrule(l){4-5}
 & raw & log & raw & log \\
30-6 nearest neighbour & 3.27\% & 0.06\% & 2.89\% & 0.05\%  \\
 
 3000-6 nearest neighbour & 47.6\% & 0.37\% & 5.06\% & 2.88\% \\
 
 10000-6 nearest neighbour & 170.75\% & 2.13\% & 13.5\% & 1.76\% \\
 \midrule 
 30 chain & 26.88\% & 0.47\% & 28.42\%  & 6.02\% \\ 
 
 3000 chain & 2858.82\% & 1.06\% & 30.38\% & 0.75\% \\
 \midrule 
 Minnesota $n=2640$ & 16.07\% & 1.97\% & 11.15\% & 0.44\% \\
 
 Tapir $n=1024$ & 10.17\% & 1.13\% & 15.41\% & 0.66\% \\
 
 LogPR $n=5000$ & 19.95\% & 0.15\% & 4.76\% & 0.34\%  \\
 \midrule 
 sbm(50,60,0.001,0.005) & 51.77\% & 15.22\% & 51.32\% & 67.25\%  \\
 
 sbm(1000,3,0.001,0.005) & 47.35\% & 16.93\% & 45.78\% & 89.39\% \\
 
 sbm(50,60,0.25,0.005) & 17.88\% & 15.22\% & 90.13\% & 402.27\%  \\

 sbm(1000,3,0.25,0.001) & 53.7\% & 1.04\% & 16.21\% & 15.73\% \\ [1ex]
 \bottomrule 
\end{tabularx}
\caption{Error between PageRank embedding and spectral embedding for different graphs at a low teleportation probability, $\alpha = 0.99$ and at a higher one $\alpha = 0.99999$ both without log (raw) and with log.}
\label{tab:emb_errors}
\end{tuftetable}

\subsection{A few examples}

Figure~\ref{fig:minnesota} and Figure~\ref{fig:tapir} show the embeddings as $\alpha$ varies both with and without the nonlinear \emph{log} operation for the graphs in figures \ref{fig:minnesota-graph} and \ref{fig:tapir-graph}. The result on the ``log PR'' graph from Figure~\ref{fig:1a-rep} was in the introduction. 

On nearest neighbor graphs, such as Figure~\ref{fig:10knn}, these embeddings show a clear rotational ambiguity that might arise with other evaluations of this strategy. (This occured with the other graphs too.) Put plainly, the eigenvectors are almost in 2d invariant subspace. Consequently, when we randomize the method, we can only capture this near 2d subspace up to rotation. However, this will not show high error with respect to the approximation error measure as the results are all near eigenvectors. 

We show one of the examples of the stochastic block model in Figure~\ref{fig:highercon}. Although this has bad approximation with respect to the spectral embedding, the result for the log-PageRank embedding for $\alpha=0.99$ is arguably better than the spectral embedding. 


\begin{marginfigure}[-1pt]
\centering
\hspace{-15pt}\includegraphics[width=1.1\linewidth]{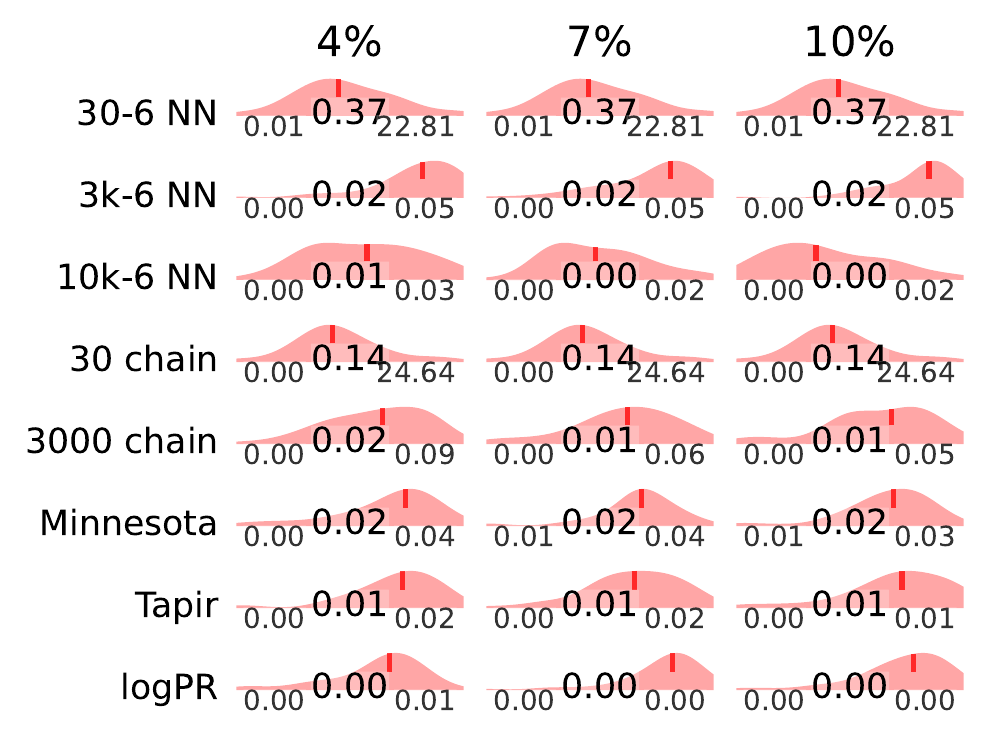}
\caption{Error variation with column for $\log$ of PageRank with $\alpha = 0.99$. The percentage indicated in the column headings are the fraction of the nodes as seeds. Each entry is the variance, the maximum and the minimum for 50 trials.}
\label{tab:err_var_log}
\end{marginfigure}

\begin{figure}[p]
\begin{fullwidth}
\centering
    \parbox{0.3\linewidth}{%
    \parbox{0.9\linewidth}{\caption{Comparison of embeddings for the Minnesota network.}
    \label{fig:minnesota}}
    \begin{subfigure}{\linewidth}
    \includegraphics[width=\linewidth]{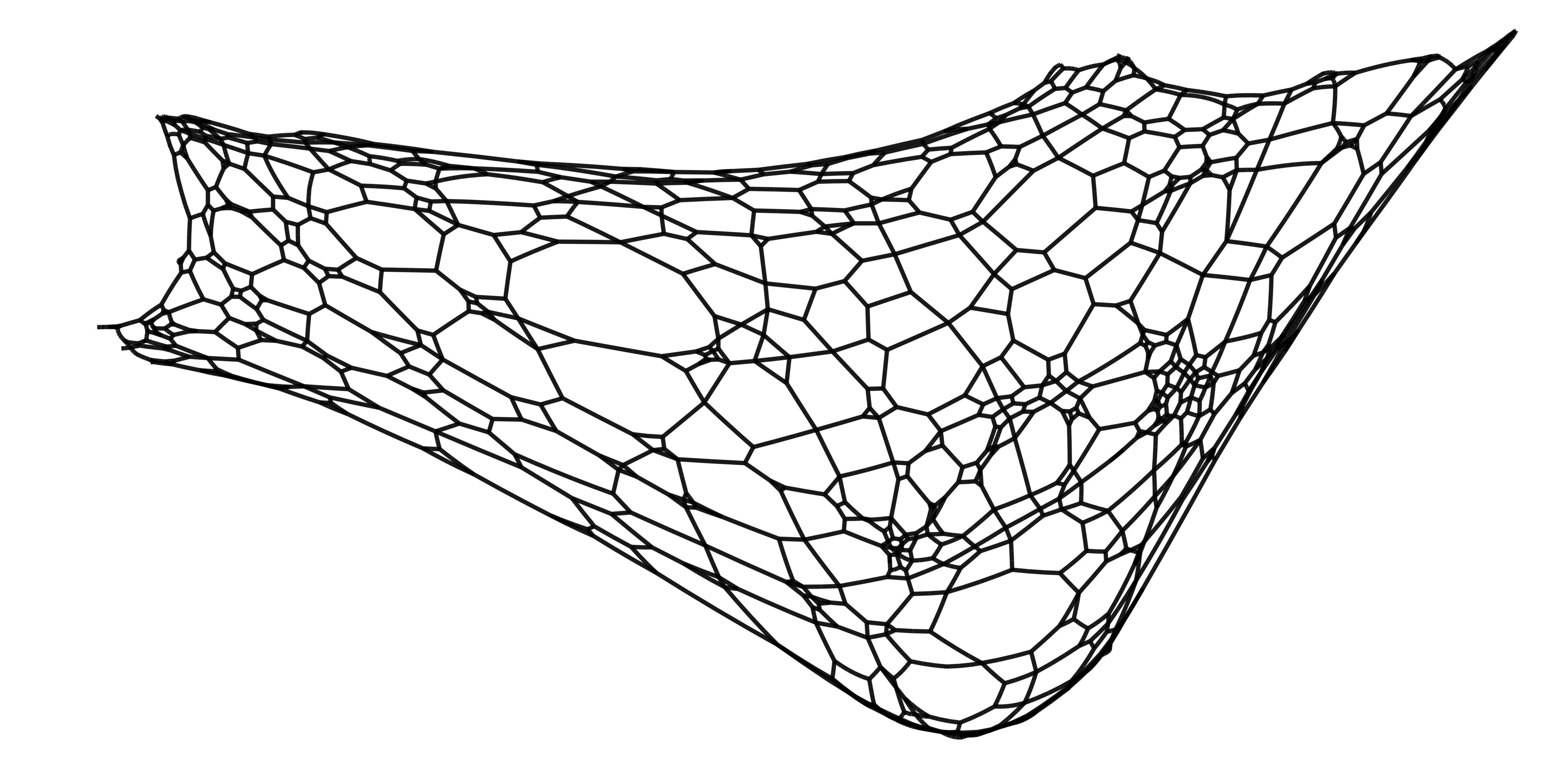}
    \caption{Spectral Embedding }
    \end{subfigure}%
    }%
    \parbox{0.7\linewidth}{%
    \smash{\raisebox{-18pt}{\llap{\rotatebox{90}{\footnotesize$\alpha=0.99$}}}}%
    \begin{subfigure}{0.33\linewidth}
    \includegraphics[width=\linewidth]{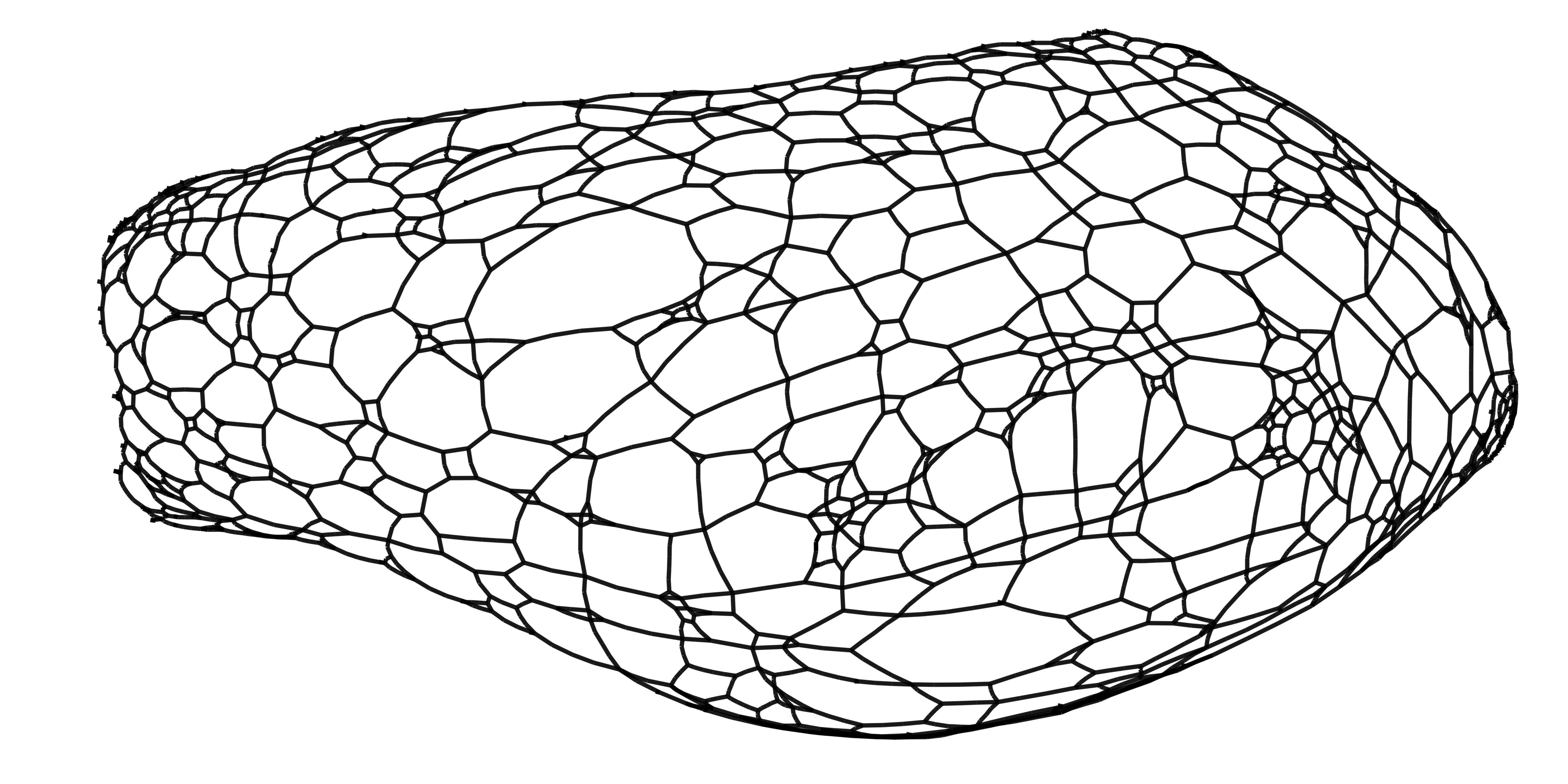}
    \caption{Log-PageRank}
    \end{subfigure}%
    \begin{subfigure}{0.33\linewidth}
    \includegraphics[width=\linewidth]{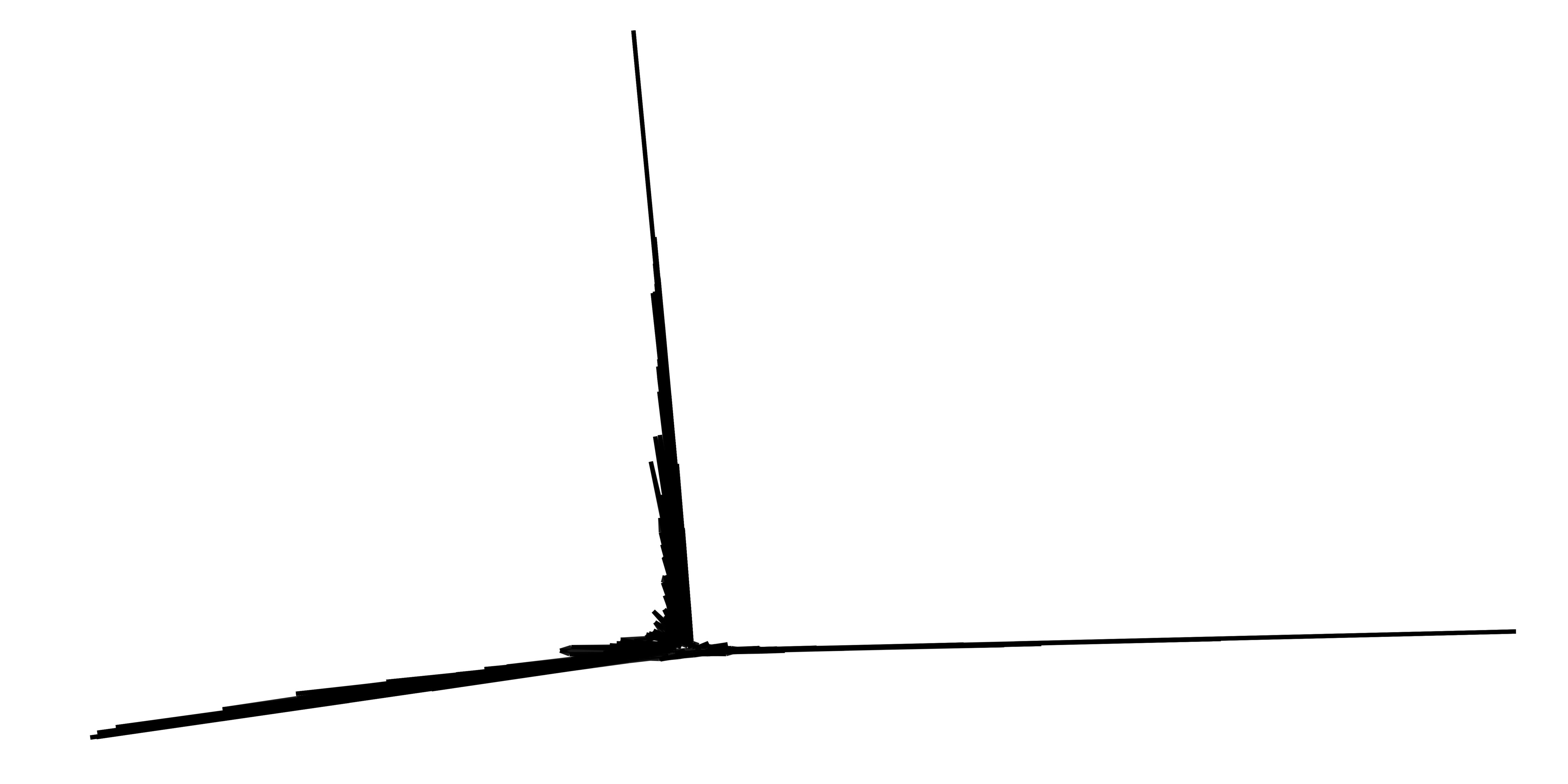}
    \caption{PageRank}
    \end{subfigure}%
    \begin{subfigure}{0.33\linewidth}
    \includegraphics[width=0.5\linewidth]{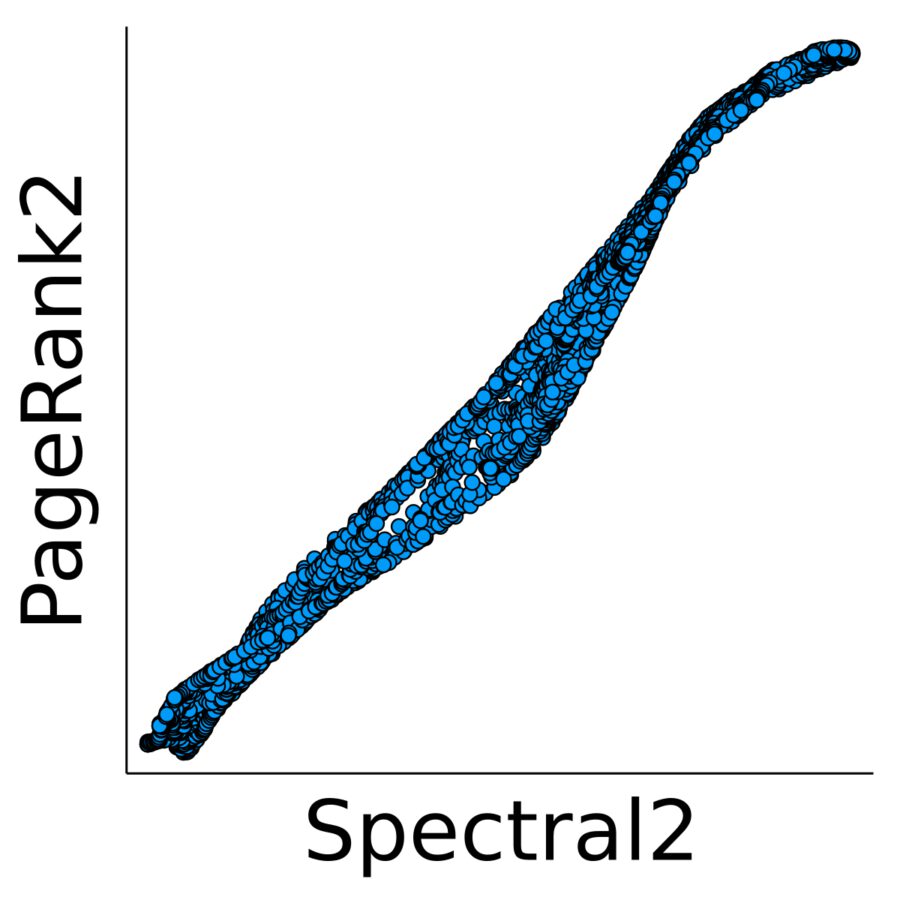}%
    \includegraphics[width=0.5\linewidth]{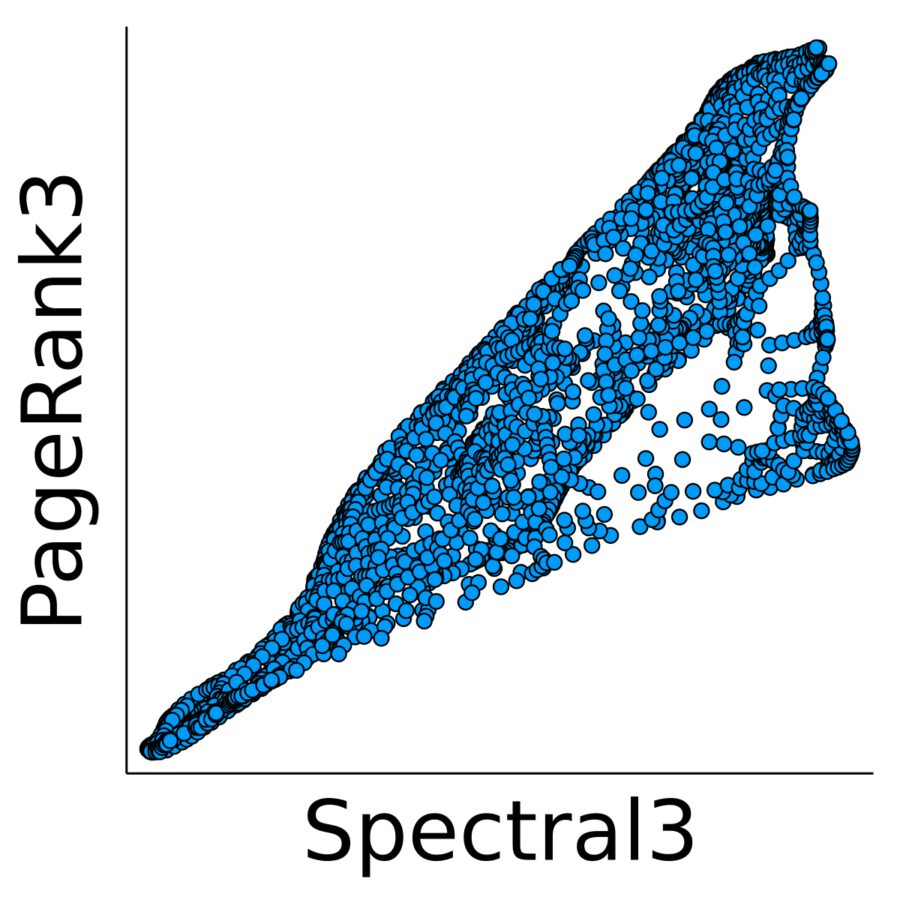}
    \caption{$\!\vu_2\!$ vs $\!\vz_2\!$ / $\!\vu_3\!$ vs $\!\vz_3\!$}
    \end{subfigure}
\centering

\bigskip 
    \smash{\raisebox{-24pt}{\llap{\rotatebox{90}{\footnotesize$\alpha=0.9999$}}}}%
    \begin{subfigure}{0.33\linewidth}
    \includegraphics[width=\linewidth]{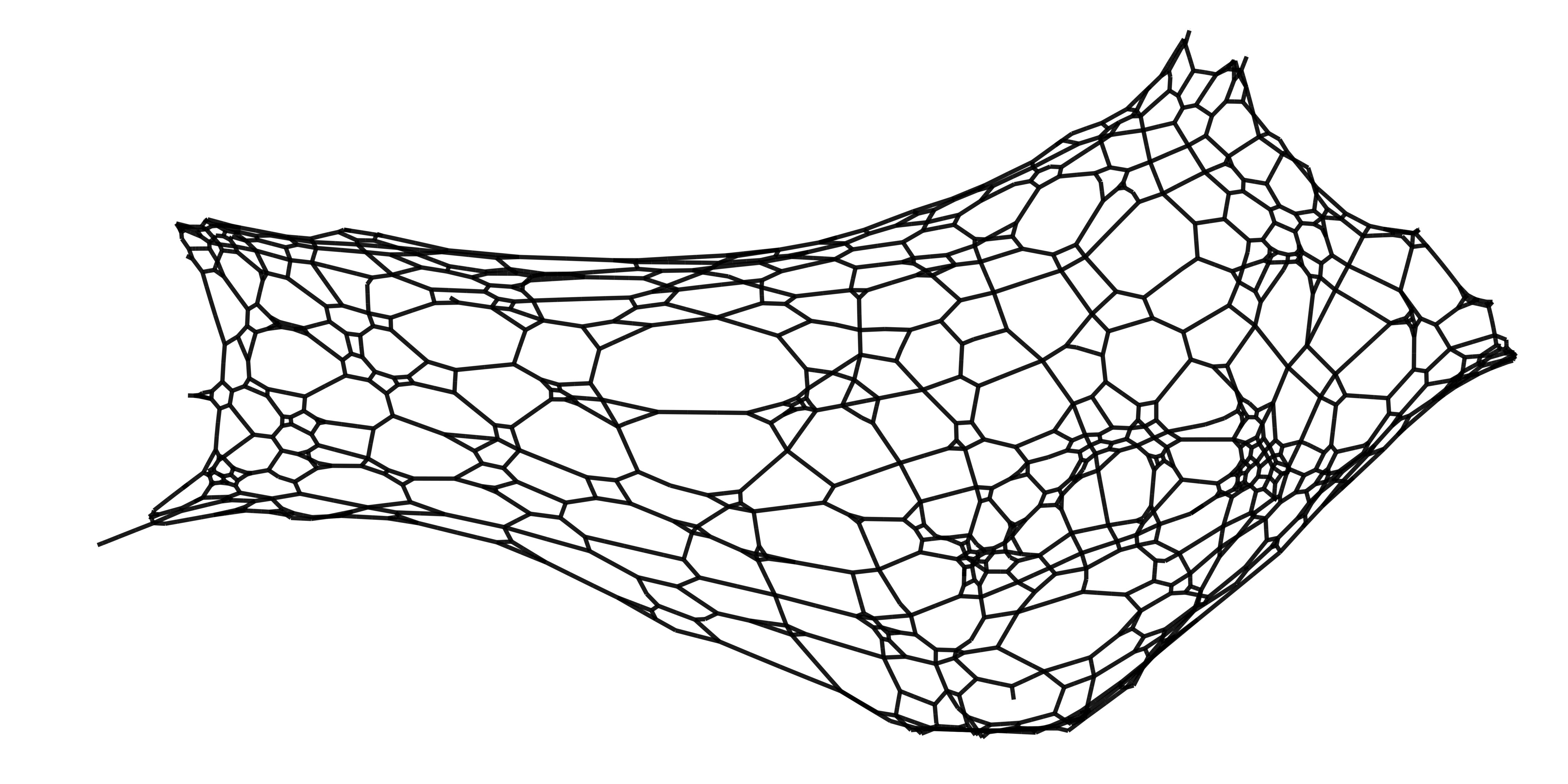}
    \caption{Log-PageRank}
    \end{subfigure}%
    \begin{subfigure}{0.33\linewidth}
    \includegraphics[width=\linewidth]{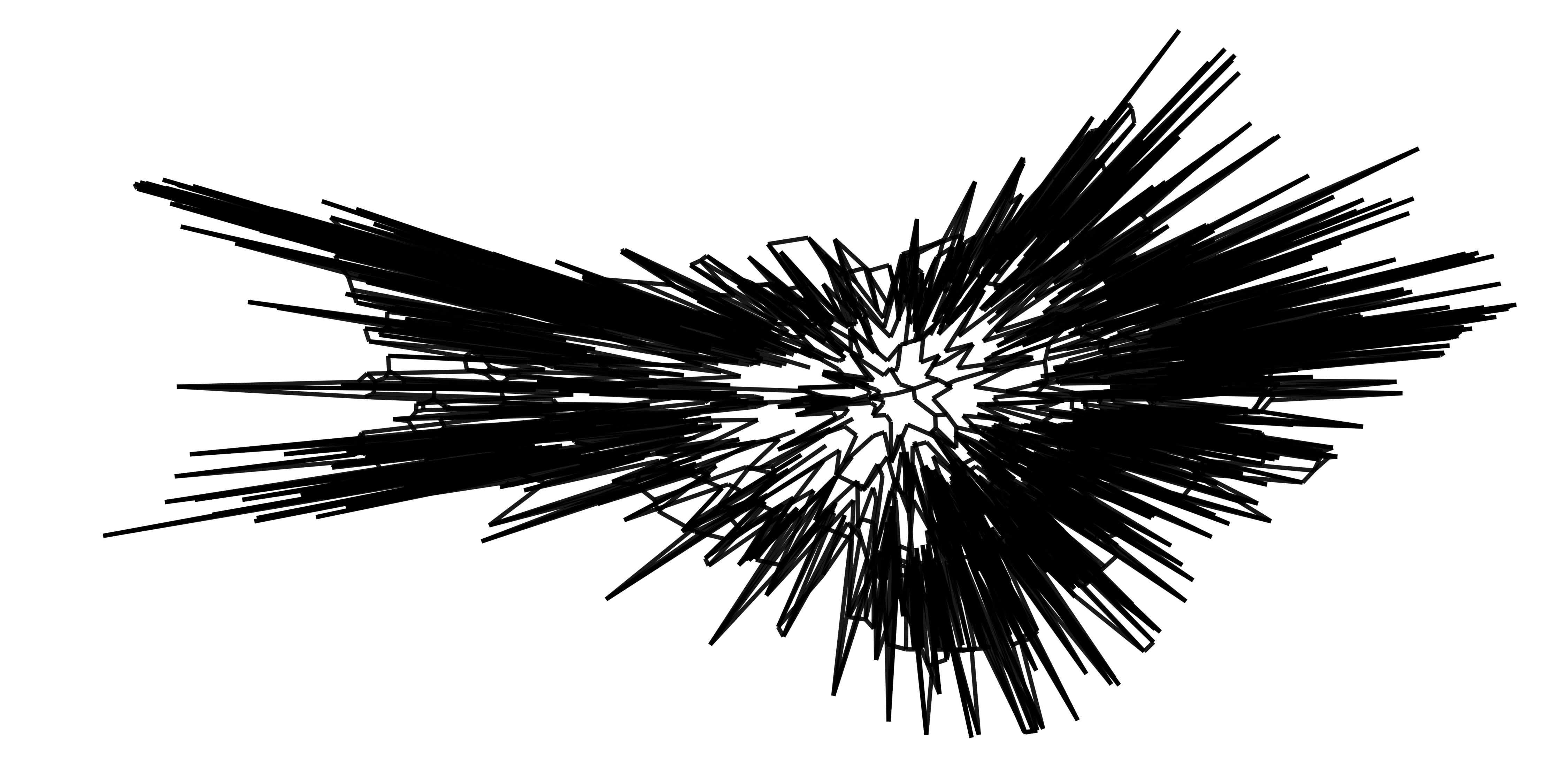}
    \caption{PageRank}
    \end{subfigure}%
    \begin{subfigure}{0.33\linewidth}
    \includegraphics[width=0.5\linewidth]{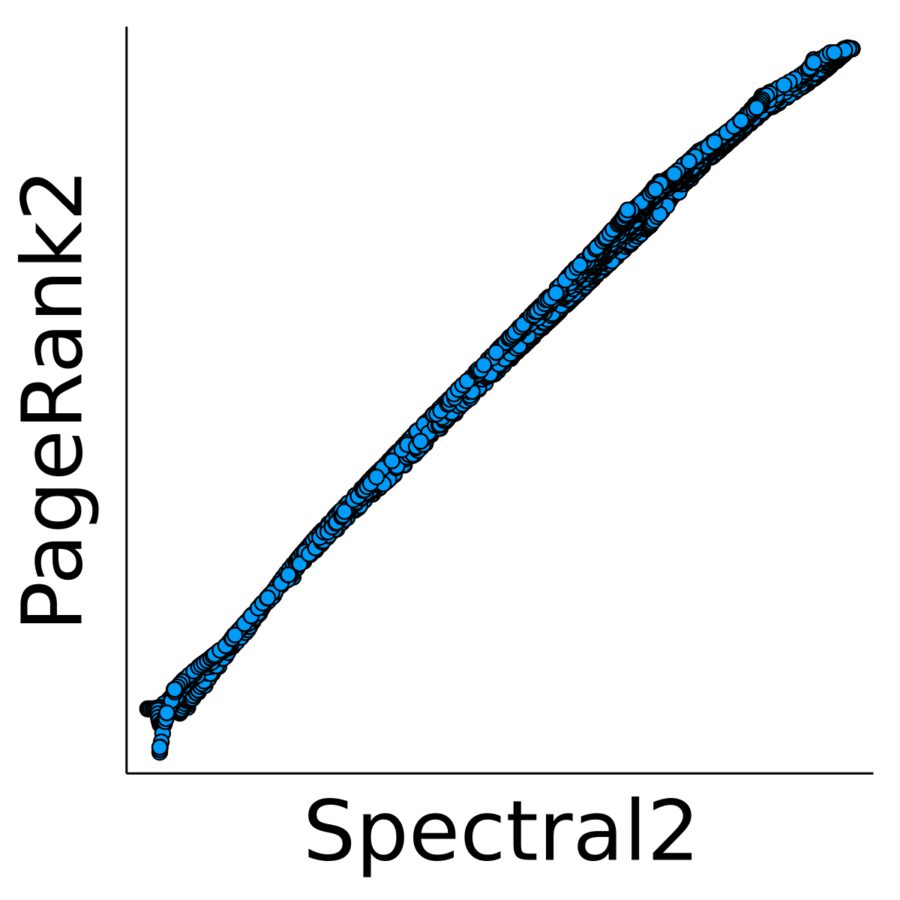}%
    \includegraphics[width=0.5\linewidth]{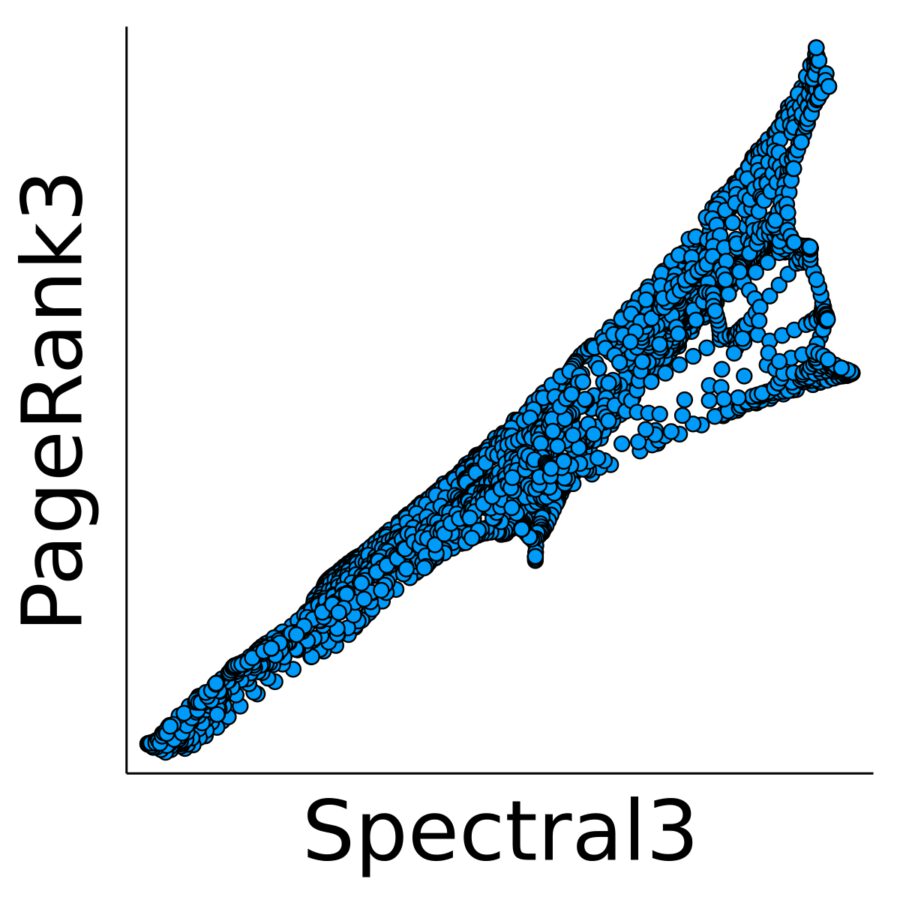}
    \caption{$\!\vu_2\!$ vs $\!\vz_2\!$ / $\!\vu_3\!$ vs $\!\vz_3\!$}
    \end{subfigure}
}

\smallskip
\hrule 
\smallskip
 
\centering
    \parbox{0.3\linewidth}{%
    \parbox{0.9\linewidth}{\caption{Comparison of embeddings for the Tapir graph.}
    \label{fig:tapir}}
    \begin{subfigure}{\linewidth}
    \includegraphics[width=\linewidth]{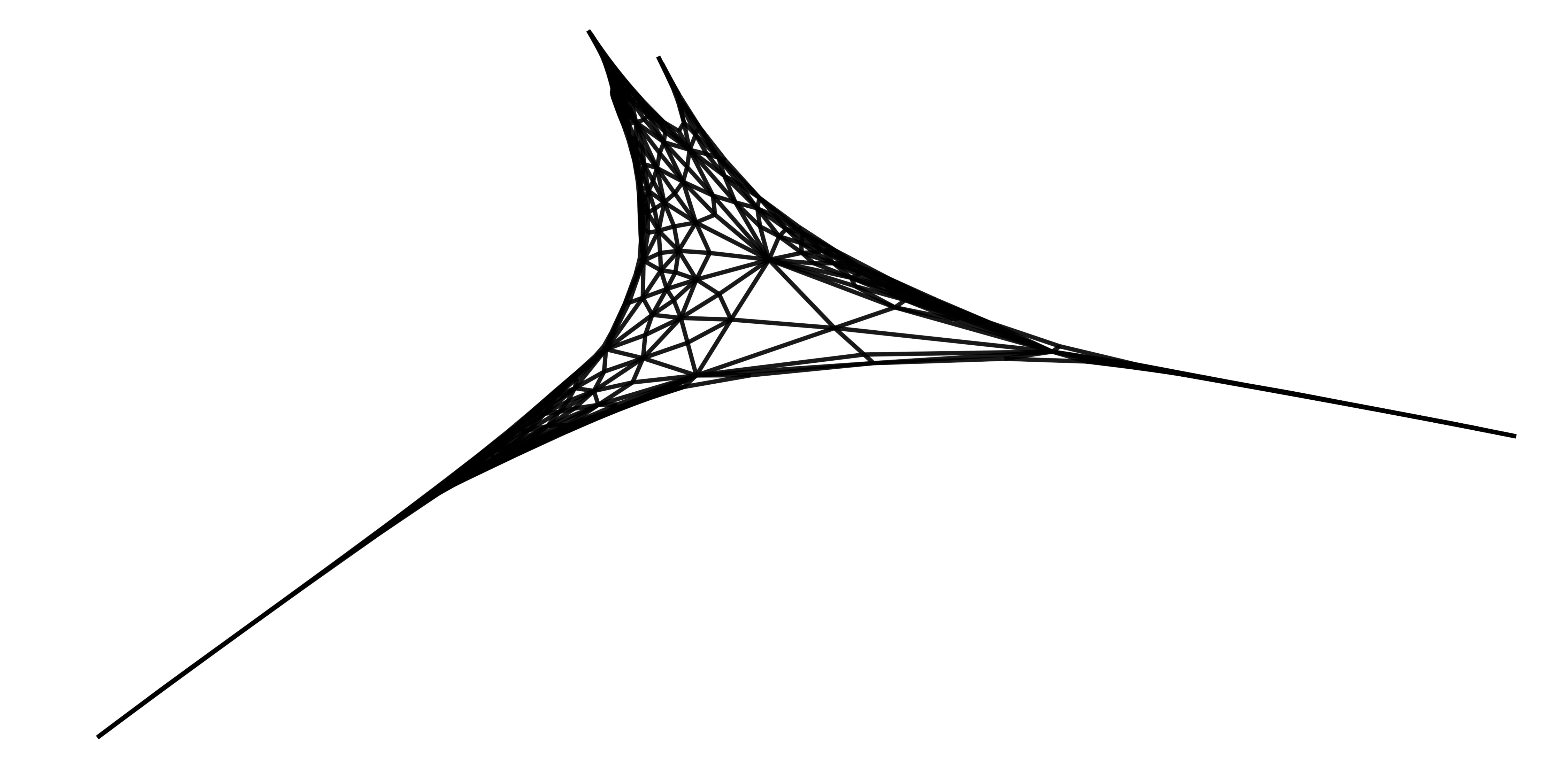}
    \caption{Spectral Embedding }
    \end{subfigure}%
    }%
    \parbox{0.7\linewidth}{%
    \smash{\raisebox{-18pt}{\llap{\rotatebox{90}{\footnotesize$\alpha=0.99$}}}}%
    \begin{subfigure}{0.33\linewidth}
    \includegraphics[width=\linewidth]{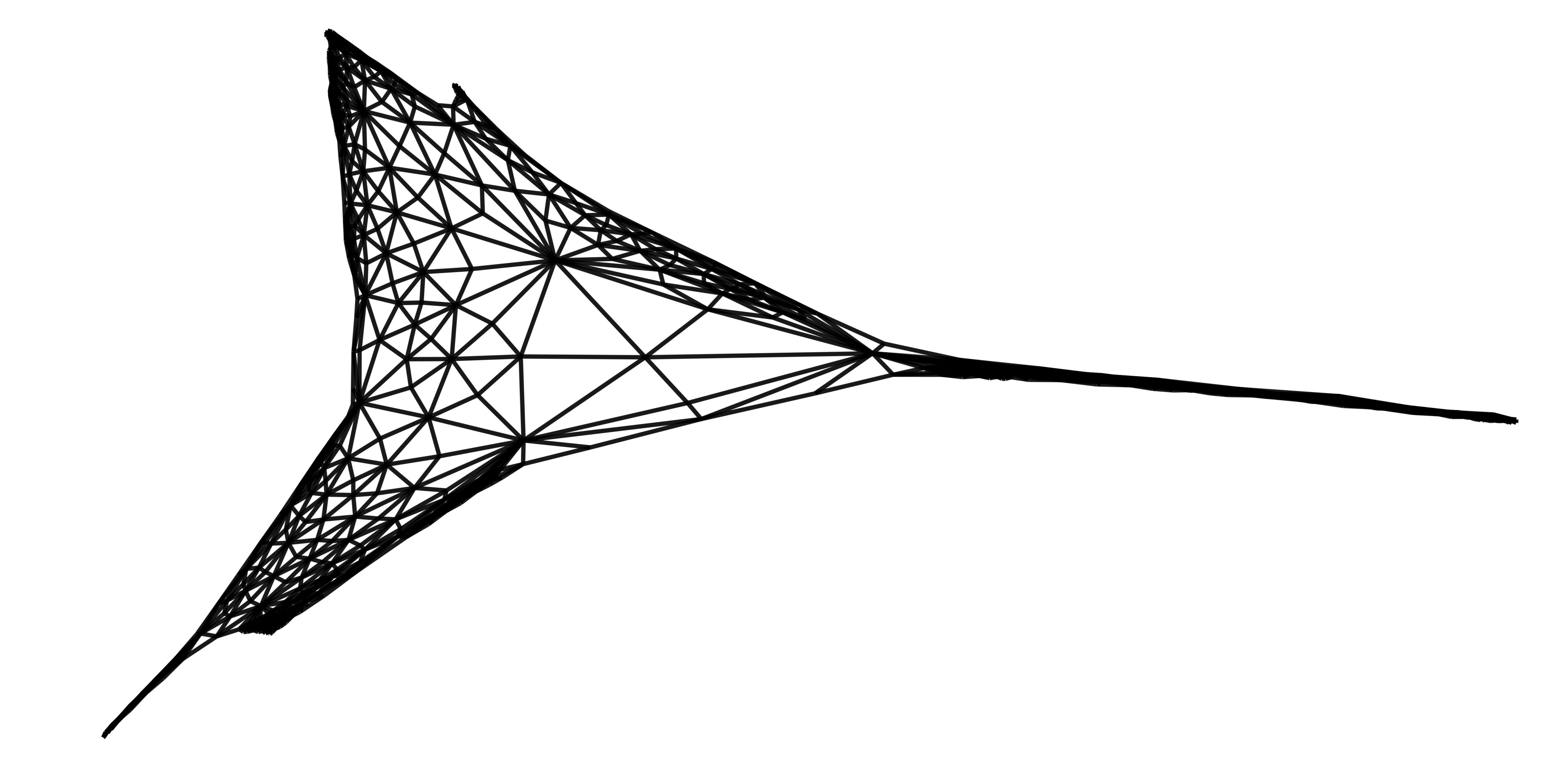}
    \caption{Log-PageRank}
    \end{subfigure}%
    \begin{subfigure}{0.33\linewidth}
    \includegraphics[width=\linewidth]{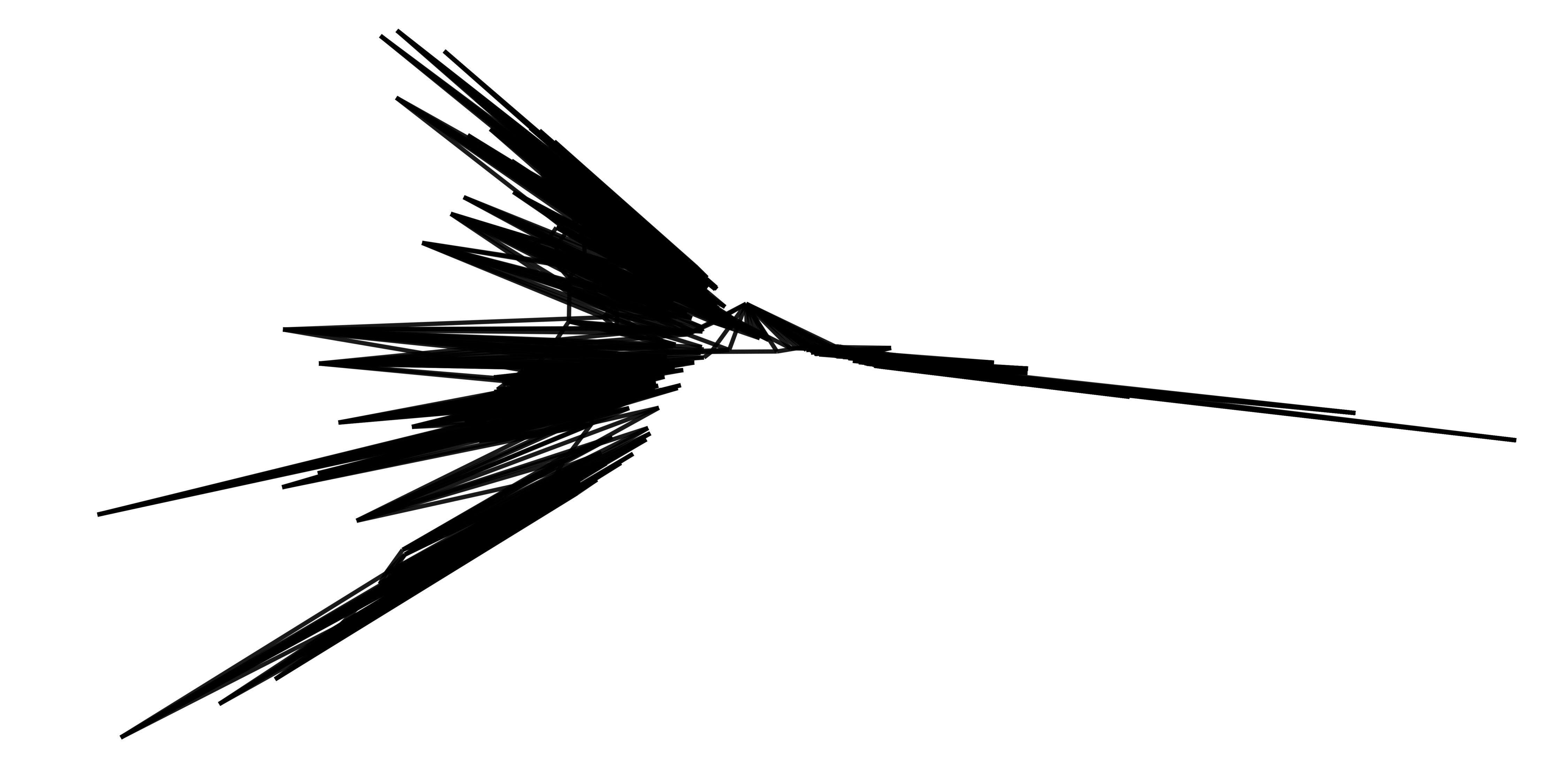}
    \caption{PageRank}
    \end{subfigure}%
    \begin{subfigure}{0.33\linewidth}
    \includegraphics[width=0.5\linewidth]{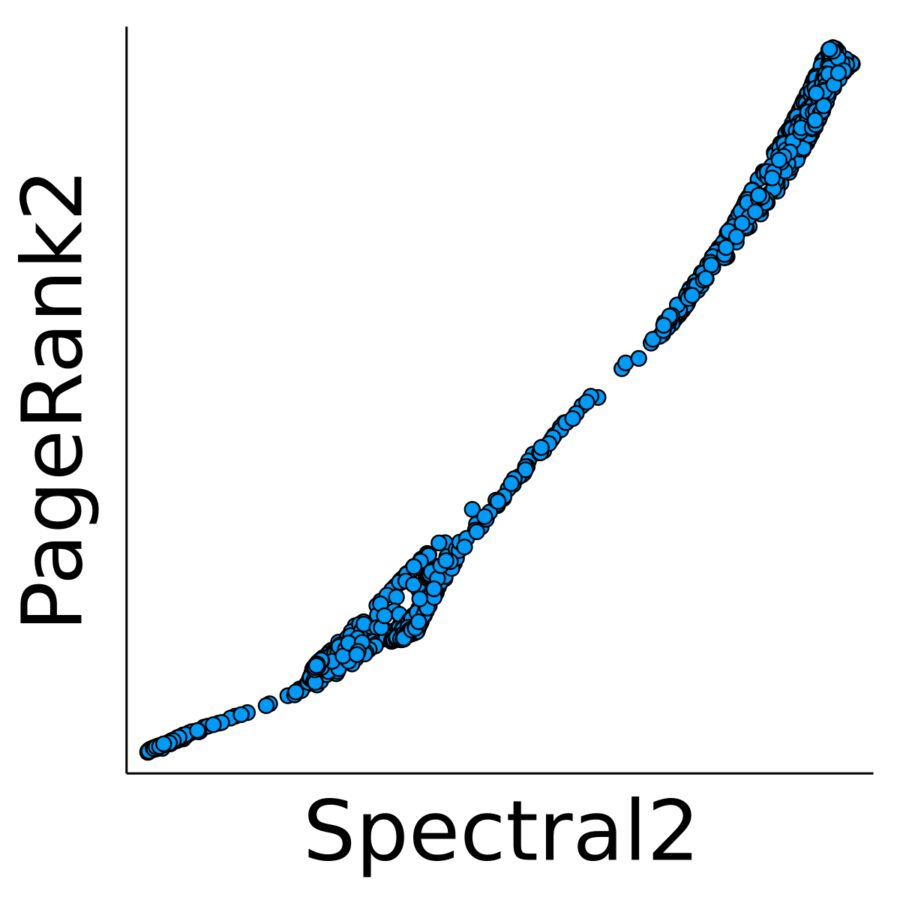}%
    \includegraphics[width=0.5\linewidth]{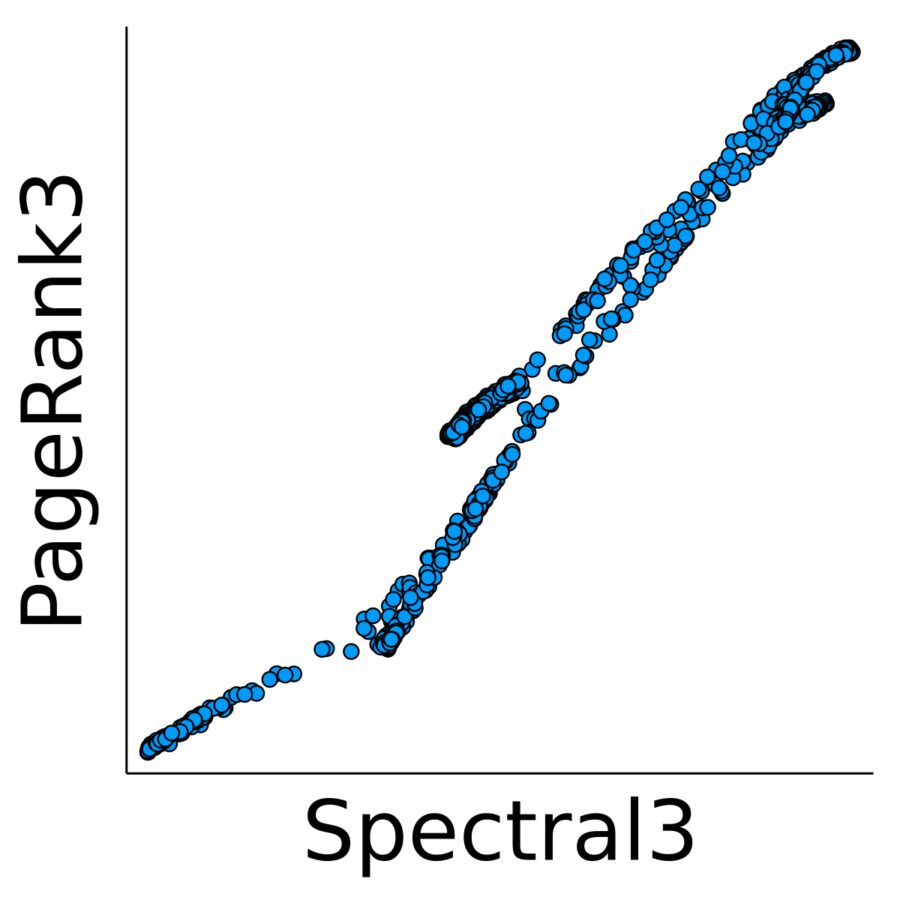}
    \caption{$\!\vu_2\!$ vs $\!\vz_2\!$ / $\!\vu_3\!$ vs $\!\vz_3\!$}
    \end{subfigure}
\centering

\bigskip 
    \smash{\raisebox{-24pt}{\llap{\rotatebox{90}{\footnotesize$\alpha=0.9999$}}}}%
    \begin{subfigure}{0.33\linewidth}
    \includegraphics[width=\linewidth]{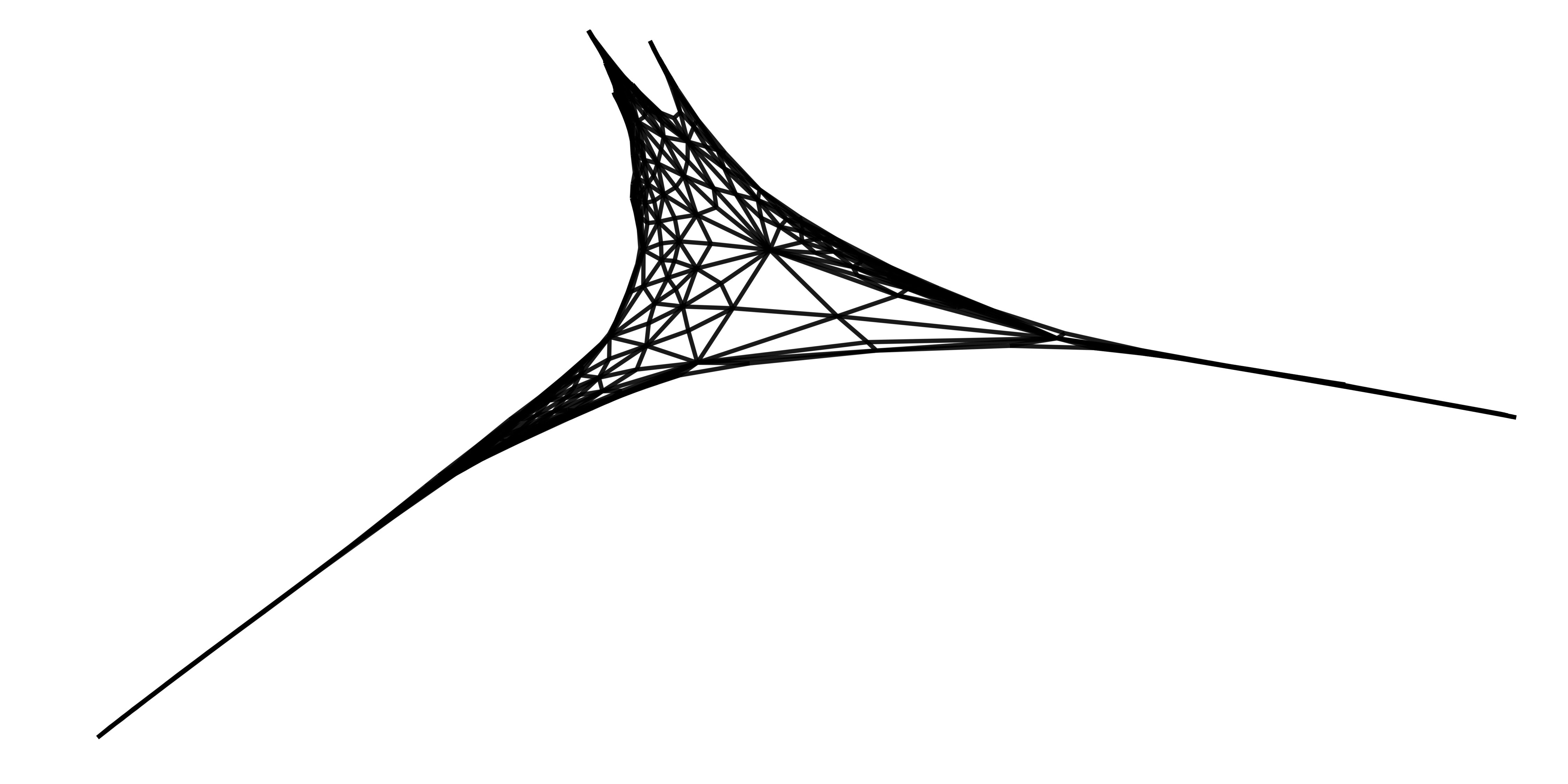}
    \caption{Log-PageRank}
    \end{subfigure}%
    \begin{subfigure}{0.33\linewidth}
    \includegraphics[width=\linewidth]{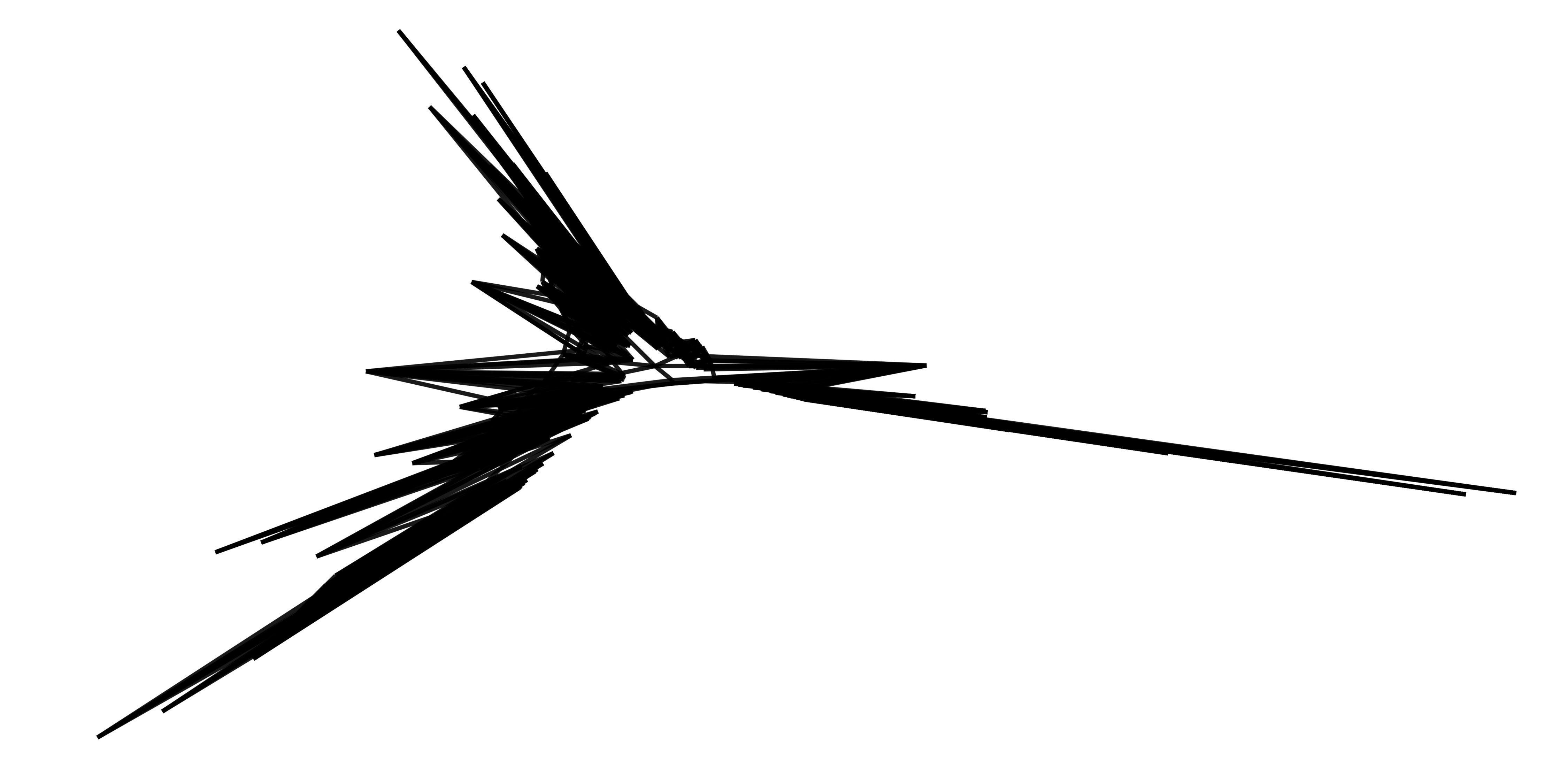}
    \caption{PageRank}
    \end{subfigure}%
    \begin{subfigure}{0.33\linewidth}
    \includegraphics[width=0.5\linewidth]{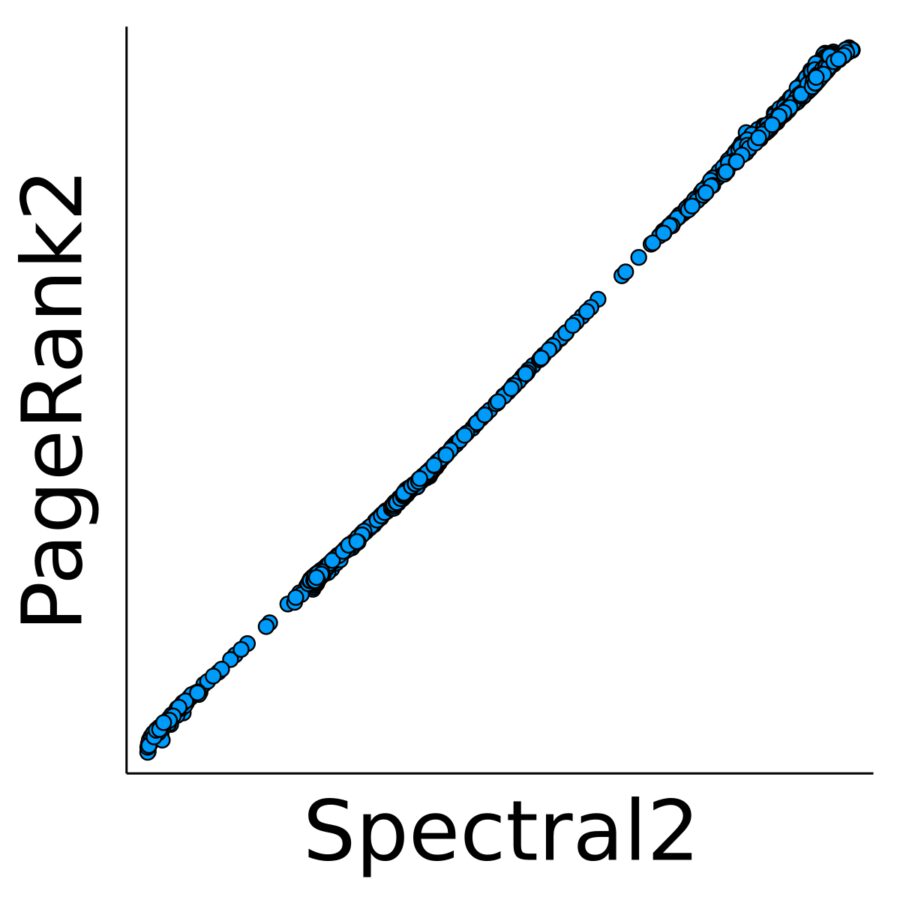}%
    \includegraphics[width=0.5\linewidth]{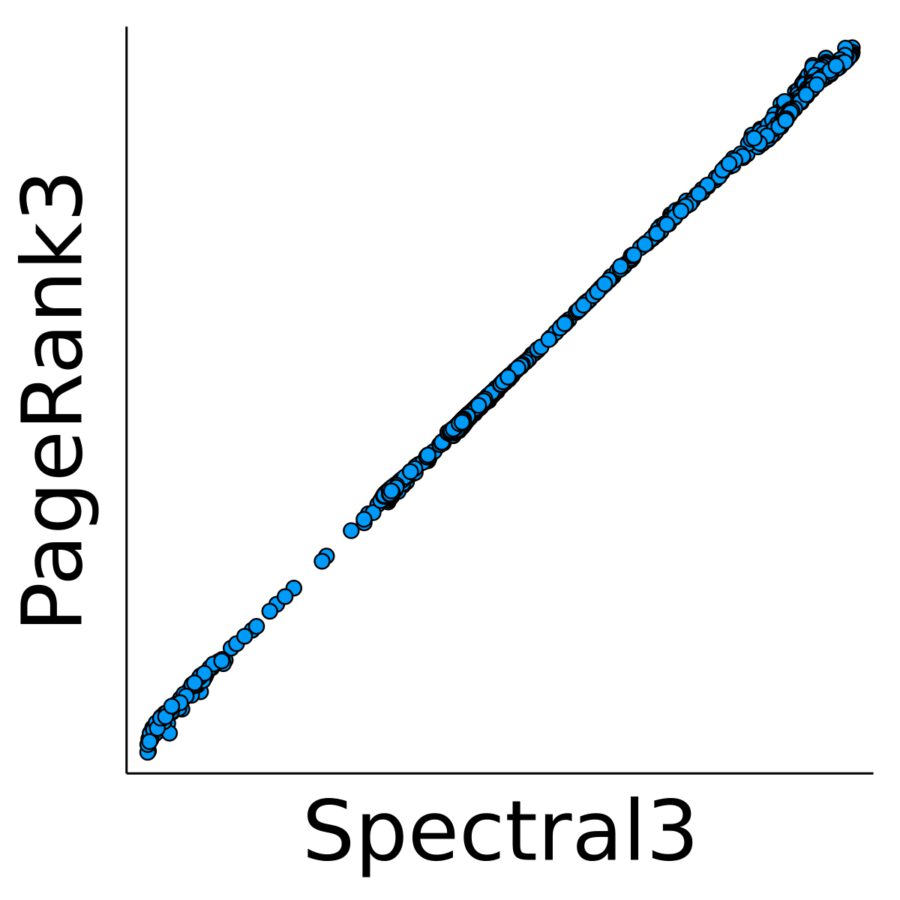}
    \caption{$\!\vu_2\!$ vs $\!\vz_2\!$ / $\!\vu_3\!$ vs $\!\vz_3\!$}
    \end{subfigure}
}

\bigskip 
\hrule 
\smallskip
\centering
    \parbox{0.3\linewidth}{%
    \parbox{0.95\linewidth}{\caption{Embedding for 10000 node graph with 6 nearest neighbours. Note that log-PageRank and spectral embeddings are similar after a rotation. This is expected because the eigenvalues are close in magnitude.}
    \label{fig:10knn}}
    \begin{subfigure}{\linewidth}
    \includegraphics[width=\linewidth]{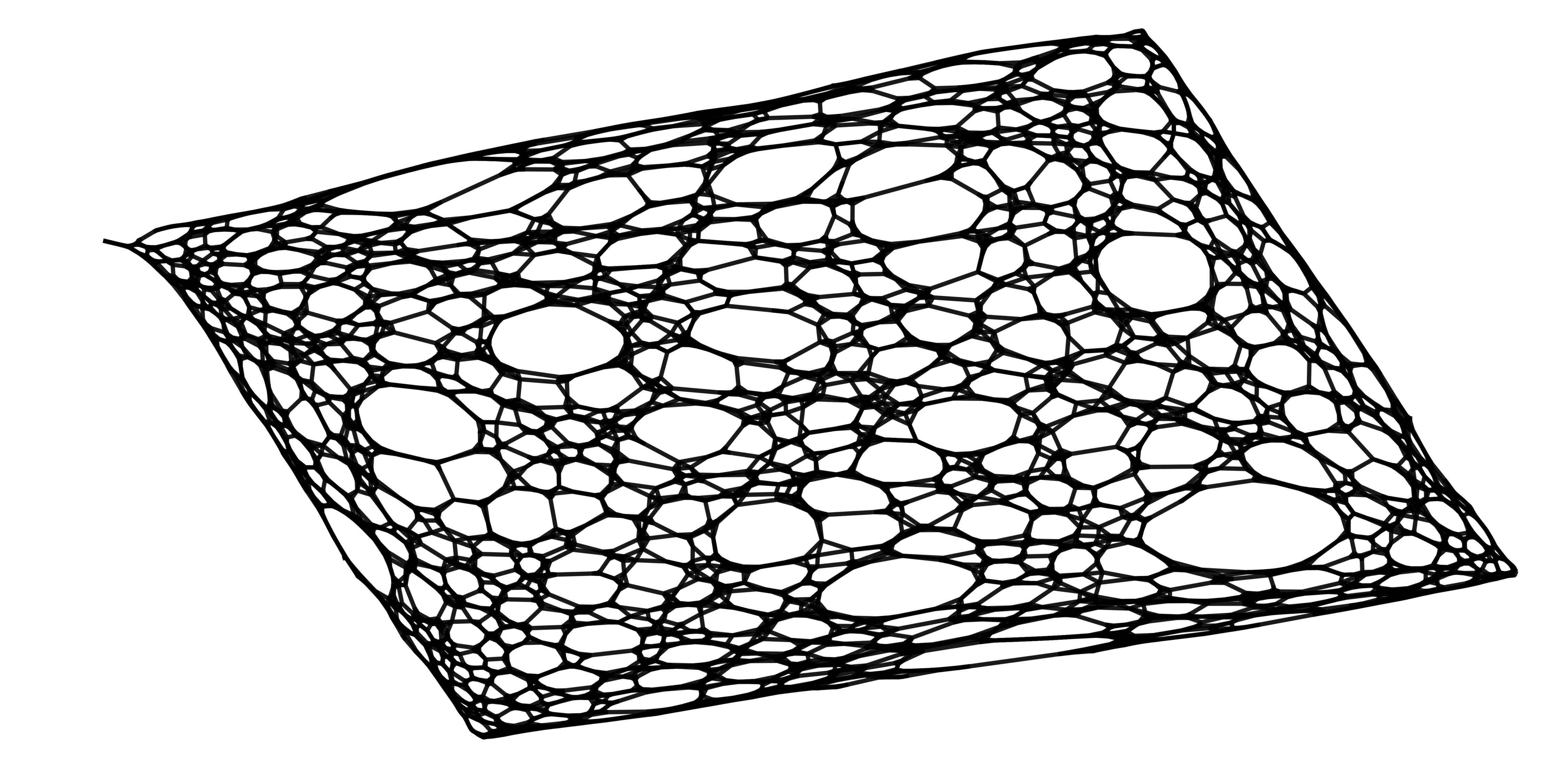}
    \caption{Spectral Embedding }
    \end{subfigure}%
    }%
    \parbox{0.7\linewidth}{%
    \smash{\raisebox{-18pt}{\llap{\rotatebox{90}{\footnotesize$\alpha=0.99$}}}}%
    \begin{subfigure}{0.33\linewidth}
    \includegraphics[width=\linewidth]{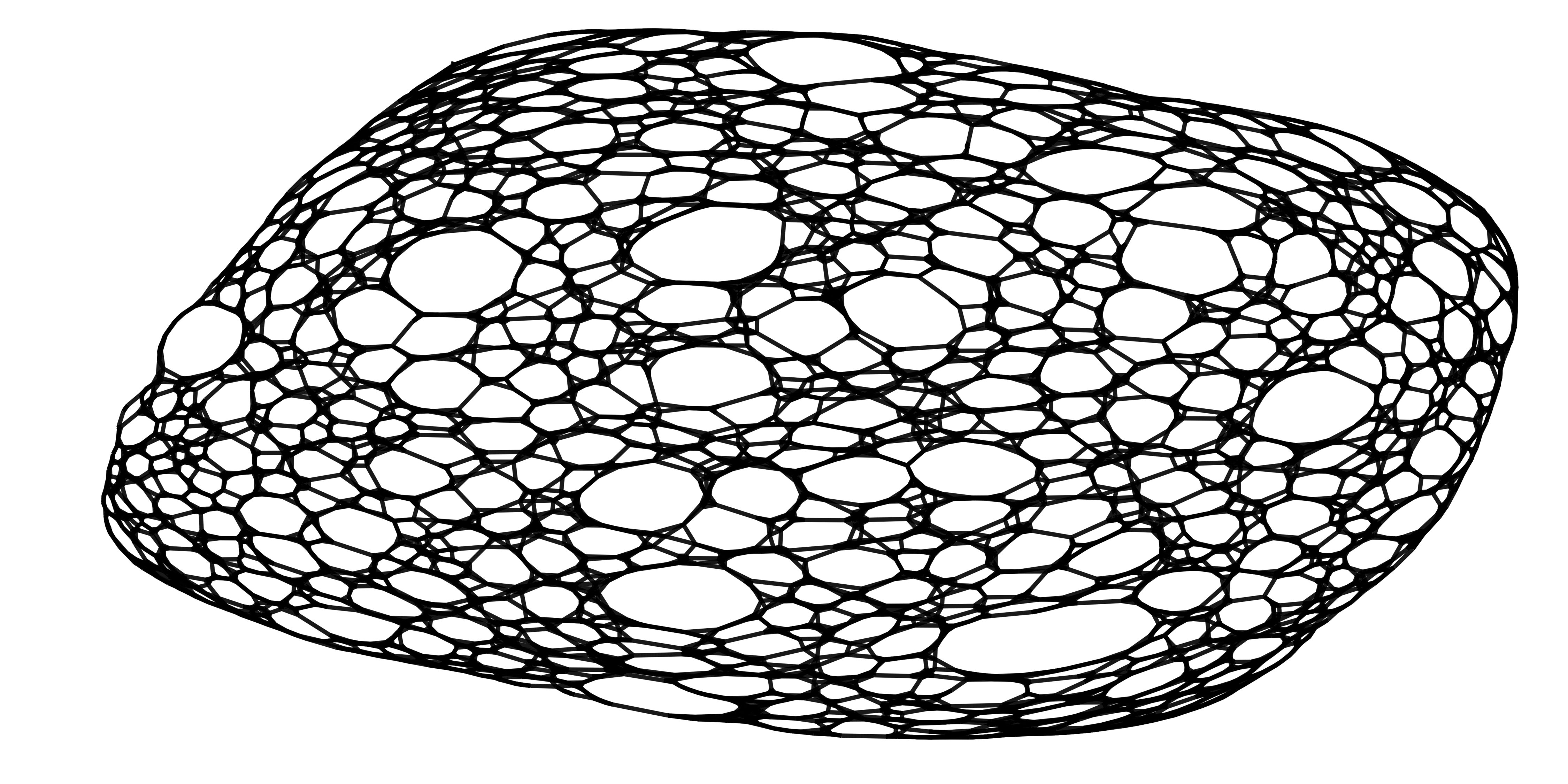}
    \caption{Log-PageRank}
    \end{subfigure}%
    \begin{subfigure}{0.33\linewidth}
    \includegraphics[width=\linewidth]{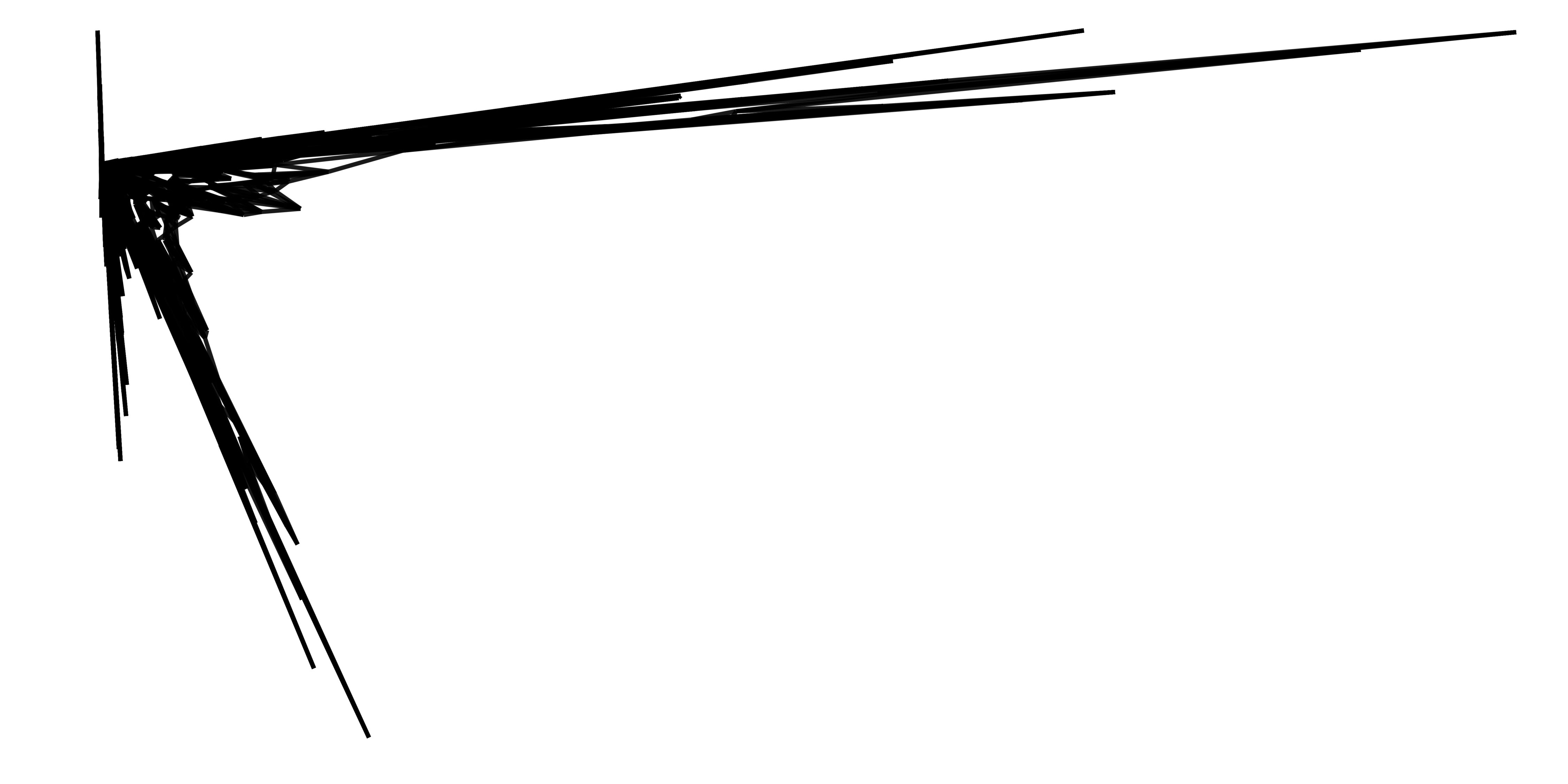}
    \caption{PageRank}
    \end{subfigure}%
    \begin{subfigure}{0.33\linewidth}
    \includegraphics[width=0.5\linewidth]{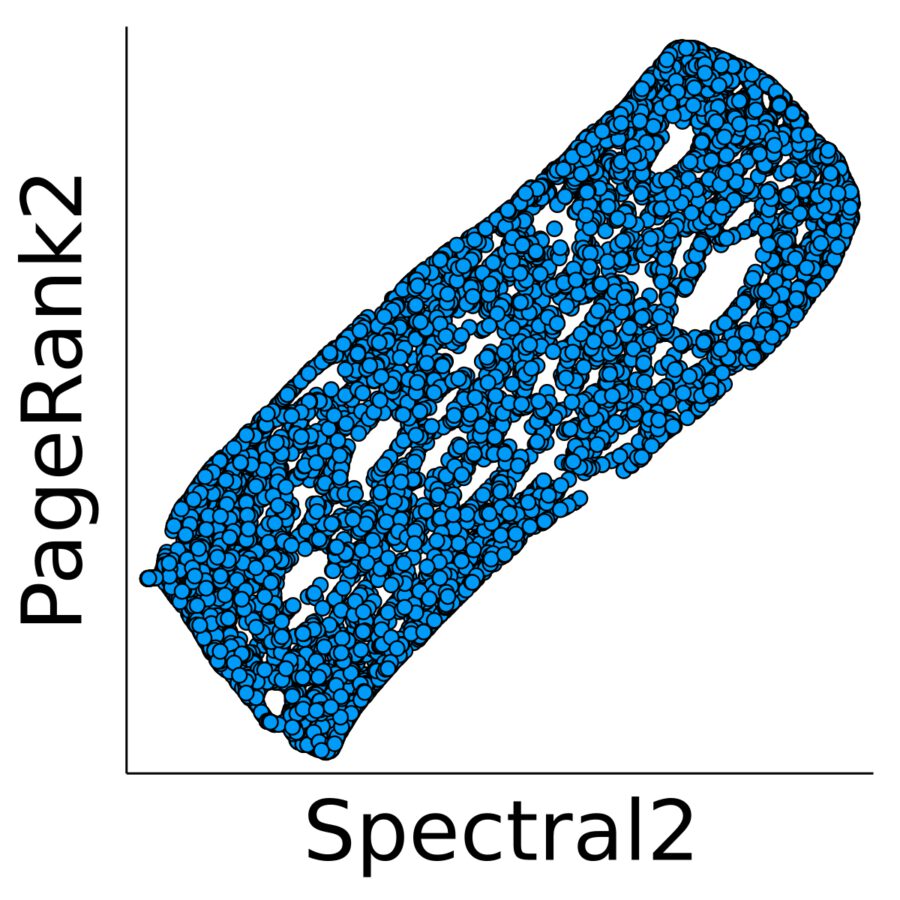}%
    \includegraphics[width=0.5\linewidth]{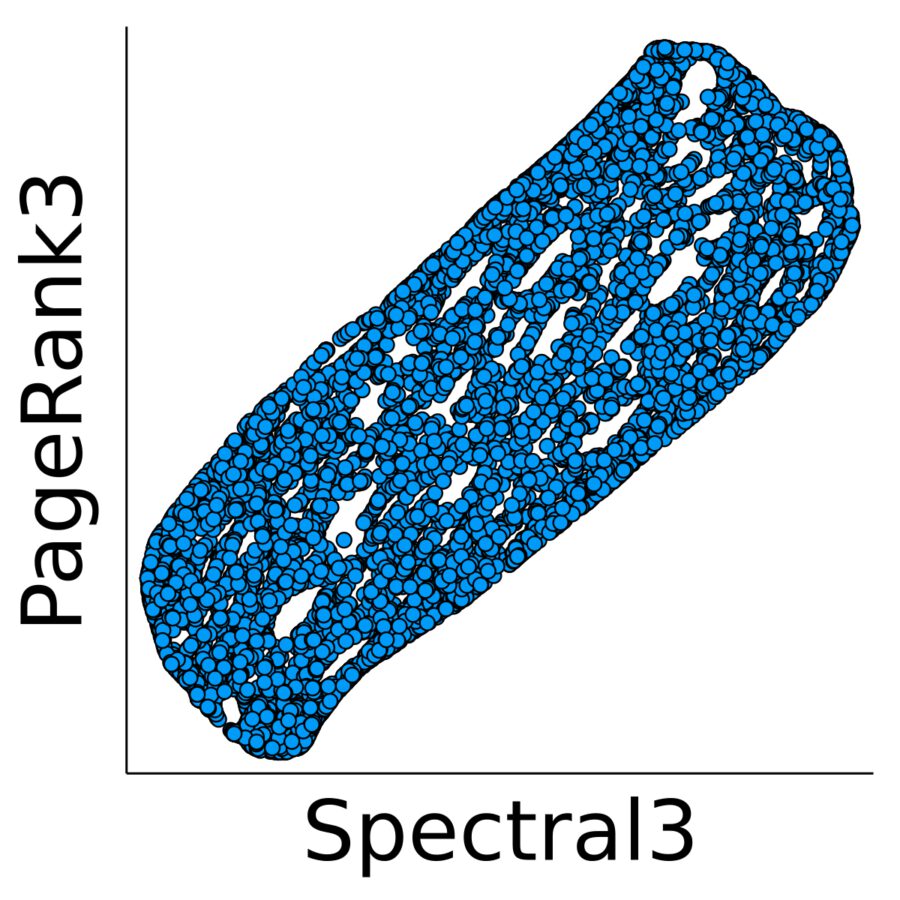}
    \caption{$\!\vu_2\!$ vs $\!\vz_2\!$ / $\!\vu_3\!$ vs $\!\vz_3\!$}
    \end{subfigure}
\centering

\bigskip 
    \smash{\raisebox{-24pt}{\llap{\rotatebox{90}{\footnotesize$\alpha=0.9999$}}}}%
    \begin{subfigure}{0.33\linewidth}
    \includegraphics[width=\linewidth]{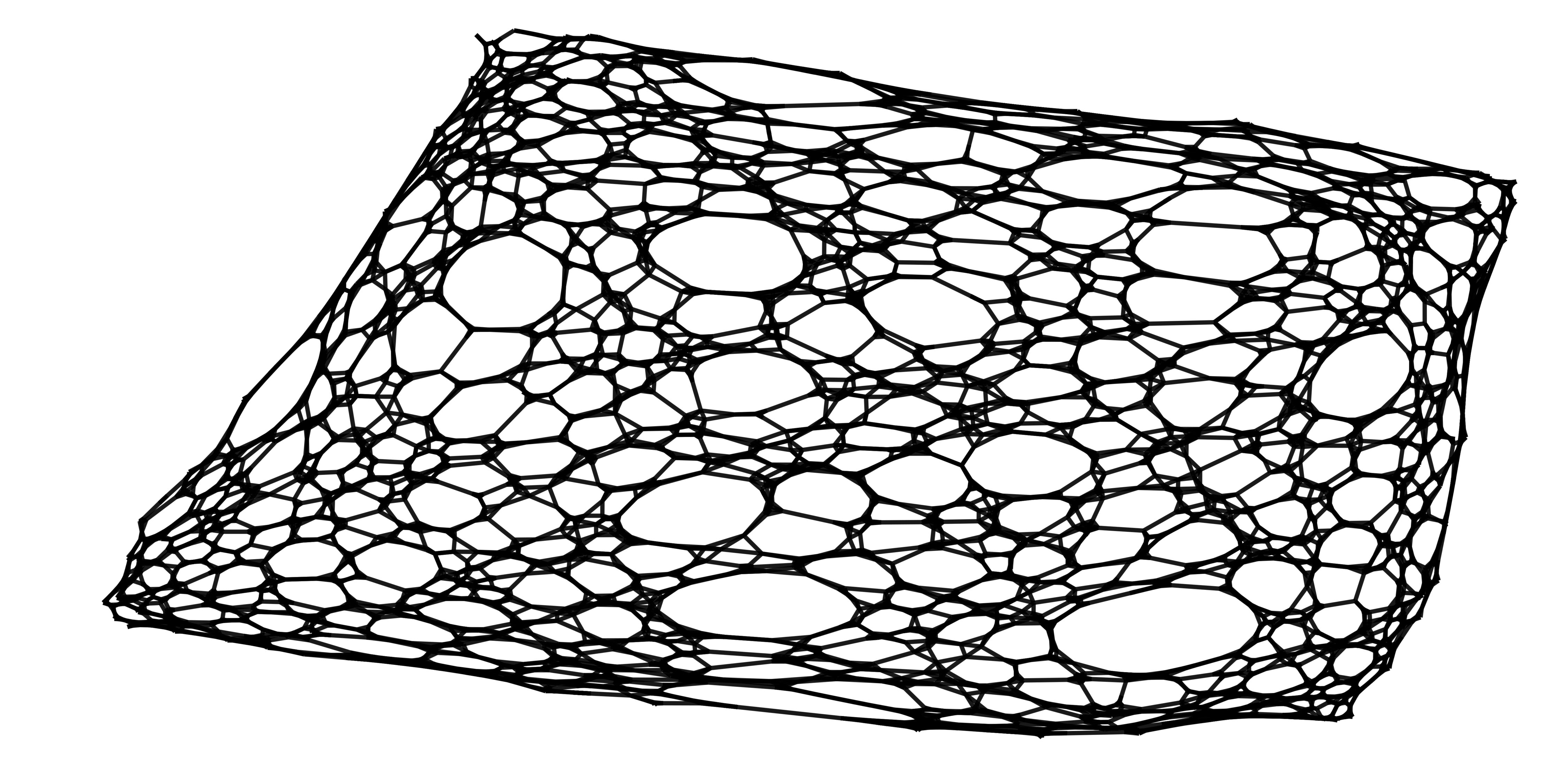}
    \caption{Log-PageRank}
    \end{subfigure}%
    \begin{subfigure}{0.33\linewidth}
    \includegraphics[width=\linewidth]{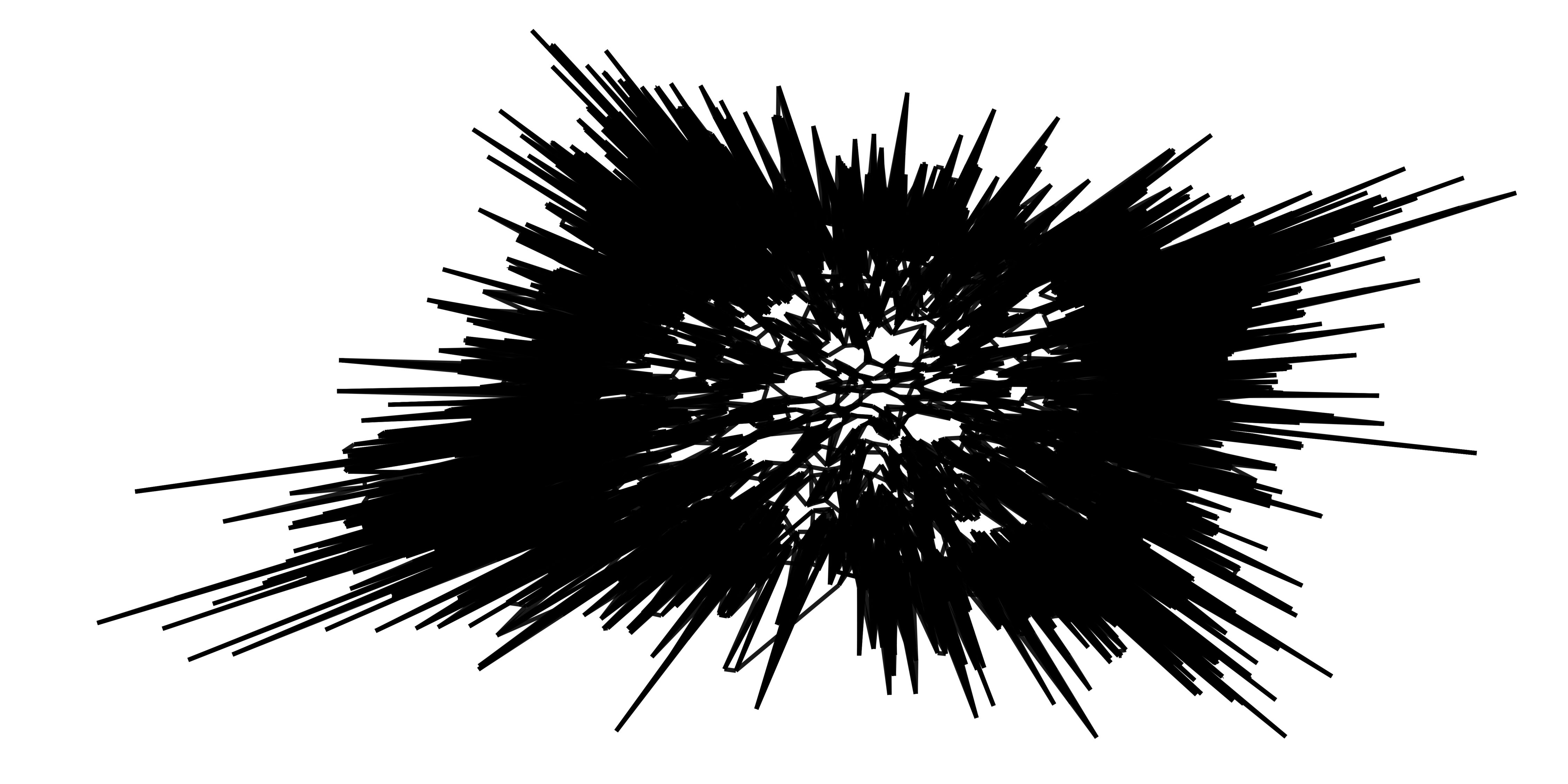}
    \caption{PageRank}
    \end{subfigure}%
    \begin{subfigure}{0.33\linewidth}
    \includegraphics[width=0.5\linewidth]{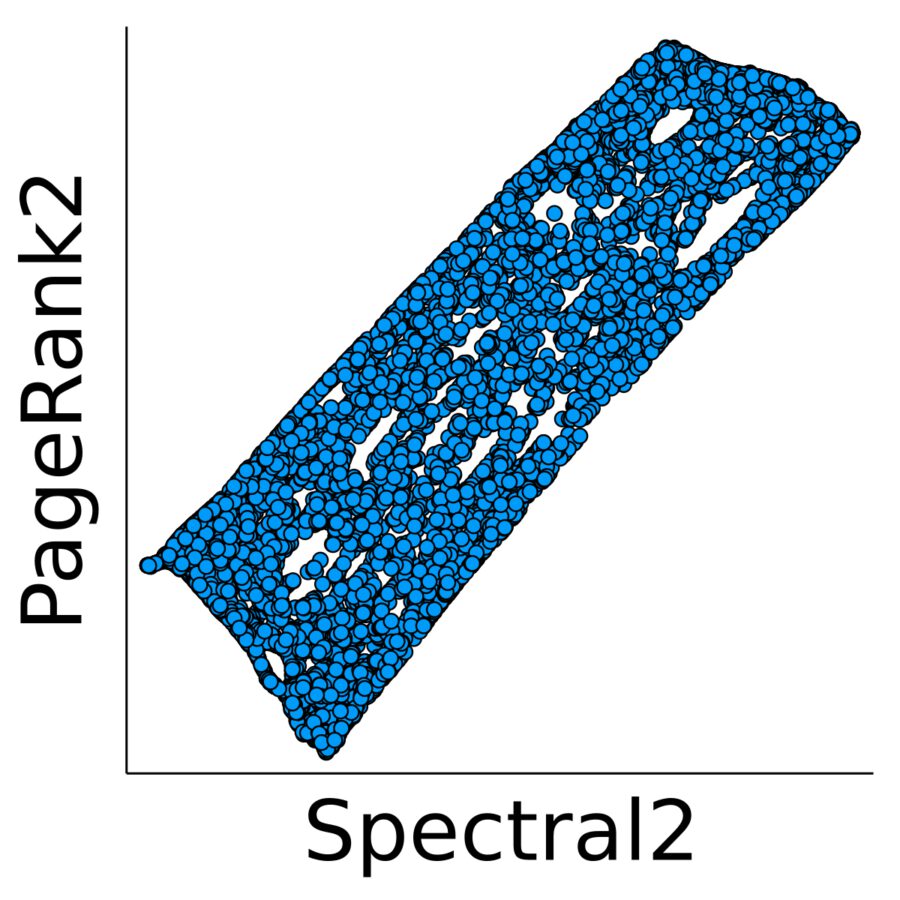}%
    \includegraphics[width=0.5\linewidth]{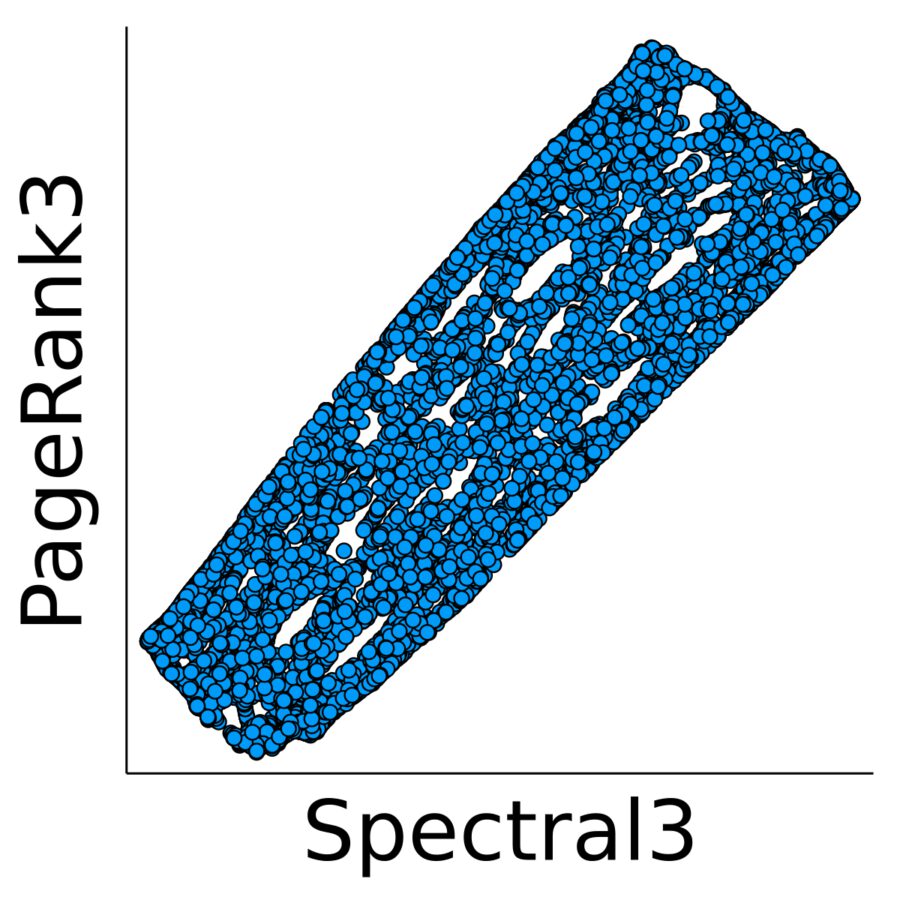}
    \caption{$\!\vu_2\!$ vs $\!\vz_2\!$ / $\!\vu_3\!$ vs $\!\vz_3\!$}
    \end{subfigure}
}

\smallskip
\hrule 
\smallskip
\centering
    \parbox{0.3\linewidth}{%
    \parbox{0.95\linewidth}{\caption{Comparison of the embedding techniques on the planted partition model with good conductance cuts. 
    This is one case where the technique does not seem to work. 
    }
    \label{fig:highercon}}
    \begin{subfigure}{\linewidth}
    \includegraphics[width=\linewidth]{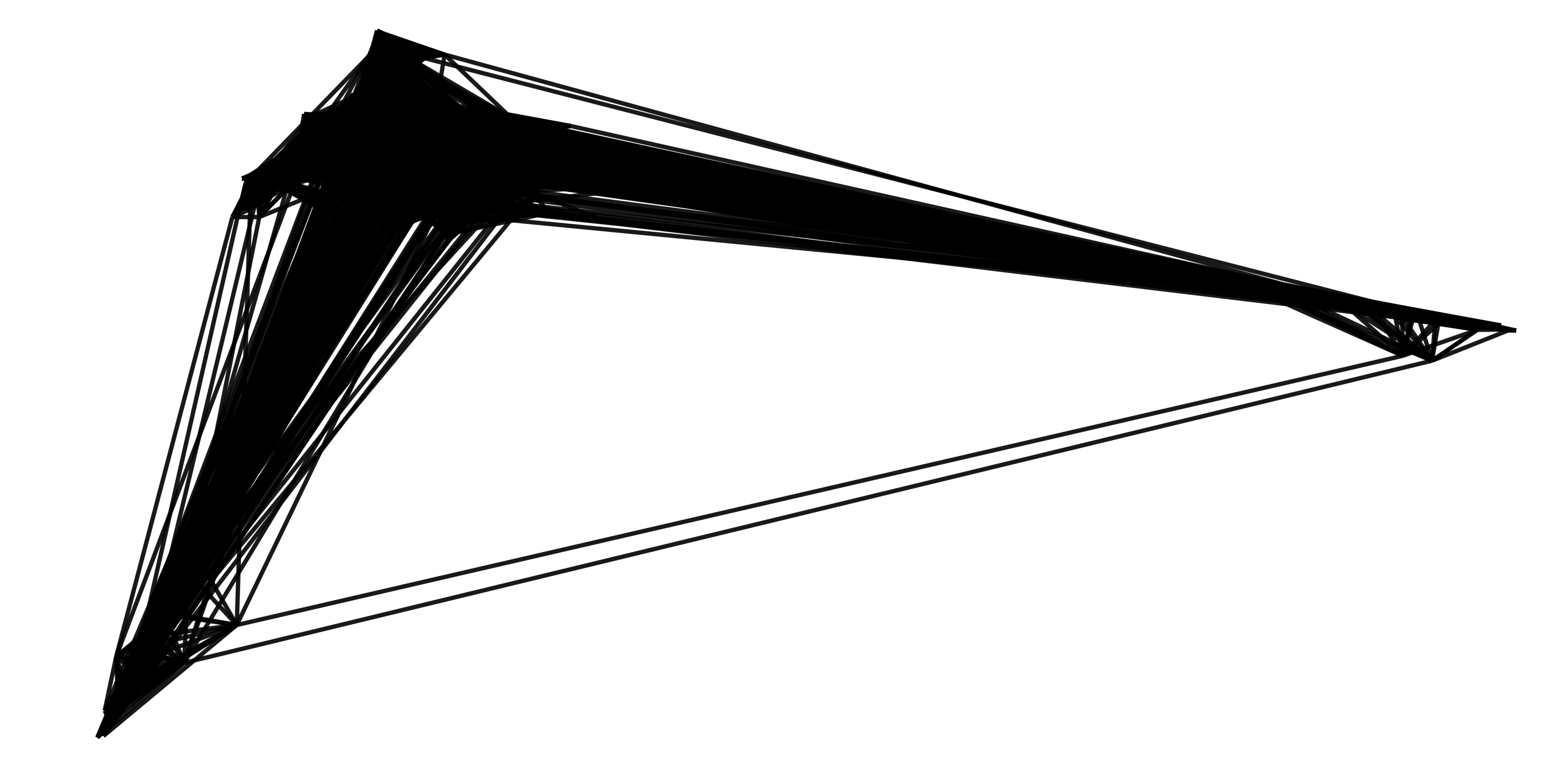}
    \caption{Spectral Embedding }
    \end{subfigure}%
    }%
    \parbox{0.7\linewidth}{%
    \smash{\raisebox{-18pt}{\llap{\rotatebox{90}{\footnotesize$\alpha=0.99$}}}}%
    \begin{subfigure}{0.33\linewidth}
    \includegraphics[width=\linewidth]{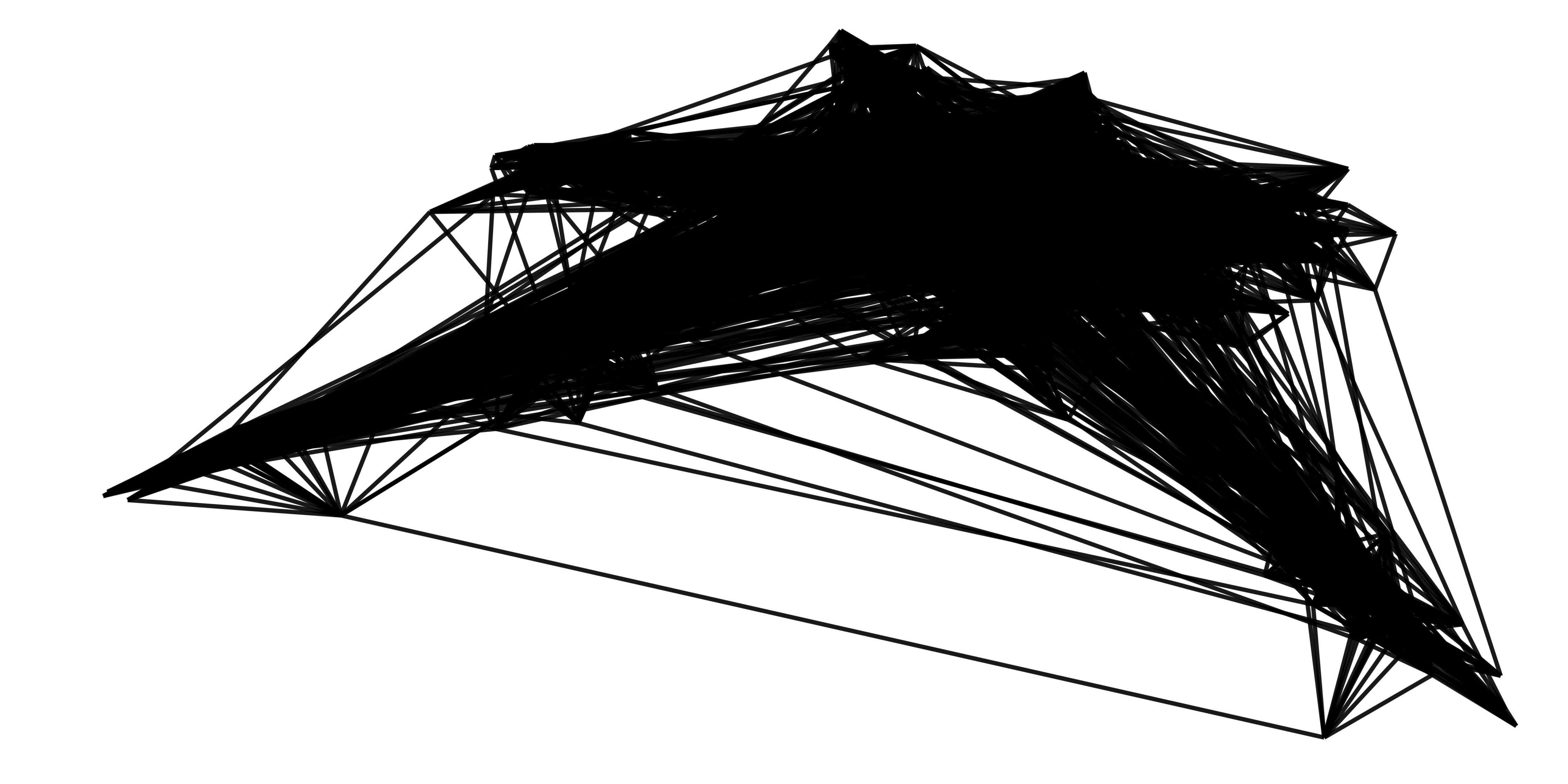}
    \caption{Log-PageRank}
    \end{subfigure}%
    \begin{subfigure}{0.33\linewidth}
    \includegraphics[width=\linewidth]{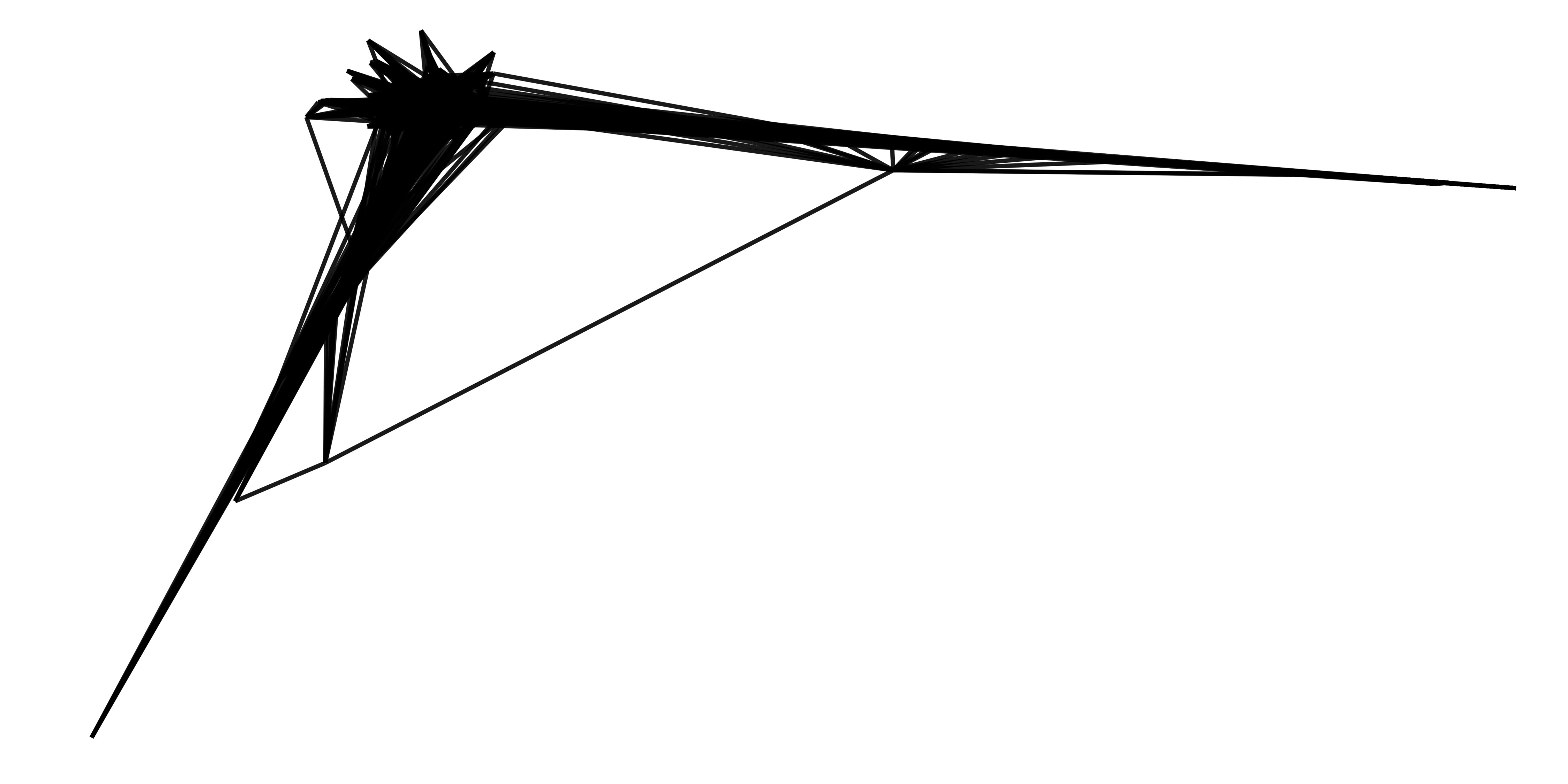}
    \caption{PageRank}
    \end{subfigure}%
    \begin{subfigure}{0.33\linewidth}
    \includegraphics[width=0.5\linewidth]{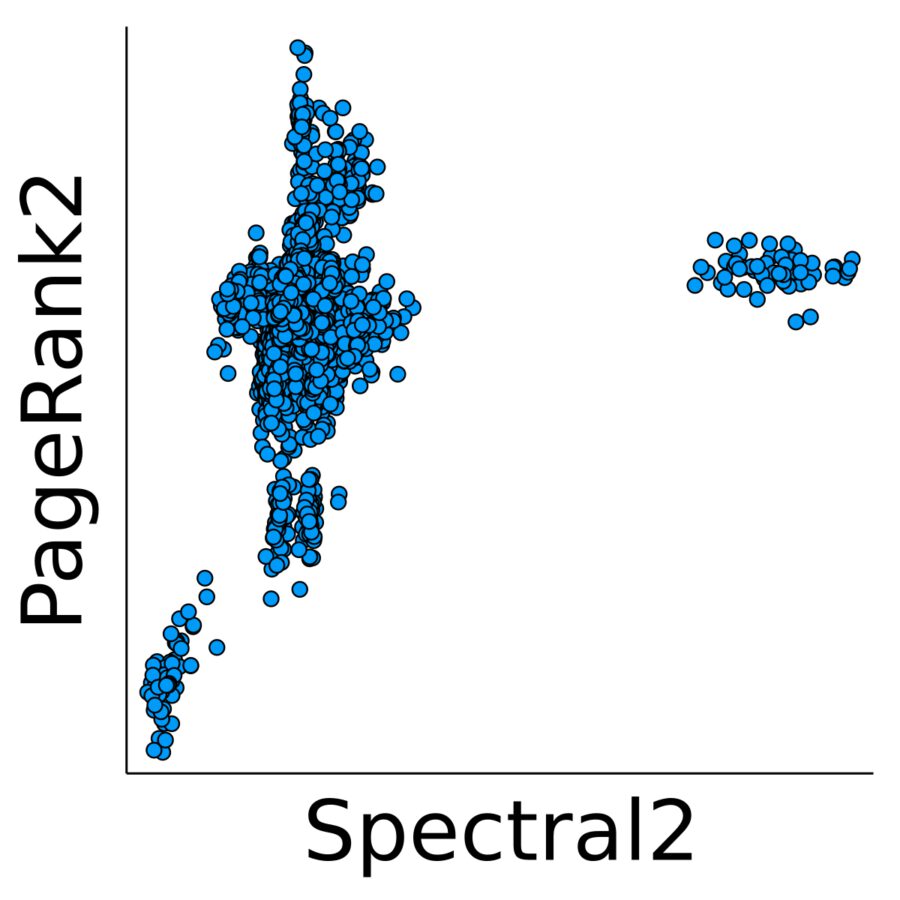}%
    \includegraphics[width=0.5\linewidth]{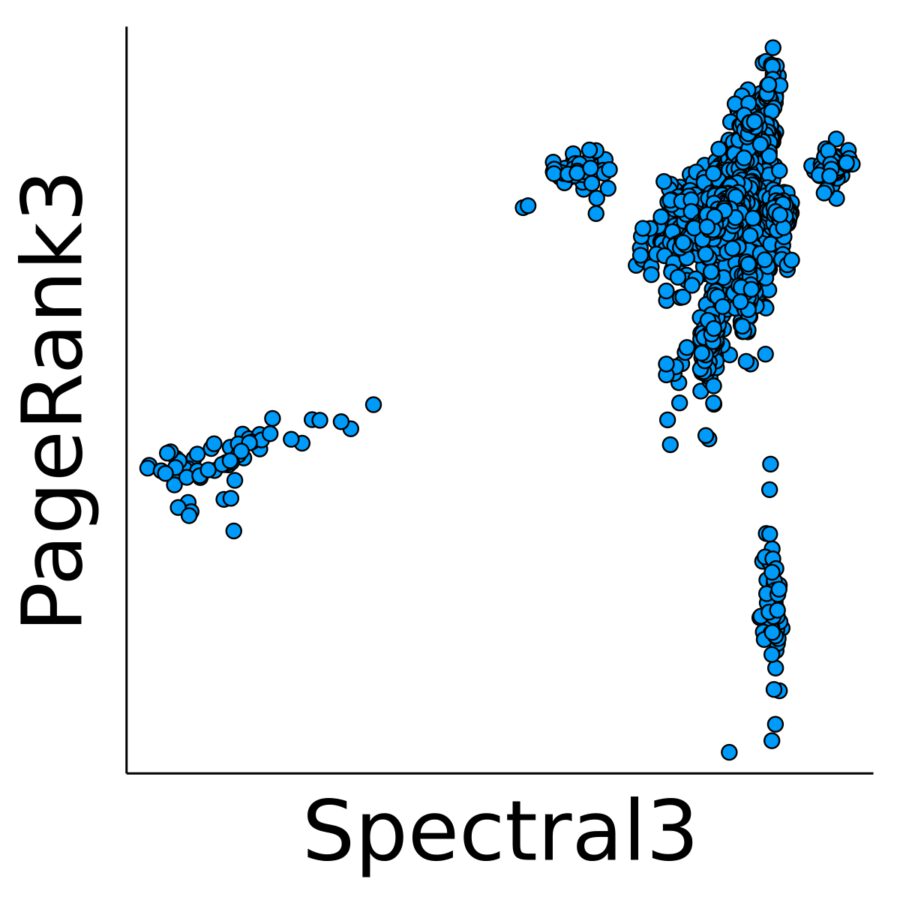}
    \caption{$\!\vu_2\!$ vs $\!\vz_2\!$ / $\!\vu_3\!$ vs $\!\vz_3\!$}
    \end{subfigure}
\centering

\bigskip 
    \smash{\raisebox{-24pt}{\llap{\rotatebox{90}{\footnotesize$\alpha=0.9999$}}}}%
    \begin{subfigure}{0.33\linewidth}
    \includegraphics[width=\linewidth]{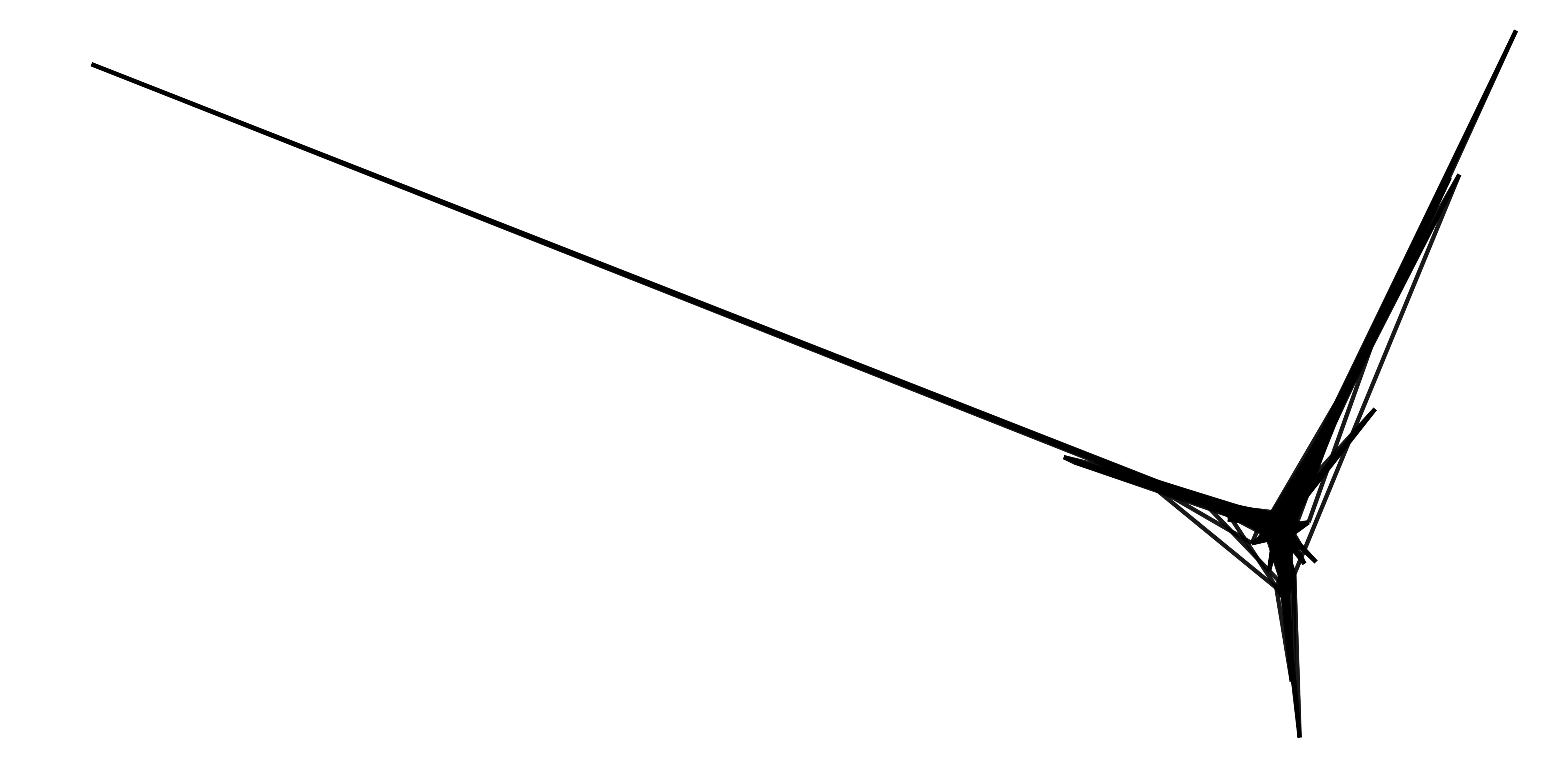}
    \caption{Log-PageRank}
    \end{subfigure}%
    \begin{subfigure}{0.33\linewidth}
    \includegraphics[width=\linewidth]{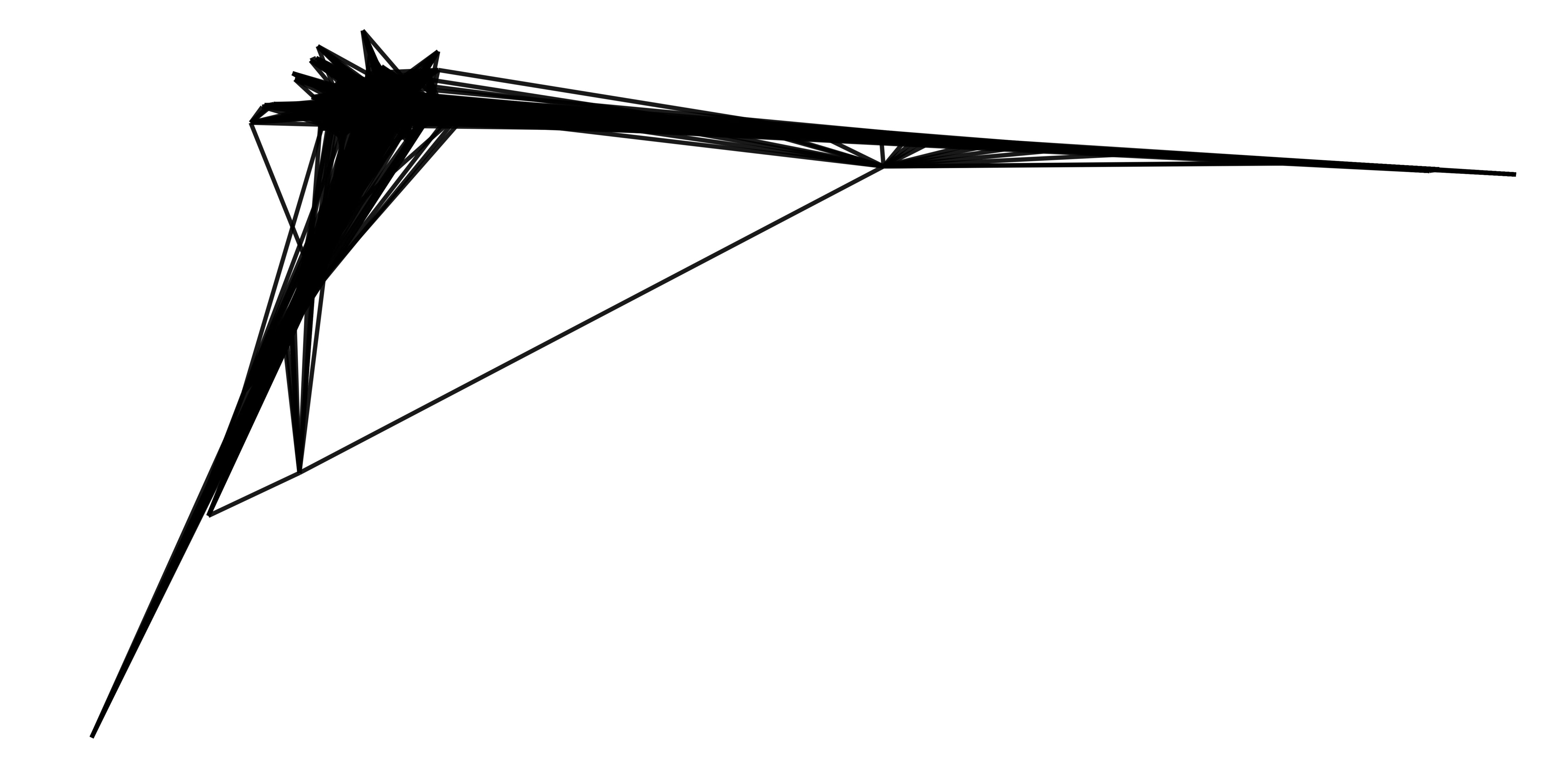}
    \caption{PageRank}
    \end{subfigure}%
    \begin{subfigure}{0.33\linewidth}
    \includegraphics[width=0.5\linewidth]{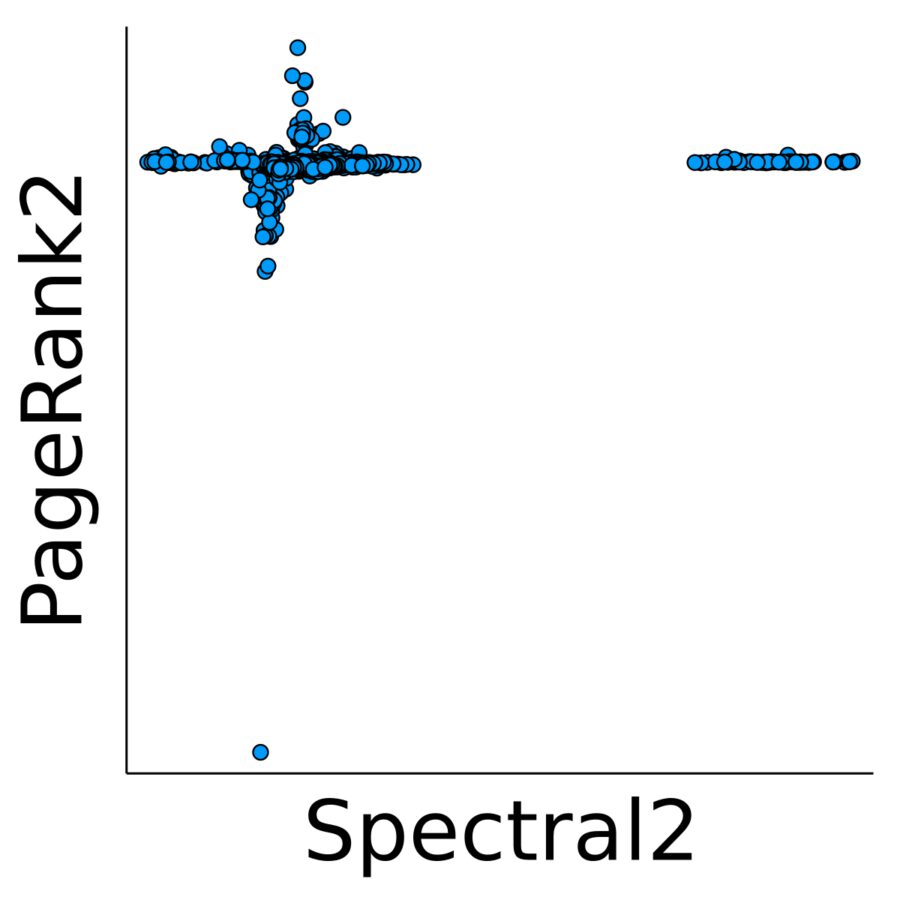}%
    \includegraphics[width=0.5\linewidth]{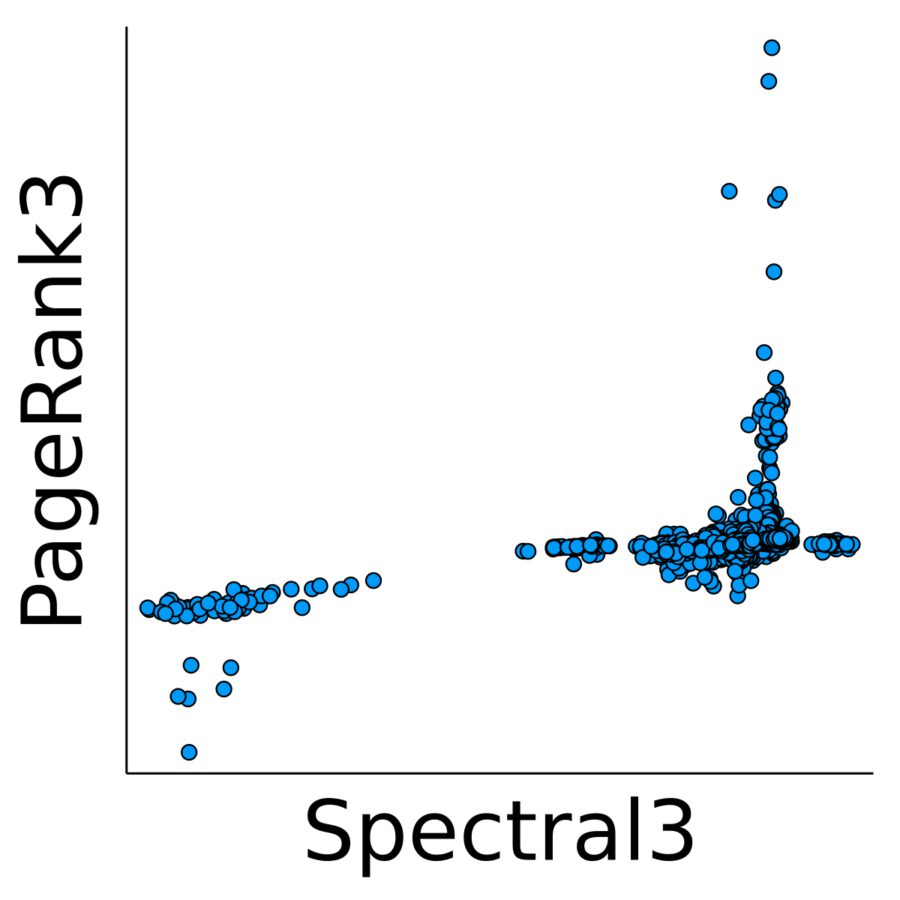}
    \caption{$\!\vu_2\!$ vs $\!\vz_2\!$ / $\!\vu_3\!$ vs $\!\vz_3\!$}
    \end{subfigure}
}

\end{fullwidth}
\end{figure}

\subsection{Embedding error variance}

We studied the dependence of this error on the number of randomly sampled nodes and which sampled nodes. We record this for log-PageRank at $\alpha = 0.99$ in Figure~\ref{tab:err_var_log}. 
This shows the distribution of errors as a density estimate, along with the max/min values (small) and the median value (big). 

As expected, there are largely minimal effects. This occurs because the sensitivity of PageRank to the seed vector, $\mathbf{v}$, is a function of $\alpha$ \cite{pprbook}. 
\begin{align*}
   \frac{d \vx }{d \vv} = (1 - \alpha) (\mI - \alpha \mP)^{-1}  
\end{align*}
which satisfies $\|\frac{d \vx}{ d \vv}\|_1 = 1$.
Further, for $\alpha \rightarrow 1$, dependence of the PageRank values on the $\mathbf{v}$ reduces. Our experiments confirm the same as the minimum, maximum and variance of error over 50 trials show negligible change. 

\section{Hypergraph Embeddings}\label{sec:hypergraph}

One driving reason for our study of the log-PageRank embedding is to support similar embedding strategies for different types of data, such as those studied in~\cite{beginng}.
In this section, we use the log-PageRank embedding technique on five hypergraphs: Yelp (\url{https://www.yelp.com/dataset}), Walmart Trips \cite{Amburg-2020-categorical} , a contact tracing network \cite{contact1, contact2}, posts on Math Overflow \cite{mathoverflow}, and a Drug Abuse network (DAWN) \cite{Amburg-2020-categorical}. 
The only modification to our algorithm is that
we replace seeded PageRank in Algorithm~\ref{alg:logPR} with the Local Quadratic PageRank (LQPR), a method 
proposed in \cite{meng}. Specifically we use LQHD with a 2-norm penalty with $\rho=0.5$ for all experiments
For the Yelp and Walmart trips network, we set $\kappa = 0.000025$ and $\gamma = 1.0$ while for the Math Overflow network, with the same sparsity factor $\kappa = 0.000025$ and $\gamma = 0.001$. For Contact Primary School and DAWN, we set $\kappa = 0.0025$ and $\gamma = 0.001$. These choices were made arbitrarily, there are small differences that result when changing them. 

\enlargethispage{\baselineskip}
Figure~\ref{fig:contact-primary-school} shows our 2d embedding on Contact Primary School dataset where
each node represents a student or a teacher, each hyperedge represents a group of people who are spatially
close at a given time. Each node of the graph is colored as a teacher or as classroom for the student.
Note that each classroom forms a cohesive group in the plot. 
The nodes for the teachers are not a good 
cluster because different teachers go to different classrooms. 
Moreover, we observe that the students from the same grade, e.g. students 
colored red (1B) and dark green (1A), share some spatial proximity, which is due to the fact
that their classrooms are close.

\begin{fullwidthfigure}[t]
    \centering
    \begin{subfigure}[t]{0.49\linewidth}
        \centering 
        \includegraphics[width=\linewidth]{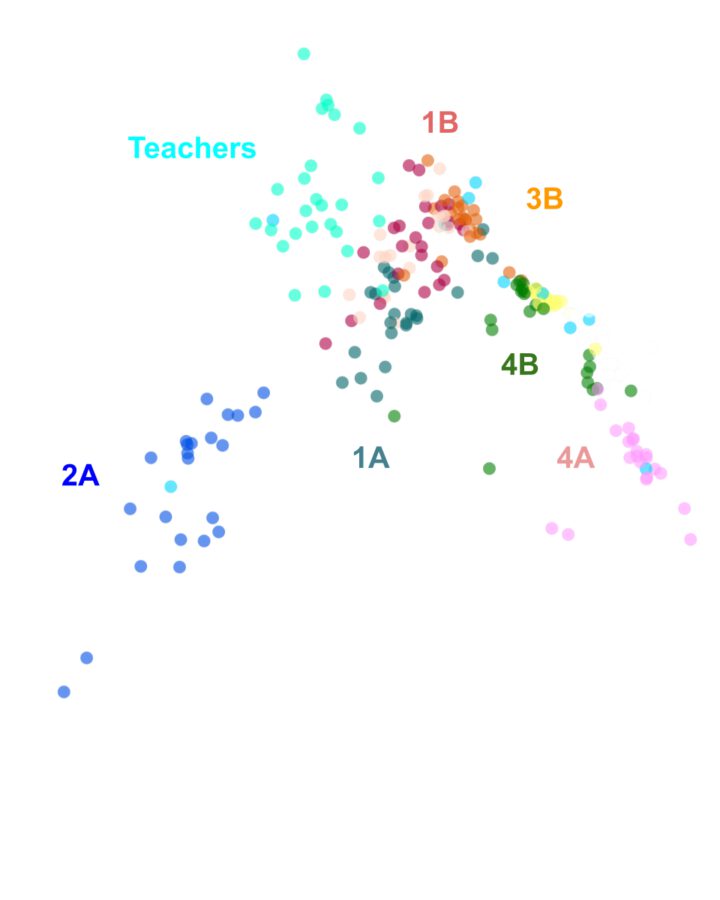}
        \caption{Contact Primary School}
        \sublabel{fig:contact-primary-school}
    \end{subfigure}
    \hfill
    \begin{subfigure}[t]{0.49\linewidth}
        \centering
        \includegraphics[width=\linewidth]{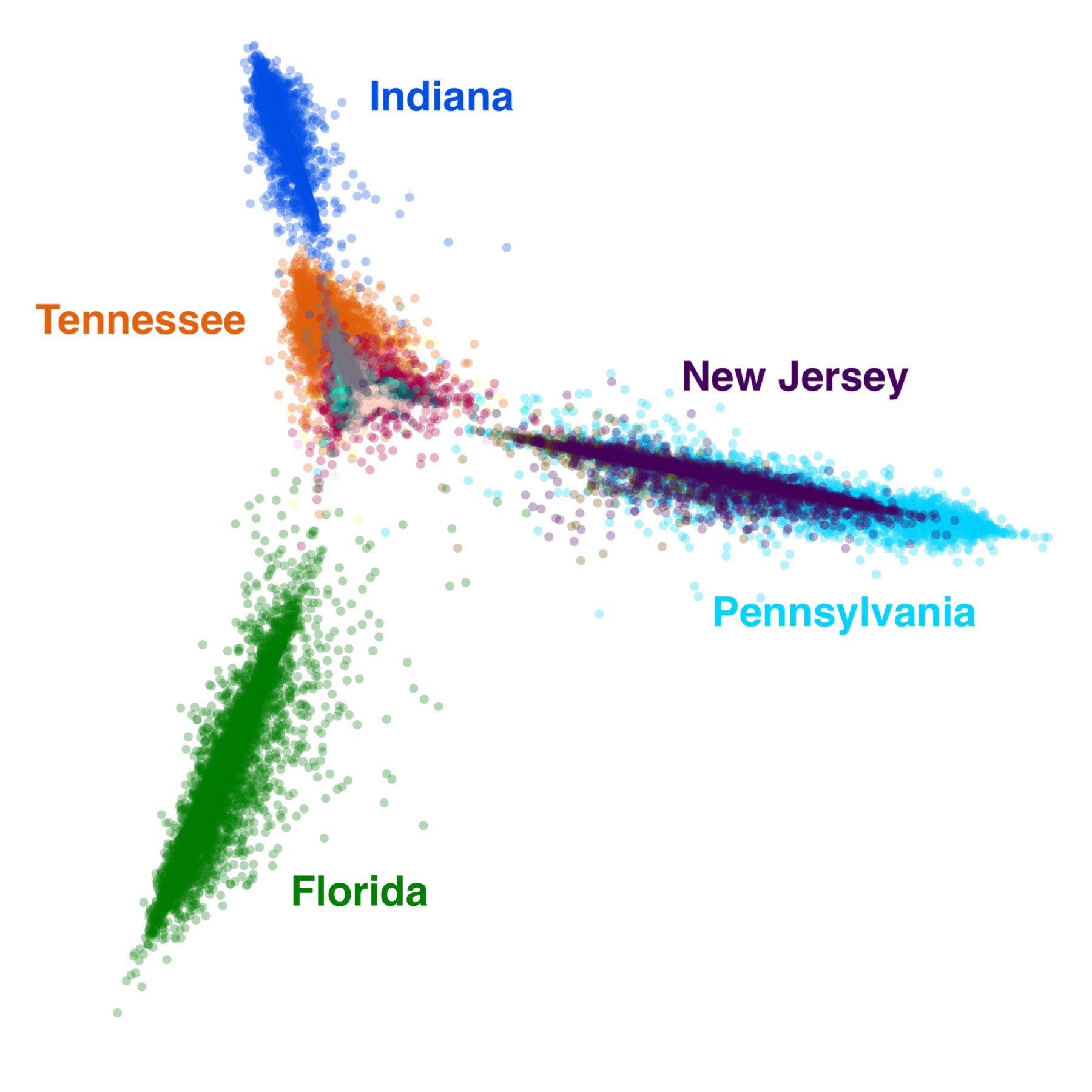}
        \caption{Yelp Restaurants\sublabel{fig:yelp}}
    \end{subfigure}
    \caption{Log-PageRank embedding of hypergraphs. The Contact Primary School dataset  has 242 nodes and 12704 hyperdges. Nodes are colored by classroom and teachers, which form cohesive groups due to the contact structure. The Yelp Restaurant dataset
    has 52260 nodes and 597261 hyperedges. Nodes are colored by one of 14 states used for analysis, which show clear geographic relationships.}
    \label{fig:hypergraph-geo}
\end{fullwidthfigure}

Figure~\ref{fig:yelp} shows our embedding on Yelp Review data. Following \citet{veldt20}, we build
one hypergraph with each restaurant being a node and each user being a hyperedge. We show the state associated with each location as the color. We can clearly see that our embedding captures the geographic information of the underlying hypergraph. For example, the nodes
labeled dark blue are those restaurants from state Indiana, which are close to the orange nodes
from state Tennessee. Also the green nodes from state Florida are quite well-separated from nodes
with other colors, which is due to the fact that none of other 13 states we plot is close 
to Florida.   

In addition, we show log-PageRank embeddings of three other hypergraphs in Figure~\ref{fig:dawn}, \ref{fig:walmart} and \ref{fig:mathoverflow}. We were unable to identify obvious relationships between these embeddings and the existing groups, which means the embeddings likely show a different type of structure.  
The promising results on all the datasets above show that our simple algorithm is capable of generating good embeddings even on higher order graphs.

\begin{fullwidthfigure}[t]
\centering
    \begin{subfigure}[t]{.3\linewidth}
        \centering
        \includegraphics[width=\linewidth]{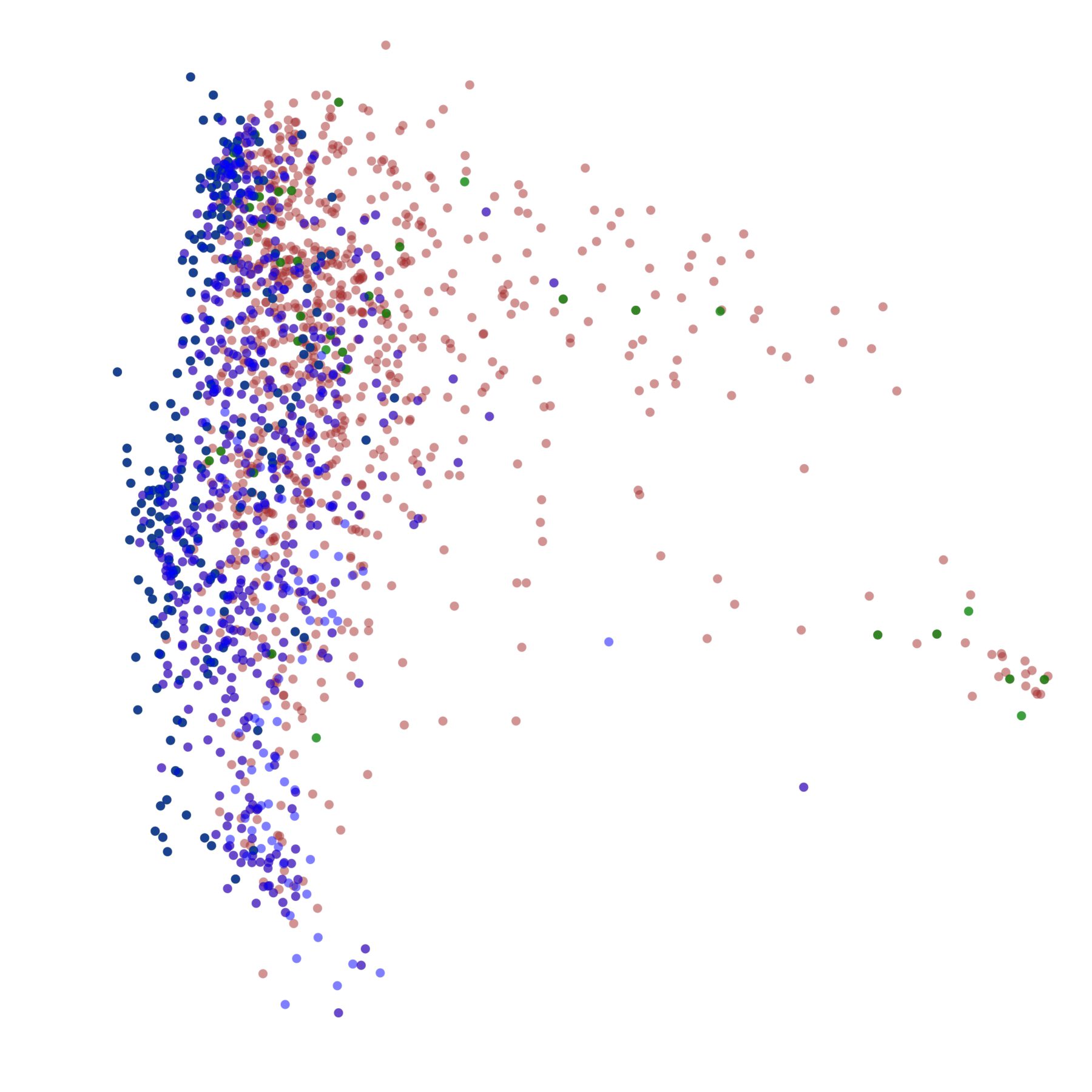}
        \caption{DAWN}
        \sublabel{fig:dawn}
    \end{subfigure}
    \hfill
    \begin{subfigure}[t]{.3\linewidth}
        \centering
        \includegraphics[width=\linewidth]{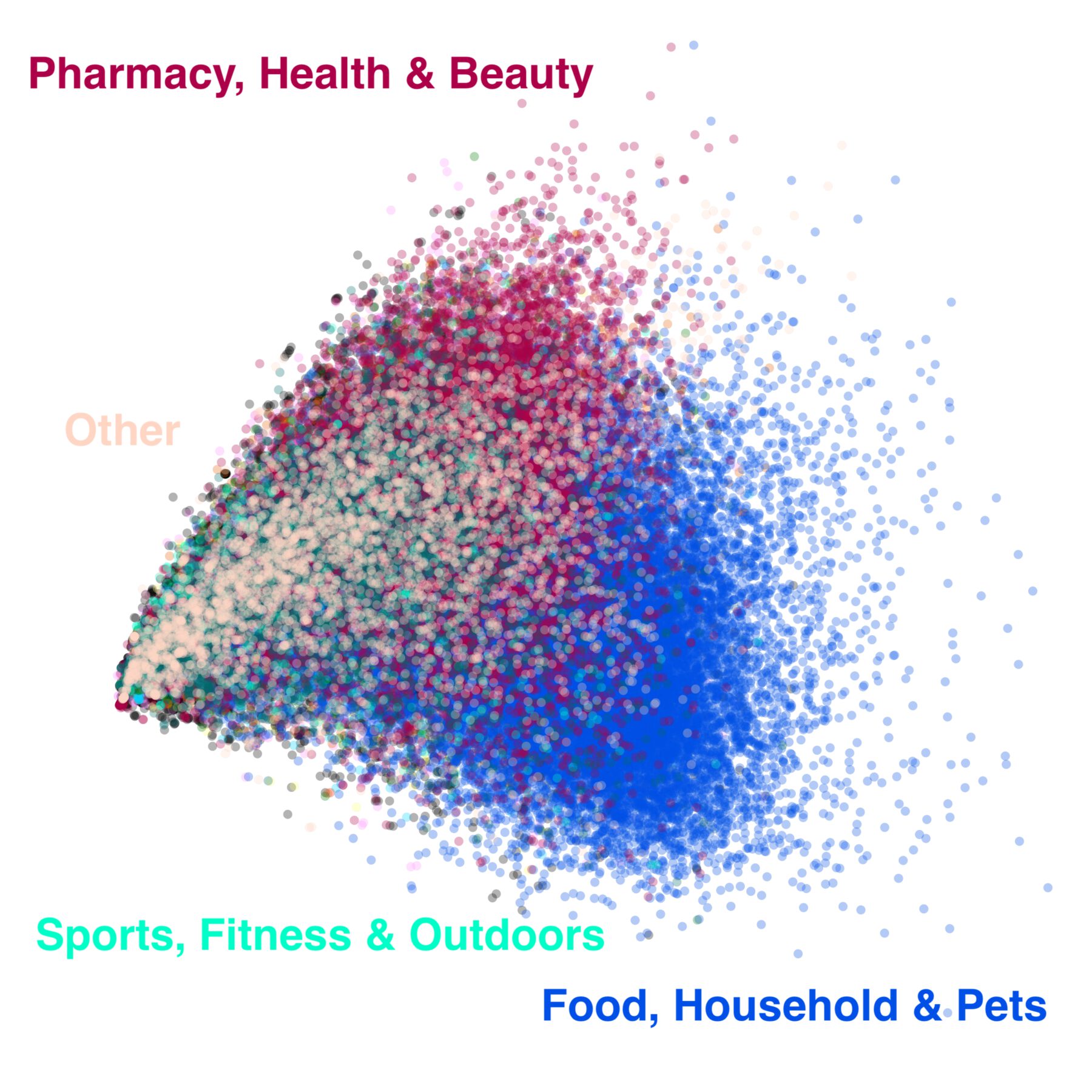}
        \caption{Walmart Trips}
        \sublabel{fig:walmart}
    \end{subfigure}
        \hfill
    \begin{subfigure}[t]{.3\linewidth}
        \centering
        \includegraphics[width=\linewidth]{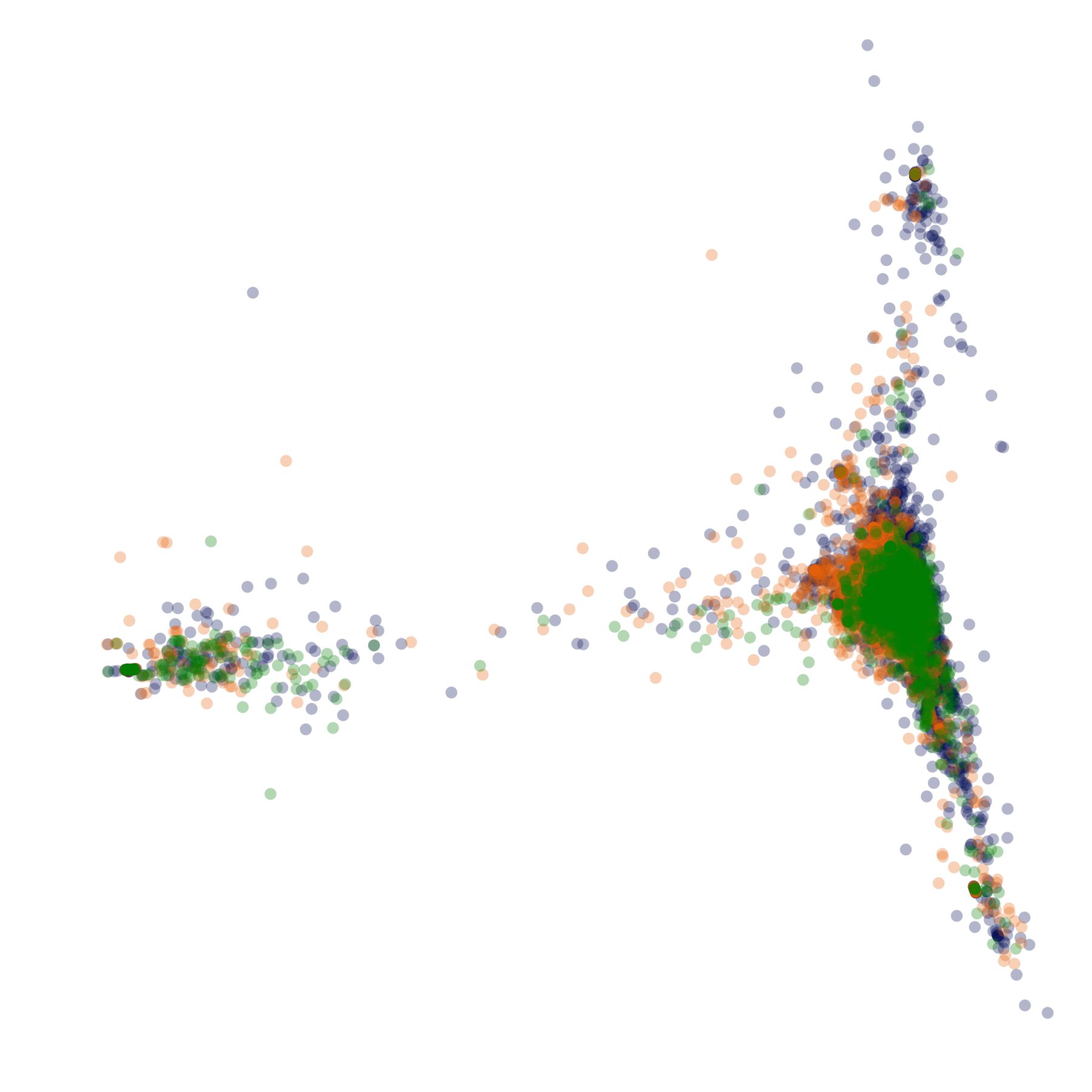}
        \caption{Math Overflow}
        \sublabel{fig:mathoverflow}
    \end{subfigure}
    \caption{Log-PageRank embeddings of the (a) DAWN dataset \cite{Amburg-2020-categorical} with 2109 nodes (drugs) and 
    87104 hyperedges where each hyperedge is an individual and the consisting nodes are 
    the drugs consumed by them; the plot shows 3 out 10 labels (b) the Walmart dataset \cite{Amburg-2020-categorical} with 88860 nodes and 69906 hyperedges where each node is a product and each
    hyperedge consists of products purchased in one trip to Walmart and (c) Math Overflow \cite{mathoverflow} dataset with 73851 nodes, 5446 hyperedges, where each hyperedge multiple labels associated with a question;
    Although there is structure evident in the plots, it does not strongly correlate with the known labels on the data (some of the plotting makes the structure look more present than it is).}
    \label{fig:hypergraph-bad}
\end{fullwidthfigure}

\section{Related work, Conclusion, and Future directions}

The key finding of this paper is the elementwise log of a matrix of seeded PageRank vector approximates the spectral embedding of the Laplacian in some scenarios.  This analysis can be transparently mapped to new scenarios such as hypergraphs given a PageRank-like primitive. This greatly simplifies the scenario compared with nonlinear spectral methods on hypergraphs~\cite{FKA21, FAK21, FH21, QFA17}. 

The idea of using the log of a PageRank vector originated in Google's initial use of these for their PageRank scores. That said, the elementwise log emerged in other scenarios as well. For example, \citet{mf1} detail a similar analysis between SkipGram \cite{sg}, a popular representation learning framework, and the SVD of the element wise log of a probability transition matrix developed from the data. Following that, multiple papers \cite{inf_walk, netmf} showed relationships between embedding techniques \cite{node2vec, pte, line, deepwalk} and asymptotic matrix expressions. 

We believe our framework offers a successful technique for structural embedding and opens up some nontrivial research problems.
    Our code to compute these embeddings for these examples is available: \url{https://github.com/dishashur/log-pagerank}.
For example, although we provide an approximate analysis as to why the log function and PageRank-like matrices converge to the spectral embedding and hence generated meaningful representation, there does not yet exist a quantifiable expression between the strength of this relation and the elements of the structure, such as its conductance, sparsity, degree distribution. 
We believe this work lays the foundation for a reliable structural representation and the generalizability of this technique offers ample ground for new results.

\newpage
\begin{fullwidth}

\end{fullwidth}

\end{document}